\newcommand{\whencolumns}[2]{
    #2
}
\newcommand{\whencolumns}[2]{
    #1
}
\newcommand{\onlydouble}[1]{\whencolumns{}{#1}}
\newcommand{\onlysingle}[1]{\whencolumns{#1}{}}
\newcommand{\dbtilde}[1]{\tilde{\raisebox{0pt}[0.85\height]{$\tilde{#1}$}}}
\newif\ifproof
\newcommand{\Arikan}{Ar\i{}kan\xspace}
\newcommand{\calA}{\mathcal{A}}
\newcommand{\calE}{\mathcal{E}}
\newcommand{\calH}{\mathcal{H}}
\newcommand{\calI}{\mathcal{I}}
\newcommand{\calS}{\mathcal{S}}
\newcommand{\calT}{\mathcal{T}}
\newcommand{\calV}{\mathcal{V}}
\newcommand{\calX}{\mathcal{X}}
\newcommand{\calY}{\mathcal{Y}}
\newcommand{\vecb}{\mathbf{b}}
\newcommand{\vecF}{\mathbf{F}}
\newcommand{\vecG}{\mathbf{G}}
\newcommand{\vecu}{\mathbf{u}}
\newcommand{\vecmu}{\boldsymbol{\mu}}
\newcommand{\vecU}{\mathbf{U}}
\newcommand{\vecV}{\mathbf{V}}
\newcommand{\vecx}{\mathbf{x}}
\newcommand{\vecX}{\mathbf{X}}
\newcommand{\vecy}{\mathbf{y}}
\newcommand{\vecY}{\mathbf{Y}}
\newcommand{\vecz}{\mathbf{z}}
\newcommand{\vecxbr}[1]{\vecx^{[#1]}}
\newcommand{\vecxbrbr}[1]{\vecx^{[#1]}}
\newcommand{\vecZ}{\mathbf{Z}}
\newcommand{\bigvecX}{\mathsf{X}}
\newcommand{\bigvecY}{\mathsf{Y}}
\newcommand{\cupdot}{\mathbin{\mathaccent\cdot\cup}}
\newcommand{\Pe}{P_\mathrm{e}}
\newcommand{\obX}{\underline{X}} 
\newcommand{\obY}{\underline{Y}}
\newcommand{\Ybl}{M} 
\newcommand{\GI}{\vecG_{\mathrm{I}}}
\newcommand{\GII}{\vecG_{\mathrm{II}}}
\newcommand{\Gstar}{\vecG_{\triangle}}
\newcommand{\ZI}{\vecZ_{\mathrm{I}}}
\newcommand{\ZII}{\vecZ_{\mathrm{II}}}
\newcommand{\Zstar}{\vecZ_{\triangle}}
\newcommand{\YI}{\vecY_{\mathrm{I}}}
\newcommand{\YII}{\vecY_{\mathrm{II}}}
\newcommand{\Ystar}{\vecY_{\triangle}}
\newcommand{\XI}{\vecX_{\mathrm{I}}}
\newcommand{\XII}{\vecX_{\mathrm{II}}}
\newcommand{\xI}{\vecx_{\mathrm{I}}}
\newcommand{\xII}{\vecx_{\mathrm{II}}}
\newcommand{\blockCount}{\Phi}
\newcommand{\concat}{\odot}
\newcommand{\lbl}{\ell}
\newcommand{\edgeProb}{w}
\newcommand{\initstate}{q}
\newcommand{\finalstate}{r}
\newcommand{\pimin}{\pi_{\mathrm{min}}}
\newcommand{\WDPT}{W^*}
\newcommand{\DPT}{*}
\newcommand{\mysize}[1]{\left|#1\right|}
\newcommand{\myset}[1]{\left\{#1\right\}}
\newtheorem{theo}{Theorem}
\newtheorem{subclaim}{Sub-claim}
\newtheorem{lemm}[theo]{Lemma}
\newtheorem{defi}{Definition}
\newcommand{\xor}{\oplus}
\newcommand{\floor}[1]{\lfloor #1 \rfloor}
\newcommand{\ceiling}[1]{\lceil #1 \rceil}
\newcommand{\layerVar}{\lambda}
\newcommand{\eqann}[2][=]{\overset{\mathclap{(\text{#2})}}{#1}} 
\newcommand{\eqannref}[1]{$(\text{#1})$}
\newcommand{\xdownarrow}[1]{%
      {\left\downarrow\vbox to #1{}\right.\kern-\nulldelimiterspace}
  }
\newcommand{\xuparrow}[1]{%
      {\left\uparrow\vbox to #1{}\right.\kern-\nulldelimiterspace}
  }
\begin{document}
\title{Polar Codes for the Deletion Channel: \\ Weak and Strong Polarization}
\author{Ido Tal, Henry D. Pfister, Arman Fazeli, and Alexander Vardy
    \thanks{This paper was presented in part in the at the International Symposium on Information Theory (ISIT'2019).}
\thanks{The work of A.~Fazeli and A.~Vardy was supported in part by the National Science Foundation (NSF) under Grants CCF-1405119 and CCF-1719139.}
\thanks{The work of I.~Tal and A.~Vardy was supported in part by the United-States Israel Binational Science Foundation (BSF) under Grant No.~2018218.}
\thanks{The work of H.~D.~Pfister was supported in part by the National Science Foundation (NSF) under Grant No.~1718494.}
\thanks{I.~Tal is with the Department of Electrical Engineering, Technion, Haifa 32000, Israel (email: idotal@ee.technion.ac.il).}
\thanks{H.~Pfister is with Duke University in Durham, NC, USA (email: henry.pfister@duke.edu).}
\thanks{A.~Fazeli is with the University of California in San Diego, CA, USA (email: afazelic@ucsd.edu). }
\thanks{A.~Vardy is with the University of California in San Diego, CA, USA (email: vardy@ece.ucsd.edu).}
}

\maketitle

\begin{abstract}
    This paper presents the first proof of polarization for the deletion channel with a constant deletion rate and a regular hidden-Markov input distribution.
    A key part of this work involves representing the deletion channel using a trellis and describing the plus and minus polar-decoding operations on that trellis. In particular, the plus and minus operations can be seen as combining adjacent trellis stages to yield a new trellis with half as many stages.
    Using this viewpoint, we prove a weak polarization theorem for standard polar codes on the deletion channel.
    To achieve strong polarization, we modify this scheme by adding guard bands of repeated zeros between various parts of the codeword. This gives a scheme whose rate approaches the mutual information and whose probability of error decays exponentially in the cube-root of the block length.
    We conclude by showing that this scheme can achieve capacity on the deletion channel by proving that the capacity of the deletion channel can be achieved by a sequence of regular hidden-Markov input distributions.
\end{abstract}

\section{Introduction}

In many communications systems, symbol-timing errors may result in insertion and deletion errors.
For example, a \emph{deletion channel} with constant deletion rate maps a length-$N$ input string to a substring using an i.i.d.\ process that deletes each input symbol with probability $\delta$.
These types of channels were first studied in the 1960s~\cite{Gallager_1961,Dobrushin_1967} and modern coding techniques were first applied to them in~\cite{Davey_2001}.
Over the past 15 years, numerical bounds on the capacity of the deletion channel have been significantly improved but a closed-form expression for the capacity remains elusive~\cite{Mitzenmacher_2009,Fertonani_2010,Mercier_2012,Iyengar_2011,Iyengar_2015,Rahmati_2015,Castiglione_2015,Cheraghchi_2019}.
Recently, polar codes were applied to the deletion channel in a series of papers, but the question of polarization for non-vanishing deletion rates remained open~\cite{Thomas_2017,Tian_2017,Tian_2018,Tian_it2018}.
In this work, we show that polar codes can be used to efficiently approach the mutual information rate between a regular (i.e., finite-state, irreducible, and aperiodic) hidden-Markov input process and the output of the deletion channel with constant deletion rate.

\looseness=-1
In~\cite{Thomas_2017}, a polar code is designed for the binary erasure channel (BEC) and evaluated on a BEC that also introduces a single deletion.
An inner cyclic-redundancy check (CRC) code is used and decoding is performed by running the successive cancellation list (SCL) decoder \cite{Tal_2015} exhaustively over all compatible erasure locations.
The results show one can recover a single deletion in this setting.
Extensions to a finite number of deletions are also discussed but the decoding complexity grows faster than $N^{d+1}$, where $N$ is the code length and $d$ is the number of deletions.

In \cite{Tian_2017}, a low-complexity decoder is proposed for the same setup.
Its complexity, for a length-$N$ polar code, is roughly $d^3 N \log N$ when $d$ deletions occur\footnote{In~\cite{Tian_2017}, this complexity is misstated as $O(d^2 N \log N)$.}.
The paper also presents simulation results for polar codes with lengths ranging from 256 to 2048 on two deletion channels.
The first channel has a fixed deletion rate of 0.002 and the second introduces exactly $4$ deletions.
Based on their results, the authors of \cite{Tian_2017} conjecture that polarization occurs when $N \to \infty$ while the total number of deletions, $d$, is fixed.

The final papers~\cite{Tian_2018,Tian_it2018} in this series extend the previous results by proving that weak polarization occurs when $N\to \infty$ and $d=o(N)$.
While this result is quite interesting, its proof does not extend to the case of constant deletion rate.
For the case where $N\to \infty$ with $d$ fixed, these papers also show strong polarization for the deletion channel and weak polarization for the cascade of the deletion channel and a discrete memoryless channel (DMC).

In this paper, we combine the well-known trellis representation for channels with synchronization errors~\cite{Davey_2001} with low-complexity successive-cancellation (SC) trellis decoding for channels with memory~\cite{Wang_2014,Wang_2015}. 
In particular, \cite{Davey_2001} describes how the joint input-output probability of the deletion channel (and other synchronization-error channels) can be represented using a trellis.
This is closely related to fast algorithms for the edit distance between strings based on dynamic programming~\cite{Wagner_1974}.
The main advantage of the trellis perspective is that it naturally generalizes to other channels with synchronization errors (e.g., with insertions, deletions, and errors). The papers~\cite{Wang_2014,Wang_2015} describe how the plus and minus polar-decoding operations can be efficiently applied to a channel whose input-output mapping is represented by a trellis.
Putting these ideas together defines a low-complexity SC decoder for polar codes on the deletion channel that is essentially equivalent to the decoder defined in~\cite{Tian_2017}.

Building on previous proofs of polarization for channels with memory \cite{SasogluTal:18a,Shuval_Tal_Memory_2017}, this paper proves weak and strong polarization for the deletion channel.
In order to prove strong polarization, guard bands of `$0$' symbols are embedded in the codewords of \Arikan's standard polar codes. Effectively, these guard bands allow the decoder to work on independent blocks and enable our proof of strong polarization.

The primary results of this research are summarized in Theorem~\ref{theo:main}.
Conceptually, it provides a polynomial-time method to achieve the mutual information rate between a fixed regular hidden-Markov input process and the binary deletion channel.

\begin{theo}\label{theo:main}
    Fix a regular hidden-Markov input process and a parameter $\nu \in (0,1/3]$. The rate of our coding scheme approaches the mutual information rate between the input process and the binary deletion channel output. The encoding and decoding complexities of our scheme are $O(\Lambda \log \Lambda)$ and $O(\Lambda^{1+3\nu})$, respectively, where $\Lambda$ is the blocklength. For any $0 < \nu' < \nu$ and sufficiently large blocklength $\Lambda$, the probability of decoding error is at most $2^{-\Lambda^{\nu'}}$.
\end{theo}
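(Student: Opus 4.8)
The plan is to assemble Theorem~\ref{theo:main} from the weak-polarization result for standard polar codes on the trellis-represented deletion channel together with the guard-band construction that upgrades weak polarization to strong polarization, and finally invoke the result that a sequence of regular hidden-Markov input distributions achieves capacity. The blocklength $\Lambda$ will be built from many independent ``inner'' polar blocks of length $N = 2^n$, separated by guard bands of repeated zeros whose length grows with $N$ but is asymptotically negligible as a fraction of $\Lambda$; because the deletion channel is memoryless over symbols, these guard bands decouple the received sequence (up to a small probability that a guard band is too corrupted to act as a reliable separator) into pieces that can each be decoded as an independent use of the inner polar code on the (memory-endowed) trellis channel. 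I expect the rate, complexity, and error-probability claims to be proved essentially separately, with the error-probability bound being the substantive one.

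\emph{Rate.} I would first fix the regular hidden-Markov input process and show that the fraction of coordinates in the inner code whose Bhattacharyya-type parameters (equivalently, trellis-channel conditional entropies) are near $0$ approaches the symmetric mutual information rate between one instance of the input process and the deletion-channel output; this is exactly the weak-polarization statement proved earlier via the trellis viewpoint. The guard bands cost a vanishing fraction of the blocklength, and the finite-state nature of the input process means the per-block mutual information converges to the mutual information rate, so the overall code rate $\to$ the claimed mutual information rate. Combining this with the capacity-achievability of regular hidden-Markov inputs (the last result of the paper) shows the scheme approaches capacity.

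\emph{Complexity.} Encoding is $\blockCount$ copies of \Arikan\ encoding plus inserting guard bands, giving $O(\blockCount \cdot N \log N) = O(\Lambda \log \Lambda)$. Decoding each inner block uses SC trellis decoding on the deletion-channel trellis; the trellis for a length-$N$ block under a constant deletion rate has $O(N)$ stages and $O(N)$ states per stage, so a single inner-block SC pass costs roughly $O(N^2 \log N)$ in the naive implementation, and over $\blockCount = \Theta(\Lambda/N)$ blocks this is $O(\Lambda N \log N)$; choosing $N = \Theta(\Lambda^{\nu})$ yields $O(\Lambda^{1+\nu}\log\Lambda) = O(\Lambda^{1+3\nu})$. (More precisely one should account for the fact that uncertainty in where each block begins in the received word forces the trellis width to grow; tracking this gives the $N^{1+c}$-type factor and hence the stated $1+3\nu$ exponent.)

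\emph{Error probability.} This is the main obstacle. The union bound over $\blockCount$ inner blocks gives total error at most $\blockCount$ times the per-block error, so I need each inner block to fail with probability at most $2^{-N^{\nu'/\nu}}$-ish after accounting for $\blockCount = \mathrm{poly}(\Lambda)$; the clean way is to show strong polarization for the inner trellis channel, i.e., that all but an exponentially small fraction of coordinates have Bhattacharyya parameter below $2^{-N^{\beta}}$ for some $\beta < 1/2$, with the ``exponentially small fraction'' itself being $2^{-N^{\beta'}}$. Strong polarization for channels with memory does not follow from the martingale convergence argument alone; one needs the sharp (Ar\i kan--Telatar-style, or the memory-channel analogues of~\cite{SasogluTal:18a,Shuval_Tal_Memory_2017}) rate-of-polarization estimate, and the key technical point is that the guard bands are what make the per-block channel behave like a genuine (stationary, fast-mixing) channel with memory to which those estimates apply — without the guard bands the blocks are entangled and the memory-channel machinery breaks. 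I would therefore: (i) show that, conditioned on the guard bands being ``clean enough'' to delimit the blocks, each block is an independent use of a fixed trellis channel with memory; (ii) bound the probability that some guard band fails to delimit (this decays like $2^{-\Theta(\text{guard length})}$, which we can make $\le 2^{-N^{\nu'}}$ by choosing guard length $\Theta(N^{\nu'})$ or by a more careful argument that the channel ``realigns'' quickly); (iii) apply the strong-polarization rate estimate for memory channels to get per-block error $\le 2^{-N^{\beta}}$ for suitable $\beta \in (\nu', 1/2)$; (iv) union-bound over blocks. Optimizing the exponents — guard-band length vs.\ block length $N$ vs.\ number of blocks — to land exactly on cube-root-of-$\Lambda$ decay ($\nu' < \nu \le 1/3$) is where the constants have to be pushed, but the structural content is the decoupling-by-guard-bands plus strong polarization for memory channels.
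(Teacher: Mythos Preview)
Your high-level architecture is close, but the error-probability argument has a genuine gap that the paper's construction is specifically designed to avoid. You propose to obtain strong polarization \emph{within} each inner block by invoking ``the strong-polarization rate estimate for memory channels'' from \cite{SasogluTal:18a,Shuval_Tal_Memory_2017}. Those results apply to FAIM channels; the deletion channel is not FAIM (the output length is random), and the paper states explicitly at the start of Section~\ref{sec:strong} that it has \emph{not} been able to prove strong polarization for the standard polar construction on the deletion channel. So step~(iii) of your plan appeals to a theorem that does not exist for this channel.

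The paper's mechanism is different and two-staged. The full \Arikan\ transform $\calA_n$ is applied to the entire length-$N=2^n$ vector $\vecX$, not separately to each block. The first $n_0=\lfloor \nu n\rfloor$ polarization steps act \emph{within} each length-$N_0=2^{n_0}$ block; here one only has the \emph{weak} polarization of Theorem~\ref{theo:slowPolarization} (for the trimmed deletion channel), which guarantees that a fraction $\approx\calI$ of indices $i_0$ satisfy~(\ref{eq:goodIndexIZK}) with some fixed small $\epsilon$. The guard bands are not merely there to let you union-bound over blocks: their role is to make the $\Phi=2^{n_1}$ triples $(\vecX(\phi),\vecV(\phi),\vecY^\DPT(\phi))$ genuinely i.i.d., so that the remaining $n_1=n-n_0$ polarization steps act on an i.i.d.\ product and the classical square/double recursions \cite[Proposition~5]{Arikan_2009} apply. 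Strong polarization then comes from these \emph{outer} memoryless steps via \cite[Lemma~40]{ShuvalTal:18a}, starting from the weakly polarized $\epsilon$-values, and yields Bhattacharyya and total-variation parameters below $2^{-2^{\beta n_1}}$. This is the content of Sub-claim~\ref{subclaim:calIInformationBits}, and it is why $\nu\le 1/3$ appears: one needs $n_1\ge 2n/3$ so that $2^{\beta n_1}$ dominates. Your proposal collapses this two-stage structure into a single ``strong polarization for memory channels'' step that the paper cannot supply.

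A secondary point: your decoding-complexity estimate of $O(N^2\log N)$ per inner block is too optimistic. Per Section~\ref{subsec:complexity}, SC trellis decoding of a length-$N_0$ block on the deletion channel costs $O(|\calS|^3 N_0^4)$, because the trellis has $\Theta(N_0)$ states per section and the plus/minus merges over depth $\lambda$ accumulate a $2^\lambda$ fan-out. The $\Lambda^{1+3\nu}$ exponent comes from $\Phi\cdot N_0^4 = 2^{n-n_0}\cdot 2^{4n_0}=2^{n+3n_0}\approx \Lambda^{1+3\nu}$, not from boundary uncertainty (which the guard bands eliminate).
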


The family of allowed input distributions is defined in Subsection~\ref{subsec:FAIM} and the structure of the codeword is defined in Section~\ref{sec:strong_setup}.
Its proof can be found in Section~\ref{sec:strong}.
While the theorem is stated for a fixed input process, we note that the encoding and decoding complexities scale cubically with the number of states in the input process.

Theorem~\ref{theo:cap} establishes a sequence of regular hidden-Markov input processes whose mutual information rates approach the deletion channel capacity.

\begin{theo}\label{theo:cap}
    Let $C$ be the capacity of the binary deletion channel with deletion probability $\delta$.
    For any $\epsilon>0$, there is a regular hidden-Markov input process whose mutual information rate on the binary deletion channel output is at least $C-\epsilon$.
\end{theo}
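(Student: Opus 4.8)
The plan is to combine Dobrushin's limiting formula for the capacity with the observation that replicating a long, near-optimal block distribution can be realized by a finite-state source. First, by Dobrushin's theorem~\cite{Dobrushin_1967} the capacity of the binary deletion channel admits the characterization $C=\lim_{n\to\infty}\tfrac1n\max_{p}I_p(X^n;Y^{(n)})$, where $Y^{(n)}$ denotes the channel output when the input $X^n$ is drawn from a distribution $p$ on $\{0,1\}^n$. So, given $\epsilon>0$, I would fix a block length $n$ large enough that simultaneously (i) some $p$ on $\{0,1\}^n$ satisfies $\tfrac1n I_p(X^n;Y^{(n)})\ge C-\epsilon/2$, and (ii) $\tfrac1n\log_2(n+1)\le\epsilon/4$; both hold for all sufficiently large $n$.

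Next I would turn $p$ into a finite-state source by emitting i.i.d.\ blocks $X^{(1)},X^{(2)},\dots$, each distributed as $p$. This is a hidden-Markov process with state set $\{(i,x^i):0\le i\le n-1\}$ — the position within the current block together with the prefix emitted so far — which from $(i,x^i)$ emits the next symbol according to the conditional of $p$ and moves to $(i+1,x^i x_{i+1})$, identifying index $n$ with $0$; restricted to its reachable states the chain is irreducible, and it is of the form required in Subsection~\ref{subsec:FAIM}. The point is that its mutual-information rate is almost $\tfrac1nI_p$. Write the length-$kn$ output as the ordered concatenation $Y^{(kn)}=Y^{(1)}\concat\cdots\concat Y^{(k)}$ of the sub-outputs of the $k$ blocks, and let $L_j=|Y^{(j)}|\in\{0,\dots,n\}$. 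Since deletions act independently per symbol and the blocks are independent, the pairs $(X^{(j)},Y^{(j)})$ are i.i.d., so $I(X^{kn};Y^{(1)},\dots,Y^{(k)})=\sum_j I(X^{(j)};Y^{(j)})=k\,I_p(X^n;Y^{(n)})$; and since $(Y^{(1)},\dots,Y^{(k)})$ is a bijective function of $(Y^{(kn)},L_1,\dots,L_k)$, we get $k\,I_p=I(X^{kn};Y^{(kn)},L_1^k)\le I(X^{kn};Y^{(kn)})+H(L_1^k)\le I(X^{kn};Y^{(kn)})+k\log_2(n+1)$. Dividing by $kn$,
\begin{equation*}
\tfrac{1}{kn}\, I\bigl(X^{kn};Y^{(kn)}\bigr)\ \ge\ \tfrac1n I_p(X^n;Y^{(n)}) - \tfrac1n\log_2(n+1) \ \ge\ C - \tfrac{3\epsilon}{4}
\end{equation*}
for every $k$, so taking the limit along this subsequence (which agrees with the mutual-information rate of the process up to an $o(1)$ correction) the rate of this source is at least $C-3\epsilon/4$.

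The remaining issue is that this chain has period $n$. I would repair aperiodicity with a vanishing amount of dithering: from the state $(0,\emptyset)$, with probability $\rho$ emit a $0$ and loop back to $(0,\emptyset)$, and with probability $1-\rho$ proceed as before. No states are added, the chain becomes irreducible \emph{and} aperiodic — hence regular — and it is still of the allowed form. The dithered source is the previous one with an i.i.d.\ geometric run of extra $0$'s prepended to each block; a routine estimate (give the decoder the locations of the dither runs as side information, at cost $O(1)$ bits per block, and note that the surviving dither symbols carry nonnegative information) shows that as $\rho\to0$ the density of dither symbols and the extra boundary uncertainty both tend to $0$, so the mutual-information rate drops by at most $\epsilon/4$. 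Choosing $\rho$ small enough yields a regular hidden-Markov input process with mutual-information rate at least $C-\epsilon$, which is Theorem~\ref{theo:cap}.

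The main obstacle is the estimate in the second step: one must argue that replicating a near-optimal length-$n$ block costs only $O\!\left(\tfrac{\log n}{n}\right)$ in rate, i.e., that the decoder's ignorance of each internal block boundary costs only $O(\log n)$ bits rather than a constant fraction of a block — this is precisely what the genie term $H(L_1^k)\le k\log_2(n+1)$ controls, and it is what forces the two constraints on $n$ in the first step to be imposed together. A secondary point requiring care is that the dithering of the third step does not quietly erase the gains of the first two steps and that the dithered chain literally meets the definition of a regular hidden-Markov input process.
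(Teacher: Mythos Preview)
Your proposal is correct and follows essentially the same route as the paper: take a near-optimal length-$n$ block distribution, replicate it i.i.d.\ as a hidden-Markov source, control the block-boundary loss by the $\log_2(n+1)/n$ side-information term, and break the period-$n$ periodicity with a single dither symbol at the block start. The only cosmetic differences are that the paper uses one extra state with a Bernoulli$(1/2)$ dither (rather than a geometric self-loop with $\rho\to 0$) and carries out the dithered information-rate bound explicitly via side-information random variables $(S_0,T_1^k,M_1^k,R_1^k)$, and it also makes the stationarity step explicit by starting the chain in its stationary distribution---a point you should state to match Definition~\ref{def:FAIM}.
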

\vspace{1mm}

Together, the two theorems imply that the first scheme can be used to achieve capacity on the binary deletion channel.
We should note, however, that we do not provide an efficient method to optimize the input distribution or to bound its complexity in terms of the gap to capacity.
Also, Theorem~\ref{theo:cap} is weaker than a recent result by Li and Tan which proves the capacity can be approached by a sequence of finite-order Markov input distributions that are both irreducible and aperiodic~\cite{Li_2019}.
Both results are both predated by an earlier proof of Dobrushin that shows a sequence of periodic finite-state Markov input distributions can approach capacity on the deletion channel~\cite{Dobrushin_1967}.

Here is an outline of the structure of this paper.
Section~\ref{sec:background} sets up the basic notation and definitions used in this paper.
Section~\ref{sec:TrellisBasics} defines the concept of a trellis and shows how it can be used to compactly represent various deletion patterns and their corresponding probabilities.
In Section~\ref{sec:TrellisPolarization}, we describe how plus and minus polarization operations are applied to trellises to yield new trellises.
This provides a more detailed description of the SC trellis decoding method introduced in~\cite{Wang_2014}.
It is our hope that all sections up to and including  Section~\ref{sec:TrellisPolarization} will be accessible to practitioners who are primarily interested in the implementation details.
Section~\ref{sec:informationRates} discusses information rates and
Section~\ref{sec:weak} proves that, in our setting, weak polarization occurs.
Section~\ref{sec:strong} focuses on strong polarization.
The practitioner is advised to read Section~\ref{sec:strong_setup} which defines the structure and operation of an encoder with guard bands.
The proof of the main theorem is presented in Section~\ref{sec:strong}.

\section{Background}\label{sec:background}
\subsection{Notation}
The natural numbers are denoted by $\mathbb{N} \triangleq \{1,2,\ldots \}$. 
We also define $[m] \triangleq  \{ 1,2,\ldots,m \}$ for $m \in \mathbb{N}$.
Let $\calX$ denote a finite set (e.g., the input alphabet of a channel). 
In this paper, we fix $\calX = \{0,1\}$ as the binary alphabet. Extensions to non-binary alphabets are straightforward, see for example \cite[Chapter 3]{sasoglu_thesis} and \cite[Appendix A]{Shuval_Tal_Memory_2017}.
Let $\vecx = (x_1,\ldots,x_N) \in \calX^N$ be a vector of length $N = 2^n$. 
We use $[statement]$ to denote the Iverson bracket which evaluates to $1$ if $statement$ is true and $0$ otherwise.
The concatenation of vectors $\vecy \in \calX^{N_1}$ and $\vecy' \in \calX^{N_2}$ lives in $\calX^{N_1+N_2}$ and is denoted by $\vecy \concat \vecy'$.
The length of a vector $\vecy$ is denoted by $|\vecy|$. Random variables will typically be denoted by uppercase letters.

In this paper, we use the standard Ar\i{}kan  transform  presented in the seminal paper \cite{Arikan_2009}. The Ar\i{}kan  transform of $\vecx \in \calX^N$, $N = 2^n$, is defined recursively using length-$N/2$ binary vectors, $\vecx^{[0]}$ and $\vecx^{[1]}$:
\begin{IEEEeqnarray}{rCl}
    \vecxbr{0} & \triangleq & (x_1 \xor x_2,x_3 \xor x_4,\ldots,x_{N-1} \xor x_{N}) \; , \label{eq:vecxZero} \\
\vecxbr{1} & \triangleq &  (\phantom{x_1 \xor{}} x_2,\phantom{x_3\xor{}} x_4,\ldots,\phantom{x_{N-1}\xor{}} x_{N}) \label{eq:vecxOne} \; , 
\end{IEEEeqnarray}
where $\xor$ denotes modulo-2 addition.
Then, for any sequence $b_1,b_2,\ldots,b_\layerVar\in \{0,1\}$ with $\layerVar \leq n$, we extend this notation to define the vector $\vecxbr{b_1,b_2,\ldots,b_\layerVar} \in \calX^{2^{n-\lambda}}$ recursively via
\begin{equation}
    \label{eq:recursiveTransformDefinition}
    \vecxbr{b_1,b_2,\ldots,b_{\layerVar}} = \left( \vecxbr{b_1,b_2,\ldots,b_{\layerVar-1}} \right)^{[b_\layerVar]} \; .
\end{equation}
Specifically, if $\layerVar=n$, then the vector $\vecx^{[b_1,b_2,\ldots,b_\layerVar]}$ is a scalar. This scalar is denoted by $u_{i(\vecb)}$, where $\vecb$ defines the index
\begin{equation}
    \label{eq:bitReversedI}
    i(\vecb) \triangleq 1+\sum_{j=1}^{n} b_j 2^{n-j} \; .
\end{equation}
The transformed length-$N$ vector is given by
\begin{equation}
    \label{eq:vecuIsvecxTransformed}
    \vecu= (u_1,\ldots,u_N) = \calA_n(\vecx) \; , 
\end{equation}
where  $\calA_n \colon \calX^{2^n} \to \calX^{2^n}$ is called the Ar\i{}kan transform of order $n$.
Its inverse is denoted $\calA^{-1}_n$ and satisfies $\calA^{-1}_n=\calA_n$.

Let $\vecb = (b_1,b_2,\ldots,b_n)$ and $\vecx \in \calX^N$ be given, where $N = 2^n$ and $i = i(\vecb)$. As before, let $\vecu = \calA(\vecx)$. Since the vector $u^{i-1} = (u_1,u_2,\ldots,u_{i-1})$ will play an important role later on, we introduce additional notation. First, note that the vectors $\vecb' \in \{0,1\}^n$ can be totally ordered according to $i(\vecb')$ which is equivalent to standard lexicographic ordering. Recalling the notation $\vecx^{[\vecb]}$, we now define the related notation $\vecx^{(\vecb)} = \vecx^{(b_1,b_2,\ldots,b_{n})}$. Namely, $\vecx^{(\vecb)}$ is the concatenation of $\vecx^{[\vecb']}$, over all vectors $\vecb'\in \{0,1\}^n$ satisfying $i(\vecb') < i(\vecb)$. For $i(\vecb') = i(\vecb)-1$, this gives
\begin{equation}
    \label{eq:xRoundBracket}
    \vecx^{(\vecb)} \triangleq 
    \vecx^{[0,0,\ldots,0]} \concat \vecx^{[0,0,\ldots,0,1]} \concat \vecx^{[0,0,\ldots,0,1,0]} \concat \cdots \concat \vecx^{[\vecb']} .
\end{equation}
If $\vecb$ is the all-zero vector, then $\vecx^{(\vecb)}$ is the null vector. From these definitions it follows that $\vecx^{(\vecb)} = u^{i-1}$, where $i = i(\vecb)$ and $\vecu = \calA(\vecx)$.

\subsection{Deletion Channel}

Let $W(\vecy | \vecx)$ denote the transition probability of $N$ uses of the deletion channel with constant deletion rate $\delta$.
The input is denoted by $\vecx \in \calX^N$ and the output $\vecy$ has a random length $M = |\vecy|$ supported on $\{0,1,\ldots,N\}$.
This channel is equivalent to a BEC with erasure probability $\delta$ followed by a device that removes all erasures from the output.
Thus, $W(\vecy | \vecx)$ equals the probability that $N - M$ deletions have occurred, which is $(1-\delta)^M \cdot \delta^{N-M}$, multiplied by the number of distinct deletion patterns that produce $\vecy$ from $\vecx$, see \cite[Section 2]{Mitzenmacher_2009}.

We will also consider a trimmed deletion channel whose output is given by removing all leading and trailing zeros from the output of the standard deletion channel.
See Section~\ref{sec:strong} for details.

\subsection{Trellis Definition}

An $N$-segment \emph{trellis} $\calT$ is a labeled weighted directed graph $(\calV,\calE)$. 
We assume that $\calV$ can be partitioned into $\calV_0,\ldots,\calV_N$ so that $\calV$ is the union of $N+1$ disjoint sets:
\[
    \calV = \calV_0 \cupdot \calV_1 \cupdot \cdots \cupdot \calV_{N-1} \cupdot \calV_N \; ,
\]
where $\cupdot$ denotes a disjoint union.
For channels with memory, $\calV_j$ represents the set of possible channel states after $j$ channel inputs.
Similarly, the edge set $\calE$ is arranged into a sequence of $N$ disjoint sets:
\[
    \calE = \calE_1 \cupdot \calE_2 \cupdot \cdots \cupdot \calE_{N-1} \cupdot \calE_N \; .
\]
An edge in $\calE_j$ connects a vertex in $\calV_{j-1}$ to a vertex in $\calV_j$.
We define $\sigma(e)$ and $\tau(e)$ to be the starting and terminating vertices of edge $e$. Thus, for $e = u \to v$, we have $\sigma(e) = u$ and $\tau(e) = v$. Then, 
\[
    \mbox{$e \in \calE_j$ implies $\sigma(e) \in \calV_{j-1}$ and $\tau(e) \in \calV_j$} \; .
\]
A \emph{trellis section} comprises two adjacent sets of vertices along with the edges that connect them. That is, for $1 \leq j \leq N$, section $j$ comprises vertex sets $\calV_{j-1}$ and $\calV_j$, as well as edge set $\calE_j$. See Fig.~\ref{fig:deletion_trellis} for an example of a trellis with $4$ sections.

Each edge $e\in \calE$ has a weight $w(e)\in [0,1]$ and a label $\ell (e)\in\calX$. We also assume that $\calV_0$ and $\calV_N$ have weight functions,
\[
    q:\calV_0 \to [0,1] \quad \mbox{and} \quad r : \calV_N \to [0,1] \; ,
\]
that are associated with the initial and final states. 

A path through a trellis is a sequence of $N$ edges, $e_1,e_2,\ldots,e_N$, which starts at a vertex in $\calV_0$ and ends at a vertex in $\calV_N$. Namely, $\sigma(e_1) \in \calV_0$, $\tau(e_N) \in \calV_N$, and for each $1 \leq j \leq N-1$, we have $\tau(e_j) = \sigma(e_{j+1})$.
The weight of a path through the trellis is defined as the product of the weights on each edge in the path times the weights of the initial and final vertices. Namely, the weight of the above path is
\[
    q(\sigma(e_1)) \cdot r(\tau(e_N)) \times \prod_{j=1}^N w(e_j) \; .
\]
Thus, an $N$-section trellis naturally defines a \emph{path-sum} function $T\colon \calX^N \to \mathbb{R}$, where $T(\vecx)$ equals the sum of the path weights over all paths whose length-$N$ label sequences match $\vecx$. That is,
\begin{multlinecc}
    \label{eq:T}
    T(\vecx) \triangleq \sum_{\substack{e_1 \in \calE_1, \\ \ell(e_1) = x_1}} 
    \; \sum_{\substack{e_2 \in \calE_2, \\ \ell(e_2) = x_2}} \cdots \sum_{\substack{e_N \in \calE_N, \\ \ell(e_N) = x_N }} \initstate(\sigma(e_1)) \; \finalstate(\tau(e_N)) \\
                                                                    \times \prod_{j=1}^N w(e_j) \times \prod_{j=1}^{N-1} [\tau(e_j)=\sigma(e_{j+1})] \; .
\end{multlinecc}

\subsection{FAIM processes}
\label{subsec:FAIM}

In latter parts of this paper, for simplicity, we will often introduce key ideas by first framing them in the context of the uniform input distribution. That is, by first considering the case in which the input distribution is i.i.d.\ Bernoulli $1/2$. However, the uniform input distribution, or indeed any i.i.d.\ input distribution, is known to generally be sub-optimal with respect to the information rate between input and output, when transmitting over a deletion channel~\cite{Mitzenmacher_2009,Rahmati_2015,Castiglione_2015,Cheraghchi_2019}. Thus, we stand to benefit by considering a larger class of input distributions. 

To this end, let $\calS$ be a given finite set. Each element of $\calS$ is a state of an input process. In the following\footnote{The definition of FAIM and FAIM-derived processes here is a specialization of the definition given in \cite{Shuval_Tal_Memory_2017}. Here, we are interested in FAIM-derived (i.e., hidden-Markov) input processes.  However, the input-output process of a deletion channel is neither FAIM nor FAIM-derived.} definition,  we have for all $j \in \mathbb{Z}$ that $S_j \in \calS$ and $X_j \in \calX$.

\begin{defi}[FAIM process] \label{def:FAIM}
    A strictly stationary process $(S_j,X_j)$, $j \in \mathbb{Z}$ is called a  \emph{finite-state, aperiodic, irreducible, Markov} (FAIM) process if, for all $j$,
\begin{equation} \label{eq_markov property of FAIM}
    P_{S_j, X_j| S_{-\infty}^{j-1}, X_{-\infty}^{j-1}} = P_{S_j, X_j | S_{j-1}} \; ,
\end{equation} 
is independent of $j$ and the sequence $(S_j), j \in \mathbb{Z}$ is a finite-state Markov chain that is stationary, irreducible, and aperiodic. 
\end{defi}

For a FAIM process, consider the sequence $X_j$, for $j \in \mathbb{Z}$.
In principle, the distribution of this sequence can be computed by marginalizing the states of the FAIM process $(S_j,X_j)$.
Such a sequence is typically called a \emph{hidden-Markov process}.
In this paper, we sometimes add the term \emph{regular} to emphasize that the hidden state process is a regular Markov chain.

Let us now connect the concept of a FAIM process to that of a trellis. Let a FAIM process $(X_j,S_j)$ be given, and fix $N \geq 1$. We now define the corresponding trellis, having $N$ stages. The vertex set is $\calV = \calV_0 \cupdot \calV_1 \cupdot \cdots \cupdot \calV_N$, where we define
\[
    \calV_j = \{s_j : s \in \calS \}
\]
for $0 \leq j \leq N$ so that each $\calV_j$ contains a distinct copy of $\calS$. 
For each $x \in \calX$, $1 \leq j \leq N$, $\alpha_{j-1} \in \calV_{j-1}$, and $\beta_{j} \in \calV_j$, define an edge $e$ from $\alpha_{j-1}$ to $\beta_{j}$ with label $\ell(e) = x$ and weight $w(e) = P_{S_j,X_j|S_{j-1}}(\beta,x|\alpha)$. Lastly, for all $\alpha_{0} \in \calV_0$ define $q(\alpha_{0}) = \pi(\alpha)$, where $\pi(\alpha)$ is the stationary probability of state $\alpha$ in the Markov process $(S_j)_{j \in \mathbb{R}}$, and define $r(\beta_{N}) = 1$ for all $\beta_{N} \in \calV_N$. It follows that the probability of $(X_1,X_2,\ldots,X_N) = (x_1,x_2,\ldots,x_N) = \vecx$ equals $T(\vecx)$, where $T$ was defined in (\ref{eq:T}).

\section{Trellis representation of joint probability}\label{sec:TrellisBasics}
We have just seen that a trellis is instrumental in compactly representing a hidden-Markov input distribution. In fact, it is much more versatile than this. Namely, we will now show how a trellis can be used to represent the \emph{joint} distribution of a hidden-Markov input process and the channel output.

\subsection{Trellis for uniform input}

This trellis representation for the deletion channel can also be found in~\cite{Davey_2001}.

As previously explained, it is generally beneficial to use an input distribution with memory. However, for the sake of an easy exposition, we will first consider the simplest possible input distribution, a uniform input distribution (i.e., i.i.d.\ and Bernoulli $1/2$).

The trellis representation will be used on the decoder side. Thus, when building the trellis, we will have already received the output vector $\vecy$. Hence, the primary role of the trellis is to evaluate the probabilities associated with possible input vectors $\vecx$, of length $N$. That is, the trellis will be used to calculate the joint probability of $\vecx$ and $\vecy$, denoted $P_\vecX (\vecx) \cdot W(\vecy|\vecx)$, for $\vecy$ fixed. Recall that $W(\vecy|\vecx)$ is the deletion channel law, and in this subsection $P_\vecX$ is the uniform input distribution.

We will shortly define the concept of a valid path in the trellis. Each valid path will correspond to a specific transmitted $\vecx$ and a specific deletion pattern that is compatible with the received $\vecy$ (see Fig.~\ref{fig:deletion_trellis}). We term this trellis the \emph{base trellis}, as we will ultimately construct other trellises derived from it.

\begin{figure}
    \begin{center}
        \begin{tikzpicture}[scale=1.25,dot/.style={draw,circle,minimum size=1mm,inner sep=0pt,outer sep=0pt,fill=black}, >=latex]
        
          \pgfmathsetmacro{\xlim}{5}
          \pgfmathsetmacro{\ylim}{4}
          \pgfmathsetmacro{\xxlim}{\xlim-1}
          \pgfmathsetmacro{\yylim}{\ylim-1}
          \newcommand{\zz}{0}
          \newcommand{\recv}{011}
          
          \node[draw,circle,minimum size=2.5mm,inner sep=0pt,outer sep=0pt] (a) at (1,\ylim) {};
          \node[draw,circle,minimum size=2.5mm,inner sep=0pt,outer sep=0pt] (a) at (\xlim,1) {};
            
          \foreach \i in {0,...,\yylim}
            \node (i\i) at (0.65,\ylim-\i) {\scriptsize$i\!=\!\i$};
          \foreach \i in {1,...,\yylim}
            \node (y\i) at (0.25,\ylim-\i+0.5) {$y_\i \!=\! \StrChar{\recv}{\i}$};
        
          \foreach \j in {0,...,\xxlim}
            \node (j\j) at (\j+1,\ylim+0.30) {\scriptsize$j\!=\!\j$};    
          \foreach \j in {1,...,\xxlim}
            \node (x\j) at (\j+0.5,\ylim+.7) {$x_\j$};
        
          \node (cx) at (0.5,\ylim+0.7) {$x_j$};
          \node (cy) at (0,\ylim+0.25) {$y_i$};
          \draw (0,\ylim+0.7) -- (0.5, \ylim+0.25); 
        
          \foreach \i in {0,...,\yylim}
            \foreach \j in {0,...,\xxlim}
        	  \node [dot] (\j-\i) at (\j+1,\ylim-\i) {};
        
          
          \pgfmathsetmacro{\xxxlim}{\xlim-2}
          \pgfmathsetmacro{\yyylim}{\ylim-2}
          \foreach \i in {0,...,\yylim}
            \foreach \j in {0,...,\xxxlim}
        	{
              \pgfmathtruncatemacro{\ii}{\i+1}
              \pgfmathtruncatemacro{\jj}{\j+1}
        	  \ifnum \i > \j
        	    \draw[->,blue,dashed,thick] (\j-\i) to [in=150,out=30] (\jj-\i);
        	    \draw[->,red,dashed,thick] (\j-\i) to [in=210,out=330] (\jj-\i);        	    
        	  \else
        	    \ifnum \j > \i+\xlim-\ylim
        	      \draw[->,blue,dashed,thick] (\j-\i) to [in=150,out=30] (\jj-\i);
        	      \draw[->,red,dashed,thick] (\j-\i) to [in=210,out=330] (\jj-\i);
        	    \else        	      
        	      \draw[->,blue,thick] (\j-\i) to [in=150,out=30] (\jj-\i);
        	      \draw[->,red,thick] (\j-\i) to [in=210,out=330] (\jj-\i);
        	    \fi
        	  \fi
        	}
        	
          \pgfmathsetmacro{\xxxlim}{\xlim-2}
          \pgfmathsetmacro{\yyylim}{\ylim-2}
          \foreach \i in {0,...,\yyylim}
            \foreach \j in {0,...,\xxxlim}
        	{
              \pgfmathtruncatemacro{\ii}{\i+1}
              \pgfmathtruncatemacro{\jj}{\j+1}
              \pgfmathtruncatemacro{\iit}{\i+\xlim-\ylim}
        	  \ifnum \i > \j
                \StrChar{\recv}{\ii}[\temp]
        		\ifthenelse{\equal{\temp}{\zz}}
        		  {\draw[->,thick,dashed,blue] (\j-\i) -- (\jj-\ii);}
        		  {\draw[->,thick,dashed,red] (\j-\i) -- (\jj-\ii);}
        	  \else
        	    \ifnum \j > \iit
                  \StrChar{\recv}{\ii}[\temp]
        		  \ifthenelse{\equal{\temp}{\zz}}
        		    {\draw[->,thick,dashed,blue] (\j-\i) -- (\jj-\ii);}
        		    {\draw[->,thick,dashed,red] (\j-\i) -- (\jj-\ii);}
        	    \else
                  \StrChar{\recv}{\ii}[\temp]
        		  \ifthenelse{\equal{\temp}{\zz}}
        		    {\draw[->,thick,blue] (\j-\i) -- (\jj-\ii);}
        		    {\draw[->,thick,red] (\j-\i) -- (\jj-\ii);}
        		\fi
        	  \fi
        	}        	
        \end{tikzpicture}
\end{center}
\vspace{-2.5mm}
\caption{A trellis for the binary deletion channel with uniform input, a codeword length of $N=4$, and a received word $\vecy=(011)$ of length $M=3$. Vertices are denoted $v_{i,j}$ with $0 \leq i \leq M$ and $0 \leq j \leq N$. All blue edges have label `$0$' while all red edges have label `$1$'. The horizontal edges are weighted by the probability $\delta/2$. Diagonal edges are weighted by the probability $(1-\delta)/2$. The two circled vertices have $q(v_{0,0}) = r(v_{M,N}) = 1$, while all other vertices in $\calV_0$ and $\calV_N$ have $q$ and $r$ values equal to $0$, respectively.  Edges that can be pruned without changing the function $T$ in (\ref{eq:T}) are dashed. \label{fig:deletion_trellis}}
\end{figure}

Recalling our notation, we have $\vecx$ as the unknown input vector, of known length $N$. The vector $\vecy$ is the known output, having known length $M = |\vecy|$. The deletion probability is $\delta$. The base trellis is defined as follows.

\begin{defi}[Base Trellis for Uniform Input]\label{defi:symmetricTrellis} For $N$, $\delta$, $M$, and $\vecy \in \calX^M$:
\begin{enumerate}
    \item The vertex set $\calV$ equals the disjoint union
        \[
            \calV = \calV_0 \cupdot \calV_1 \cupdot \cdots \cupdot \calV_N \; ,
        \]
        where, for $0 \leq j \leq N$,
        \begin{equation}
            \label{eq:VjSymmetric}
            \calV_j = \{ v_{i,j} : 0 \leq i \leq M\} \; . 
        \end{equation}

    \item \label{it:vijMeaning_uniform} A path passing through vertex $v_{i,j}$ corresponds to the event where only $i$ of the first $j$ transmitted symbols were received. That is, from $x_1,x_2,\ldots,x_j$, the channel has deleted $j-i$ symbols\footnote{Note that we could have optimized our definition of $\calV_j$. Namely, only $i$ in the range $\max\{0,M-N+j\} \leq i \leq \min\{j,M\}$ are actually consistent with the described event (i.e., only the solid edges in Figure~\ref{fig:deletion_trellis}).  We leave such optimization to the practitioner and settle for the simpler description in (\ref{eq:VjSymmetric}).}.

    \item Vertices $v_{i,j}$ with $0 \leq i \leq M$ and $0 \leq j < N$ each have up to three outgoing edges: two `horizontal' edges, each corresponding to a deletion, and one `diagonal' edge, corresponding to a non-deletion.

    \item For $0 \leq i \leq M$ and $0 \leq j <  N$, there are two edges $e,e'$ from $v_{i,j}$ to $v_{i,j+1}$. From \ref{it:vijMeaning_uniform}) above, we deduce that these two `horizontal' edges are associated with $x_{j+1}$ being deleted by the channel. The first is associated with $x_{j+1} = 0$ and has $\lbl(e) = 0$, while the second is associated with $x_{j+1} = 1$ and has $\lbl(e') = 1$. Since the probability of deletion is $\delta$, and in the uniform distribution $x_{j+1} = 0$ and $x_{j+1} = 1$ each occur with probability $1/2$, we set $\edgeProb(e) = \edgeProb(e') = \delta/2$.

    \item For $0 \leq i < M$ and $0 \leq j < N$, there is a single edge $e$ from $v_{i,j}$ to $v_{i+1,j+1}$.  Recalling \ref{it:vijMeaning_uniform}) above, we deduce that this `diagonal' edge represents $x_{j+1}$ not being deleted, and being observed as $y_{i+1}$.
        Thus, $\lbl(e) = y_{i+1}$. Since the probability of sending $x_{j+1}$ in the uniform case is $1/2$, regardless of its value, and the probability of a non-deletion is $1-\delta$, we set $\edgeProb(e) = (1-\delta)/2$.
    \item We set $q(v_{0,0}) = 1$. All other vertices $v \in \calV_0$ have $q(v) = 0$. Thus, with respect to (\ref{eq:T}), we effectively force all paths to start at $v_{0,0}$. Namely, when starting a path, no symbols have yet been transmitted, and hence no symbols have yet been received.
    \item We set $r(v_{M,N}) = 1$. All other vertices $v \in \calV_N$ have $r(v) = 0$. Thus, with respect to (\ref{eq:T}), we effectively force all paths to end at $v_{M,N}$. That is, at the end of a path, $N$ symbols have been transmitted, and of these, $M$ have been received. 

\end{enumerate}
\end{defi}

In line with the definitions above, let us call a path \emph{valid} if it starts at $v_{0,0}$ and ends at $v_{M,N}$. For example, in Figure~\ref{fig:deletion_trellis}, valid paths are those that start at the circled vertex on the top left, end at the circled vertex on the bottom right, and hence contain only solid edges. Clearly, such a path is comprised of $N$ edges, $e_1,e_2,\ldots,e_N$. Denote by $\vecx = (x_1,x_2,\ldots,x_N)$ the input vector corresponding to the above path, where $x_i = \ell(e_i)$. Each such $\vecx$ is consistent with our received $\vecy$. Indeed, tracing the path, the type of the corresponding edge (horizontal or diagonal) shows exactly which of the $x_i$ to delete and which to keep in order to arrive at $\vecy$.  Also, the probability of the input sequence $\vecx$ being transmitted and experiencing the above chain of deletion/no-deletion events is exactly equal to the product of the $w(e_i)$, times $q(v_{0,0}) \cdot r(v_{M,N}) = 1$.

From the above discussion, one has the following key lemma.

\begin{lemm}
    \label{lemm:baseTrellisProb}
    Let $\calT$ be a trellis as described in Definition~\ref{defi:symmetricTrellis}.  Then, for $\vecx \in \calX^N$ and $T(\vecx)$ as defined in (\ref{eq:T}), we have
    \[
        T(\vecx) = P_{\vecX}(\vecx) \cdot W(\vecy | \vecx) \; ,
    \]
    where $P_\vecX$ is the uniform input distribution and $W$ is the deletion channel law.
\end{lemm}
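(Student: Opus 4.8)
The plan is to unpack the definition of $T(\vecx)$ in \eqref{eq:T} for this specific trellis and match every ingredient to a term in the product $P_{\vecX}(\vecx) \cdot W(\vecy \mid \vecx)$. Fix $\vecx \in \calX^N$. Because $q$ is supported only on $v_{0,0}$ and $r$ only on $v_{M,N}$, every summand in \eqref{eq:T} that contributes is a path $e_1, \dots, e_N$ with $\sigma(e_1) = v_{0,0}$, $\tau(e_N) = v_{M,N}$, and $\ell(e_j) = x_j$ for all $j$; for such a path the initial/final weights contribute a factor $q(v_{0,0})\, r(v_{M,N}) = 1$, and the consistency indicators $[\tau(e_j) = \sigma(e_{j+1})]$ all equal $1$. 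So $T(\vecx)$ reduces to a sum, over valid label-$\vecx$ paths, of $\prod_{j=1}^N w(e_j)$.

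The next step is to set up the bijection between valid label-$\vecx$ paths and deletion patterns producing $\vecy$ from $\vecx$. A valid path visits vertices $v_{i_0, 0}, v_{i_1, 1}, \dots, v_{i_N, N}$ with $i_0 = 0$, $i_N = M$, and each step either horizontal ($i_{j+1} = i_j$, a deletion of $x_{j+1}$) or diagonal ($i_{j+1} = i_j + 1$, keeping $x_{j+1}$ as $y_{i_{j+1}}$). I would argue that such a path exists and has nonzero weight exactly when the subsequence of $\vecx$ indexed by the diagonal steps equals $\vecy$: the diagonal-step constraint $\ell(e) = y_{i+1}$ forces $x_{j+1} = y_{i_j+1}$ on every kept symbol, so the set of valid label-$\vecx$ paths is in one-to-one correspondence with the set of deletion patterns (equivalently, index subsets of size $M$) that carve $\vecy$ out of $\vecx$. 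Along each such path there are $M$ diagonal edges, each of weight $(1-\delta)/2$, and $N - M$ horizontal edges, each of weight $\delta/2$, so $\prod_{j=1}^N w(e_j) = \bigl((1-\delta)/2\bigr)^M \bigl(\delta/2\bigr)^{N-M} = 2^{-N} (1-\delta)^M \delta^{N-M}$, which is independent of the particular path.

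Summing this constant weight over all valid label-$\vecx$ paths then gives $T(\vecx) = 2^{-N} \cdot (1-\delta)^M \delta^{N-M} \cdot \bigl(\text{number of deletion patterns taking } \vecx \text{ to } \vecy\bigr)$. Finally I would identify $2^{-N} = P_{\vecX}(\vecx)$ for the uniform distribution on $\calX^N$, and recall from the description of the deletion channel in Subsection on the Deletion Channel that $W(\vecy \mid \vecx)$ is precisely $(1-\delta)^M \delta^{N-M}$ times the number of distinct deletion patterns producing $\vecy$ from $\vecx$; multiplying these two identifications yields $T(\vecx) = P_{\vecX}(\vecx) \cdot W(\vecy \mid \vecx)$, as claimed.

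The proof is essentially bookkeeping, so there is no deep obstacle; the one place that needs care is the bijection step — specifically verifying that the label constraints on diagonal edges (and the fact that horizontal edges come in both labels with the same weight $\delta/2$) make valid label-$\vecx$ paths correspond to deletion patterns without overcounting or undercounting, and that the counting of diagonal versus horizontal edges ($M$ and $N-M$) holds for every such path regardless of where the deletions fall. This is exactly the "tracing the path" observation already made informally in the paragraph preceding the lemma, so the formal proof just needs to state it cleanly.
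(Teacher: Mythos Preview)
Your proposal is correct and follows exactly the same approach as the paper's proof, which simply observes that the weight of a trellis path equals the joint probability of $(\vecx,\vecy)$ and the deletion pattern, and that $T(\vecx)$ sums these weights over all deletion patterns consistent with the pair. You have merely spelled out in detail the bookkeeping that the paper compresses into two sentences.
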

\begin{IEEEproof}
First, we observe that the weight of a trellis path equals the joint probability of $(\vecx,\vecy)$ and the deletion pattern.
Then, the claim follows from the fact that $T(\vecx)$ sums the path weight over all paths through the trellis (i.e., all deletion patterns) consistent with the given $(\vecx,\vecy)$ pair.
\end{IEEEproof}

\subsection{Trellises for hidden-Markov inputs}
As explained earlier, a trellis is used on the decoding side, in order to capture the joint probability of $\vecx$ and $\vecy$. We now show how such a trellis is built for the more general case in which $\vecx$ is drawn from a regular hidden-Markov input process. Intuitively, this is done by simply ``multiplying'' the trellis corresponding to the input distribution, as described at the end of Section~\ref{sec:background}, with the trellis defined for the uniform case (with the correction that the edge weights $\delta/2$ and $(1-\delta)/2$ are replaced by $\delta$ and $1-\delta$, respectively). A formal definition follows.

\begin{defi}[Base Trellis for Hidden-Markov Input]\label{defi:FAIMTrellis} For $N$, $\delta$, $M$, $\calS$, $P_{S_j,X_j|S_{j-1}}$, $\pi$, and $\vecy \in \calX^M$:
\begin{enumerate}
    \item The vertex set $\calV$ equals the disjoint union
        \[
            \calV = \calV_0 \cupdot \calV_1 \cupdot \cdots \cupdot \calV_N \; ,
        \]
        where, for $0 \leq j \leq N$,
        \begin{equation}
            \label{eq:sijFAIM}
            \calV_j = \{ s_{i,j} : 0 \leq i \leq M \; , \; s \in \calS \} \; . 
        \end{equation}
        Thus, $|\calV_j| = (M+1)\cdot |\calS|$.

    \item \label{it:vijMeaning} A path passes through vertex $s_{i,j}$ if exactly $i$ of the first $j$ transmitted symbols are not deleted and the state of the input process is $s\in \mathcal{S}$ after the $j$-th input (i.e., $S_j = s$).

    \item Vertices $s_{i,j}$ with $0 \leq i \leq M$, $0 \leq j < N$, and $s \in \calS$ each have up to $3 \cdot |\calS|$ outgoing edges.

    \item For $0 \leq i \leq M$, $0 \leq j <  N$, and $\alpha, \beta \in \calS$, there are two edges $e,e'$ from $\alpha_{i,j}$ to $\beta_{i,j+1}$. From item \ref{it:vijMeaning}, we deduce that these two `horizontal' edges are associated with $x_{j+1}$ being deleted by the channel. The first is associated with $x_{j+1} = 0$ and has $\lbl(e) = 0$, while the second is associated with $x_{j+1} = 1$ and has $\lbl(e') = 1$. Recalling that by stationarity $P_{S_{j+1},X_{j+1}|S_j} =  P_{S_j,X_j|S_{j-1}}$, we set
        \begin{equation} \label{eq:horizontalZero}
            \edgeProb(e) = \delta \cdot P_{S_j,X_j|S_{j-1}}(\beta,0|\alpha)
        \end{equation}
        and
        \begin{equation} \label{eq:horizontalOne}
            \edgeProb(e') = \delta \cdot P_{S_j,X_j|S_{j-1}}(\beta,1|\alpha) \; .
        \end{equation}
        That is, the probability of a deletion, times the probability implied by the underlying FAIM distribution.

    \item For $0 \leq i < M$, $0 \leq j < N$, and $\alpha,\beta \in \calS$, there is a single edge $e$ from $\alpha_{i,j}$ to $\alpha_{i+1,j+1}$.  Recalling item \ref{it:vijMeaning} above, we deduce that this `diagonal' edge represents $x_{j+1}$ being observed (i.e., not deleted) as $y_{i+1}$.
        Thus, $\lbl(e) = y_{i+1}$. We set
        \[
            \edgeProb(e) = (1-\delta) \cdot P_{S_j,X_j|S_{j-1}}(\beta,y_{i+1}|\alpha) \; .
        \]
        That is, the probability of a non-deletion, times the probability implied by the underlying FAIM distribution\footnote{As in the uniform case, we have opted for simplicity of exposition over reduced algorithmic complexity. That is, as in the uniform case, we can take the index $i$ in (\ref{eq:sijFAIM})  to have range $\max\{0,M-N+j\} \leq i \leq \min\{j,M\}$. Also, edges $e$ with probability $w(e)=0$ can be removed from the trellis.}.         
    \item For all $s_{0,0} \in \calV_0$, where $s \in \calS$, we set $q(s_{0,0}) = \pi(s)$. All other vertices $v \in \calV_0$ have $q(v)=0$. Thus, with respect to (\ref{eq:T}), we effectively force all paths to start at a vertex $s_{0,0}$, where $s \in \calS$. Namely, when starting a path, no symbols have yet been transmitted, and hence no symbols have yet been received. Moreover, the probability of starting the path at $s_{0,0}$ is $\pi(s)$, the stationary probability of $s$ in the FAIM input process.
    \item For all $s_{M,N} \in \calV_N$, we set $r(s_{M,N}) = 1$. All other vertices $v \in \calV_N$ have $r(v) = 0$. Thus, with respect to (\ref{eq:T}), we effectively force all paths to end at a vertex $s_{M,N}$. That is, at the end of a path, $N$ symbols have been transmitted, and of these, $M$ have been received. 
\end{enumerate}
\end{defi}

As in the uniform case, we have the following lemma, which is easily proved.
\begin{lemm}
    \label{lemm:baseFAIMTrellisProb}
    Let $\calT$ be a trellis as per Definition~\ref{defi:FAIMTrellis}.  Then, for $\vecx \in \calX^N$ and $T(\vecx)$ as defined in (\ref{eq:T}),
    \[
        T(\vecx) = P_{\vecX}(\vecx) \cdot W(\vecy | \vecx) \; ,
    \]
    where $P_\vecX$ is the hidden-Markov input distribution and $W$ is the deletion channel law.
\end{lemm}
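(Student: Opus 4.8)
The plan is to follow the proof of Lemma~\ref{lemm:baseTrellisProb} almost verbatim, the only difference being that each trellis path now additionally records a trajectory of the hidden state. The first step is to set up a bijection between the valid paths of the trellis of Definition~\ref{defi:FAIMTrellis} whose label sequence equals a fixed $\vecx \in \calX^N$ --- necessarily running from some $s^{(0)}_{0,0} \in \calV_0$ to some $s^{(N)}_{M,N} \in \calV_N$, by the $q$ and $r$ conventions --- and the pairs $\big((s^{(0)}, s^{(1)}, \ldots, s^{(N)}),\, D\big)$, where $(s^{(j)})_{j=0}^{N} \in \calS^{N+1}$ is an arbitrary state trajectory and $D \subseteq [N]$ is a deletion pattern compatible with $(\vecx, \vecy)$, i.e., removing from $\vecx$ the coordinates indexed by $D$ yields $\vecy$. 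The correspondence is the natural one: a horizontal edge in section $j$ encodes ``$j \in D$'' and leaves the running count $i$ of retained symbols unchanged, while a diagonal edge encodes ``$j \notin D$'', increments $i$, and --- since its label is $y_{i+1}$ --- forces $x_j$ to be the next retained output symbol; in either case the edge also carries the state transition $s^{(j-1)} \to s^{(j)}$. Ending at $s^{(N)}_{M,N}$ is then equivalent to there being exactly $M$ diagonal sections whose labels spell $\vecy$ in order, which is precisely the compatibility of $D$, and conversely each admissible pair determines a unique legal path. I expect this bookkeeping --- checking that the vertex sets (\ref{eq:sijFAIM}) together with the boundary weight conventions really do make the valid paths with label $\vecx$ a Cartesian product of the set of state trajectories and the set of compatible deletion patterns --- to be the only step needing attention, and it is essentially the uniform-case bijection with the hidden state appended.

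The second step is to evaluate the weight of the path associated with an admissible pair $\big((s^{(j)})_j, D\big)$. By the horizontal edge weights (\ref{eq:horizontalZero}), (\ref{eq:horizontalOne}), the diagonal edge weight, and the boundary values $q(s^{(0)}_{0,0}) = \pi(s^{(0)})$, $r(s^{(N)}_{M,N}) = 1$, this weight equals
\[
    \pi(s^{(0)}) \prod_{j=1}^{N} \delta^{[j \in D]} (1-\delta)^{[j \notin D]} \, P_{S_j, X_j | S_{j-1}}(s^{(j)}, x_j \,|\, s^{(j-1)}) \, ,
\]
using that at a diagonal section $j$ one has $y_{i+1} = x_j$, so the surviving $P_{S_j,X_j|S_{j-1}}$ factor has second argument $x_j$ in both the horizontal and diagonal cases. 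The key observation is that this expression factors into a part depending only on the trajectory $(s^{(j)})_j$ (with $\vecx$ fixed) and the part $\prod_{j}\delta^{[j\in D]}(1-\delta)^{[j\notin D]} = \delta^{N-M}(1-\delta)^{M}$, which depends only on $D$.

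The third step is to sum over all valid paths with label $\vecx$, that is, over all admissible pairs. Because the summand factors, so does the sum. Summing the trajectory part over all $(s^{(0)},\ldots,s^{(N)}) \in \calS^{N+1}$ yields $P_{\vecX}(\vecx)$: by the Markov property (\ref{eq_markov property of FAIM}) and stationarity this is exactly the marginalization of the hidden states of the FAIM process described at the end of Section~\ref{sec:background}. Summing the deletion part over all compatible $D$ yields $\delta^{N-M}(1-\delta)^{M}$ times the number of deletion patterns carrying $\vecx$ to $\vecy$, which is precisely $W(\vecy \,|\, \vecx)$ by the description of the deletion channel recalled in Section~\ref{sec:background}. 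Hence $T(\vecx) = P_{\vecX}(\vecx)\cdot W(\vecy \,|\, \vecx)$, as claimed. Apart from the bijection, every step is mechanical, which is why the lemma is ``easily proved.''
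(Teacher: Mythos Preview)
Your proof is correct and follows essentially the same approach as the paper's, only in far more detail: the paper's proof consists of two sentences observing that a trellis path's weight equals the joint probability of $(\vecx,\vecy)$ together with the deletion pattern (and, implicitly, the state trajectory), and that $T(\vecx)$ sums these over all consistent paths. Your explicit bijection and factorization into a state-trajectory factor yielding $P_{\vecX}(\vecx)$ and a deletion-pattern factor yielding $W(\vecy|\vecx)$ is exactly the content the paper leaves to the reader.
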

\begin{IEEEproof}
First, we observe that the weight of a trellis path equals the joint probability of $(\vecx,\vecy)$ and the deletion pattern.
Then, the claim follows from the fact that $T(\vecx)$ sums the path weight over all paths through the trellis (i.e., all deletion patterns) consistent with the given $(\vecx,\vecy)$ pair.
\end{IEEEproof}

\subsection{Trellis for the trimmed deletion channel}
\label{subsec:TrellisTDC}
For reasons that will shortly become clear, we will now consider a slight variation of the deletion channel. Namely, we now define the trimmed deletion channel (TDC). A TDC
is a deletion channel that, after the deletion process, trims its output of leading and trailing `$0$' symbols. Thus, by definition, the output of a TDC is either an empty string, or a string that starts and ends with a `$1$' symbol.

We now show how to alter Definition~\ref{defi:FAIMTrellis} in order to account for this variation. The change turns out to be minimal.
\begin{defi}[Base Trellis for Hidden-Markov Input and TDC]\label{defi:FAIMTDCTrellis} For $N$, $\delta$, $M$, $\calS$, $P_{S_j,X_j|S_{j-1}}$, $\pi$, and trimmed output $\vecy^* \in \calX^M$, define the trellis $\calT$ as in Definition~\ref{defi:FAIMTrellis}, but with the following changes.
\begin{itemize}
    \item The probability of an edge $e$ from $\alpha_{0,j}$ to $\beta_{0,j+1}$ with $\ell(e) = 0$ must be changed to $\edgeProb(e) = P_{S_j,X_j|S_{j-1}}(\beta,0|\alpha)$. Namely, the $\delta$ factor in (\ref{eq:horizontalZero}) is removed. In short, if the path is currently at vertex $\alpha_{0,j}$, then none of the $j$ symbols $x_1,x_2,\ldots,x_j$ have made it to the output of the channel (they have either been deleted or trimmed). Thus, if $x_{j+1} = 0$, it will surely be either deleted, or else trimmed.
    \item The probability of an edge $e$ from $\alpha_{M,j}$ to $\beta_{M,j+1}$ with $\ell(e) = 0$ must be changed to $\edgeProb(e) = P_{S_j,X_j|S_{j-1}}(\beta,0|\alpha)$. Namely, the $\delta$ factor in (\ref{eq:horizontalZero}) is removed. Note that the exact same reasoning from the previous point applies; the only difference is that now we are correcting for the trimming of the trailing `$0$' symbols.
\end{itemize}
\end{defi}

The result of the above altered trellis definition is the following lemma.
\begin{lemm}
    \label{lemm:baseFAIMTDCTrellisProb}
    Let $\calT$ be a trellis as described in Definition~\ref{defi:FAIMTDCTrellis}.  Then, for $\vecx \in \calX^N$ and $T(\vecx)$ as defined in (\ref{eq:T}),
    \[
        T(\vecx) = P_{\vecX}(\vecx) \cdot \WDPT(\vecy^* | \vecx) \; ,
    \]
    where $P_\vecX$ is the hidden-Markov input distribution and $\WDPT$ is the law of the TDC.
\end{lemm}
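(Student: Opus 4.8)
The plan is to mirror the proof of Lemma~\ref{lemm:baseFAIMTrellisProb} and show that the modifications in Definition~\ref{defi:FAIMTDCTrellis} exactly account for the trimming operation. Recall that the TDC law $\WDPT(\vecy^* | \vecx)$ can be written as a sum over deletion patterns: a deletion pattern applied to $\vecx$ produces an intermediate string $\vecz$, and then leading and trailing zeros of $\vecz$ are removed to give $\vecy^*$. So $\WDPT(\vecy^* | \vecx) = \sum_{\text{patterns }p} P(p) \cdot [\trim{\vecz(p,\vecx)} = \vecy^*]$, where $\trim{\cdot}$ denotes the trimming map and $P(p) = (1-\delta)^{|\vecz|}\delta^{N-|\vecz|}$. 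The goal is to show that $T(\vecx)$ for the altered trellis equals $P_{\vecX}(\vecx)$ times this sum.

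First I would set up a bijection between valid paths in the altered trellis (those starting at some $s_{0,0}$ and ending at some $s'_{M,N}$) and pairs consisting of (i) a deletion-and-trimming scenario compatible with $(\vecx,\vecy^*)$ and (ii) a state trajectory of the input process. The key observation is that a path which currently sits at a vertex $\alpha_{0,j}$ has not yet deposited any symbol into the (trimmed) output, and a horizontal `$0$'-edge out of such a vertex should be read as ``$x_{j+1}=0$ and this symbol either gets deleted or gets trimmed as a leading zero'' — both sub-cases collapse into a single edge, and the combined probability of that edge, marginalized over whether it was deleted versus trimmed, is not simply $\delta \cdot P_{S_j,X_j|S_{j-1}}(\beta,0|\alpha)$ but rather the full $P_{S_j,X_j|S_{j-1}}(\beta,0|\alpha)$ with no $\delta$ factor. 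The subtle point is to argue that this single re-weighted edge correctly accumulates the total probability mass of all the micro-scenarios it represents; concretely, a run of $k$ leading zeros in $\vecx$ that are all removed (some deleted by the channel, some trimmed) contributes total weight $\prod P_{S,X|S}(\cdot,0|\cdot)$ regardless of how the $\delta$ versus $(1-\delta)$ split happened, because $\delta + (1-\delta)\cdot[\text{the resulting leading 0 is itself trimmed}]$ telescopes to $1$. The symmetric argument handles the trailing zeros via the $\alpha_{M,j} \to \beta_{M,j+1}$ edges: once a path reaches level $i=M$, any further retained symbols would have to be trailing, and a trailing `$0$' is necessarily trimmed, so again the $\delta$ factor is dropped.

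The cleanest way to make this rigorous is to relate the TDC trellis to the ordinary deletion-channel trellis of Definition~\ref{defi:FAIMTrellis}: run the ordinary trellis over all untrimmed outputs $\vecy$ whose trimming equals $\vecy^*$, i.e.\ over $\vecy = 0^a \concat \vecy^* \concat 0^c$ for $a,c \ge 0$ (with the convention handling $\vecy^*$ empty separately), sum $T_{\vecy}(\vecx)$ over all such $\vecy$, and show this equals $T(\vecx)$ for the single altered trellis. This reduces the lemma to a combinatorial identity: summing the path-sum over the infinite (but effectively finite, since $|\vecy|\le N$) family of zero-padded outputs produces exactly the collapsed-edge weights. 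The $i=0$ block of vertices in the altered trellis plays the role of ``all the untrimmed-output trellises glued along their leading-zero prefixes'', and likewise for the $i=M$ block and trailing zeros. Once the path $\leftrightarrow$ scenario correspondence is established and the edge weights are checked to multiply to the right joint probability, the lemma follows exactly as in Lemma~\ref{lemm:baseFAIMTrellisProb}: $T(\vecx)$ sums joint probabilities over all deletion-and-trimming patterns consistent with $(\vecx,\vecy^*)$, which is $P_{\vecX}(\vecx)\cdot\WDPT(\vecy^*|\vecx)$.

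I expect the main obstacle to be bookkeeping the boundary/degenerate cases cleanly — in particular the case $\vecy^* $ empty (where the trellis degenerates to a single level $i=0=M$ and every retained symbol is trimmed), and making sure that a path is not double-counted when it could exit the $i=0$ block and later re-enter the $i=M$ block without depositing any `$1$'. One should verify that the definition forces at least one diagonal edge labelled `$1$' whenever $M\ge 1$ (since $\vecy^*$ starts and ends with `$1$'), so the first retained symbol is genuinely a `$1$' and the leading-zero collapse is unambiguous; the trailing case is symmetric. Modulo this careful case analysis, the argument is routine and parallels the two preceding lemmas, so I would keep the written proof short and refer back to the proof of Lemma~\ref{lemm:baseFAIMTrellisProb} for the parts that are unchanged.
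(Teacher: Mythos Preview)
Your proposal is correct and its core idea --- each trellis path corresponds to a deletion/trimming event and the path weight equals the joint probability of $\vecx$, $\vecy^*$, and that event --- is exactly what the paper uses. The paper's own proof, however, states this in two sentences and moves on; it does not work through the telescoping of the $\delta$ and $(1-\delta)$ factors on the $i=0$ and $i=M$ rows, nor does it take your alternative route of summing the ordinary-trellis path-sums $T_\vecy(\vecx)$ over all untrimmed $\vecy$ with $\trim{\vecy}=\vecy^*$. That alternative is valid but strictly more work than the direct argument, so it buys rigor at the cost of length.

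Two small points. First, your word ``bijection'' is not quite right: a single horizontal $0$-edge on the $i=0$ row aggregates two micro-events (deleted versus survived-then-trimmed), so the map from (deletion pattern, state trajectory) pairs to trellis paths is many-to-one, and the edge weight is the \emph{total} mass of its preimage --- which you do compute correctly. Second, your worry about ``exiting the $i=0$ block and later re-entering the $i=M$ block'' is unfounded: the index $i$ is non-decreasing along every path (edges are horizontal or diagonal only), so no path can revisit a lower $i$-level, and no double-counting can occur. The degenerate case $M=0$ simply has $i=0=M$ throughout and both edge-weight modifications coincide, which is consistent.
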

\begin{IEEEproof}
First, we observe that the weight of a trellis path equals the joint probability of $(\vecx,\vecy^*)$ and the deletion/trimming event associated with that path.
Then, the claim follows from the fact that $T(\vecx)$ sums the path weight over all paths through the trellis (i.e., all deletion/trimming events) consistent with the given $(\vecx,\vecy^*)$ pair.
\end{IEEEproof}

\section{Polarization operations on a trellis}\label{sec:TrellisPolarization}

Polar plus and minus transforms for channels with memory were first presented in~\cite{Wang_2014,Wang_2015}.
Let an input distribution on $\vecx^N$ be given, for $N$ even. For this input distribution and a vector channel with input $\vecx \in \calX^N$ and output $\vecy$, let $\calT$ be a trellis with $N$ sections whose path-sum function satisfies
\begin{equation}
    \label{eq:generalPathSumFunction}
    T(\vecx) = \Pr(\vecY = \vecy , \vecX=\vecx) \; .
\end{equation}
\subsection{Minus transform}
For a given path-sum function $T(\vecx)$, where $\vecx \in \calX^N$, the polar \emph{minus transform} defines a new path-sum function $T^{[0]} (\vecz)$, $\vecz \in \calX^{N/2}$. Specifically, $T^{[0]}(\vecz)$ is the marginalization of  $T(\vecx)$ over all $\vecx$ vectors satisfying
\[
    \vecz = \vecx^{[0]} = (x_1 \oplus x_2,\ldots,x_{N-1}\oplus x_N) \; .
\]
That is,
\begin{IEEEeqnarray}{rCl}
    T^{[0]} (\vecz) & \triangleq & \sum_{\vecx \in \calX^N: \vecx^{[0]} = \vecz} T(\vecx) \label{eq:minusTransformPathSum} \\
                    &=& \sum_{\vecx \in \calX^{N} } T(\vecx) \prod_{j=1}^{N/2} [x_{2j-1}\oplus x_{2j} = z_j] \IEEEnonumber \\
    & = & \Pr(\vecY = \vecy , \vecX^{[0]} = \vecz) \IEEEnonumber \; ,
\end{IEEEeqnarray}
where the last equality follows under the assumption of (\ref{eq:generalPathSumFunction}).
Due to the local nature of this reparameterization, there is a modified trellis $\calT^{[0]}$ with $N/2$ sections that represents the new path-sum function.

\begin{defi}[Minus Transform] \label{def:minus_transform} Let $\calT = \calT(\calV, \calE, w, \ell, q, r)$ be a length-$N$ trellis, where $N$ is even. The trellis $\tilde{\calT} = \tilde{\calT}(\tilde{\calV}, \tilde{\calE}, \tilde{w}, \tilde{\ell},\tilde{q}, \tilde{r}) = \calT^{[0]}$ is defined as follows.
    \begin{itemize}
        \item The vertex set of $\tilde{\calT}$ is 
        \[
            \tilde{\calV} = \tilde{\calV}_0 \cupdot \tilde{\calV}_1 \cupdot \cdots \cupdot \tilde{\calV}_{N/2} \; ,
        \]
        where
        \[
            \tilde{\calV}_j = \calV_{2j} \; .
        \]

    \item We next define the edge set $\tilde{\calE}$ implicitly. Consider an edge $\tilde{e} = \alpha \to \gamma \in \tilde{\calE}$ in section $j$ of $\tilde{\calT}$ with label $\tilde{\ell}(\tilde{e}) = z$. Then,
        \[
            \alpha \in \tilde{\calV}_{j-1} = \calV_{2j-2} \quad \mbox{and} \quad  \gamma \in \tilde{\calV}_j = \calV_{2j} \; . 
        \]
        The weight $\tilde{w}(\tilde{e})$ of this edge equals the sum of the product of the edge weights along each two-step path
         $\alpha \xrightarrow{e_1}  \beta \xrightarrow{e_2} \gamma$ in $\calT$ with $\lbl(e_1) \xor \lbl(e_2) = z$. That is,
\begin{aligncc*}
    \tilde{w} (\tilde{e}) = \onlydouble{&} \sum_{\substack{e_1 \in \calE_{2j-1} : \\  \sigma(e_1)=\alpha}} \;\;  \sum_{\substack{e_2 \in \calE_{2j} : \\  \tau(e_2)=\gamma} } w(e_1) \, w(e_2) \\
    \onlydouble{& \quad \quad} \times [\tau(e_1)=\sigma(e_2)] \cdot [\ell(e_1)\oplus \ell(e_2) = z].
\end{aligncc*}
Edges with weight $0$ may be removed from $\tilde{\calT}$.
\item The minus operation does not affect initial and final vertices and this implies that $\tilde{\initstate}(s) = \initstate(s)$ and $\tilde{\finalstate}(s) = \finalstate(s)$.
    \end{itemize}
\end{defi}

The following lemma states that applying a minus transform to a trellis indeed results in a trellis whose corresponding path-sum function is the minus transform of the path-sum function of the initial trellis.
\begin{lemm}
    \label{lemm:minusTransform}
    Let $\calT$ be a trellis with $N$ sections, where $N$ is even. Denote the minus transform of $\calT$ by $\calT' = \calT^{[0]}$ per Definition~\ref{def:minus_transform}. Let $T$ and $T'$ be the path-sum functions corresponding to $\calT$ and $\calT'$, respectively, as defined in (\ref{eq:T}) . Then, $T'$ equals $T^{[0]}$ as defined in (\ref{eq:minusTransformPathSum}).
\end{lemm}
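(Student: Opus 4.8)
The plan is to verify the identity $T'(\vecz) = T^{[0]}(\vecz)$ for every $\vecz \in \calX^{N/2}$ by unrolling both sides as sums over paths/labelings and matching them term by term. Recall that, by the definition of the path-sum function in \eqref{eq:T}, $T'(\vecz)$ is a sum over sequences of $N/2$ edges $\tilde e_1,\ldots,\tilde e_{N/2}$ in $\tilde\calT = \calT^{[0]}$ whose labels read off $\vecz$, weighted by $\tilde\initstate(\sigma(\tilde e_1))\,\tilde\finalstate(\tau(\tilde e_{N/2}))\prod_k \tilde w(\tilde e_k)$ subject to the consecutive-incidence constraints $\tau(\tilde e_k)=\sigma(\tilde e_{k+1})$. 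On the other side, $T^{[0]}(\vecz) = \sum_{\vecx:\,\vecx^{[0]}=\vecz} T(\vecx)$ is, after substituting \eqref{eq:T}, a sum over sequences of $N$ edges $e_1,\ldots,e_N$ in $\calT$, with the label constraint $\ell(e_{2j-1})\oplus\ell(e_{2j}) = z_j$ for each $j$, again weighted by initial/final vertex weights, edge weights, and incidence indicators.

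The key step is to exhibit a weight-preserving bijection between the index sets of the two sums. First I would observe that, since $\tilde\calV_k = \calV_{2k}$, a consecutive sequence of vertices $\sigma(\tilde e_1),\tau(\tilde e_1)=\sigma(\tilde e_2),\ldots,\tau(\tilde e_{N/2})$ in $\tilde\calT$ is exactly a choice of vertices $\alpha^{(0)}\in\calV_0, \alpha^{(1)}\in\calV_2, \ldots, \alpha^{(N/2)}\in\calV_N$ in $\calT$; and since $\tilde\initstate,\tilde\finalstate$ coincide with $\initstate,\finalstate$ on $\calV_0,\calV_N$, the vertex-weight factors agree. Next, by the definition of $\tilde w$ in Definition~\ref{def:minus_transform}, each factor $\tilde w(\tilde e_k)$ for the edge $\tilde e_k = \alpha^{(k-1)}\to\alpha^{(k)}$ with label $z_k$ is itself a sum over two-step paths $\alpha^{(k-1)}\xrightarrow{e_1}\beta\xrightarrow{e_2}\alpha^{(k)}$ in $\calT$ with $\ell(e_1)\oplus\ell(e_2)=z_k$, weighted by $w(e_1)w(e_2)$ and the incidence indicator $[\tau(e_1)=\sigma(e_2)]$. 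Expanding the product $\prod_k \tilde w(\tilde e_k)$ by distributivity therefore turns $T'(\vecz)$ into a sum over choices, for each $k$, of an intermediate vertex $\beta^{(k)}\in\calV_{2k-1}$ and edges $e_{2k-1}\in\calE_{2k-1}$, $e_{2k}\in\calE_{2k}$ forming the two-step path, i.e.\ exactly a full length-$N$ edge sequence in $\calT$ together with its vertex trace, carrying precisely the constraints $\tau(e_j)=\sigma(e_{j+1})$ for all $j$ (the odd-to-even incidences come from the $[\tau(e_1)=\sigma(e_2)]$ inside each $\tilde w$; the even-to-odd incidences $\tau(e_{2k})=\sigma(e_{2k+1})$ come from $\tau(\tilde e_k)=\sigma(\tilde e_{k+1})$, i.e.\ $\alpha^{(k)}$ being shared) and the label constraints $\ell(e_{2k-1})\oplus\ell(e_{2k})=z_k$. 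This is term-for-term the expansion of $T^{[0]}(\vecz)$.

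I would then conclude by noting that the removal of zero-weight edges from $\tilde\calT$ mentioned in Definition~\ref{def:minus_transform} does not change any path-sum, since such edges contribute $0$ to every term of \eqref{eq:T} in which they appear. I expect the main obstacle to be purely bookkeeping: carefully tracking which incidence indicator in the expanded $T'$ corresponds to which constraint in $T^{[0]}$, and making sure the endpoint summations (over $\calV_0$ and $\calV_N$, which are free except as weighted by $\initstate$ and $\finalstate$) line up — there is no analytic difficulty, only the risk of an off-by-one in the indexing $\tilde\calV_j = \calV_{2j}$, $\tilde\calE_j \leftrightarrow \calE_{2j-1},\calE_{2j}$. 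A clean way to present this is to write \eqref{eq:T} for $T'$, substitute the definition of $\tilde w$, interchange the (finite) sums, and recognize the result as \eqref{eq:minusTransformPathSum} after relabeling; alternatively one can phrase it as the identity of generating sums over the obvious bijection of index sets.
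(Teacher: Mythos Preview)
Your proposal is correct and follows the same approach as the paper; indeed, the paper's own proof is a two-sentence sketch (``the minus trellis is constructed by merging adjacent trellis stages and then combining paths according to their $\vecx^{[0]}$ values''), and your write-up is exactly the bookkeeping that makes this precise. Your care with the even-to-odd versus odd-to-even incidence constraints and the remark about zero-weight edges are appropriate and go beyond what the paper spells out.
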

\begin{IEEEproof}
This follows from the fact that the minus trellis is constructed by merging adjacent trellis stages and then combining paths according to their $\vecx^{[0]}$ values.
Finally, the new paths are relabeled by their $\vecx^{[0]}$ values. 
\end{IEEEproof}

\subsection{Plus transform}
For a given path-sum function $T(\vecx)$, where $\vecx \in \calX^N$, the polar \emph{plus transform} defines a new path-sum function $T^{[1]} (\vecz')$, $\vecz' \in \calX^{N/2}$. This definition is always with respect to a vector $\vecz \in \calX^{N/2}$, which is assumed to be fixed. Specifically, $T^{[1]}(\vecz')$ equals $T(\vecx)$, where $\vecx$ is the unique vector satisfying
\begin{IEEEeqnarray*}{rCl}
    \vecz &=& \vecx^{[0]} = (x_1 \oplus x_2,\ldots,x_{N-1}\oplus x_N) \quad \mbox{and} \\
    \vecz' &=&  \vecx^{[1]}=(x_2,x_4,\ldots,x_N) \; .
\end{IEEEeqnarray*}
That is,
\begin{IEEEeqnarray}{rCl}
    T^{[1]} (\vecz') & \triangleq & T(\vecx) \big|_{\vecx:\vecx^{[0]} = \vecz, \vecx^{[1]} = \vecz'} \label{eq:plusTransformPathSum} \\ 
& = & \sum_{\vecx \in \calX^{N} } T(\vecx) \prod_{j=1}^{N/2} [x_{2j-1} \xor x_{2j} = z_j] \cdot [x_{2j} = z'_j] \IEEEnonumber \\
& = &\Pr(\vecY = \vecy , \vecX^{[0]} = \vecz, \vecX^{[1]} = \vecz') \; , \IEEEnonumber
\end{IEEEeqnarray}
where the last equality follows under the assumption of (\ref{eq:generalPathSumFunction}).

As with the minus transform, there is a corresponding operation one can apply to the underlying trellis, which we now detail. Note that the plus-transform of a trellis is defined with respect to a fixed vector $\vecz$, which may not be specified explicitly when it is clear from the context.

\begin{defi}[Plus Transform] \label{def:plus_transform} Let $\calT = \calT(\calV, \calE, w, \ell, q, r)$ be a length-$N$ trellis, where $N$ is even and let $\vecz \in \calX^{N/2}$ be given. The trellis $\tilde{\calT} = \tilde{\calT}(\tilde{\calV}, \tilde{\calE}, \tilde{w}, \tilde{\ell},\tilde{q}, \tilde{r}) =  \calT^{[1]}$ is defined as follows.

    \begin{itemize}
        \item The vertex set of $\tilde{\calT}$ is the same as the minus trellis $\calT^{[0]}$. This is also the case for the functions $\tilde{q}$ and $\tilde{r}$. 
    
    \item We next define the edge set $\tilde{\calE}$ implicitly. Consider an edge $\tilde{e} = \alpha \to \gamma \in \tilde{\calE}$ in section $j$ of $\tilde{\calT}$ with label $\tilde{\ell}(\tilde{e}) = z'$. Then,
        \[
            \alpha \in \tilde{\calV}_{j-1} = \calV_{2j-2} \quad \mbox{and} \quad  \gamma \in \tilde{\calV}_j = \calV_{2j} \; . 
        \]
        The weight $\tilde{w}(\tilde{e})$ of this edge equals the sum of the product of the edge weights along each two-step path
        $\alpha \xrightarrow{e_1}  \beta \xrightarrow{e_2} \gamma$ in $\calT$ with $\lbl(e_1) \xor \lbl(e_2) = z_j$ and $\lbl(e_2)=z'$. That is,
\begin{aligncc*}
    \tilde{w} (\tilde{e}) = \onlydouble{&} \sum_{\substack{e_1 \in \calE_{2j-1} : \\  \sigma(e_1)=\alpha}} \;\;  \sum_{\substack{e_2 \in \calE_{2j} : \\  \tau(e_2)=\gamma} } w(e_1) \, w(e_2) \\
    \onlydouble{& \!\!\!} \times [\tau(e_1)=\sigma(e_2)] \cdot [\ell(e_1)\oplus z' = z_j] \cdot [\ell(e_2) = z'] \; .
\end{aligncc*}
Edges with weight $0$ may be removed from $\tilde{\calT}$.
    \end{itemize}
\end{defi}

This lemma states the key property of plus transform.
\begin{lemm}
    \label{lemm:plusTransform}
    Let $\calT$ be a trellis with $N$ sections where $N$ is even, and let $\vecz \in \calX^{N/2}$ be given. Denote the plus transform of $\calT$ by $\calT' = \calT^{[1]}$ per Definition~\ref{def:plus_transform}. Let $T$ and $T'$ be the path-sum functions corresponding to $\calT$ and $\calT'$, respectively, as defined in (\ref{eq:T}) . Then, $T'$ equals $T^{[1]}$ as defined in (\ref{eq:plusTransformPathSum}).
\end{lemm}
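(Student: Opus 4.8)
The plan is to mirror the proof of Lemma~\ref{lemm:minusTransform}, the only new ingredient being the extra label constraint $\lbl(e_2) = z'$ in Definition~\ref{def:plus_transform}, which is exactly what makes the resulting path-sum pick out a single input vector rather than summing over a coset. Throughout, recall that $\vecz \in \calX^{N/2}$ is fixed.

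First I would write out $T'(\vecz')$ explicitly by applying the path-sum formula (\ref{eq:T}) to $\tilde{\calT} = \calT^{[1]}$. Since $\tilde{\calT}$ has $N/2$ sections, $T'(\vecz')$ is a sum over edge sequences $\tilde{e}_1 \in \tilde{\calE}_1, \ldots, \tilde{e}_{N/2} \in \tilde{\calE}_{N/2}$ with $\tilde{\lbl}(\tilde{e}_j) = z'_j$, weighted by $\tilde{\initstate}(\sigma(\tilde{e}_1)) \cdot \tilde{\finalstate}(\tau(\tilde{e}_{N/2}))$ times $\prod_{j=1}^{N/2} \tilde{\edgeProb}(\tilde{e}_j)$ times $\prod_{j=1}^{N/2-1}[\tau(\tilde{e}_j) = \sigma(\tilde{e}_{j+1})]$. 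Next I would substitute the definition of $\tilde{\edgeProb}(\tilde{e}_j)$ from Definition~\ref{def:plus_transform}: each such factor expands into a sum over pairs of original edges $e_{2j-1} \in \calE_{2j-1}$, $e_{2j} \in \calE_{2j}$ with $\sigma(e_{2j-1}) = \sigma(\tilde{e}_j)$ and $\tau(e_{2j}) = \tau(\tilde{e}_j)$, subject to $[\tau(e_{2j-1}) = \sigma(e_{2j})]$, $[\lbl(e_{2j-1}) \xor z'_j = z_j]$, and $[\lbl(e_{2j}) = z'_j]$, contributing $\edgeProb(e_{2j-1})\,\edgeProb(e_{2j})$.

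The key bookkeeping step is then to reassemble everything into a single length-$N$ path-sum over $\calT$. Because $\tilde{\calV}_j = \calV_{2j}$ and $\tilde{\calV}_{j-1} = \calV_{2j-2}$, the vertices ranged over are exactly the even-indexed vertex layers of $\calT$, and the stitching constraints $[\tau(\tilde{e}_j) = \sigma(\tilde{e}_{j+1})]$ become $[\tau(e_{2j}) = \sigma(e_{2j+1})]$; combined with the internal constraints $[\tau(e_{2j-1}) = \sigma(e_{2j})]$, these reproduce precisely the full consistency product $\prod_{k=1}^{N-1}[\tau(e_k) = \sigma(e_{k+1})]$ of (\ref{eq:T}). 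Since also $\tilde{\initstate} = \initstate$ and $\tilde{\finalstate} = \finalstate$, I conclude that $T'(\vecz')$ equals the sum, over all length-$N$ edge sequences $e_1,\ldots,e_N$ of $\calT$ with $\lbl(e_{2j}) = z'_j$ and $\lbl(e_{2j-1}) \xor \lbl(e_{2j}) = z_j$ for every $j$, of the full path weight $\initstate(\sigma(e_1))\,\finalstate(\tau(e_N))\prod_{k=1}^N \edgeProb(e_k) \prod_{k=1}^{N-1}[\tau(e_k)=\sigma(e_{k+1})]$.

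Finally I would group these edge sequences according to their label vector $\vecx = (\lbl(e_1),\ldots,\lbl(e_N))$, which turns the expression into $\sum_{\vecx : \vecx^{[0]} = \vecz,\, \vecx^{[1]} = \vecz'} T(\vecx)$. But the two constraints $\vecx^{[0]} = \vecz$ and $\vecx^{[1]} = \vecz'$ determine $\vecx$ uniquely, namely $x_{2j} = z'_j$ and $x_{2j-1} = z_j \xor z'_j$. Hence the sum collapses to $T(\vecx)$ for that single $\vecx$, which is exactly $T^{[1]}(\vecz')$ as defined in (\ref{eq:plusTransformPathSum}), proving the lemma. I do not anticipate a genuine obstacle here; the only step needing care is the middle one — checking that unfolding each merged edge and recombining the intermediate- and stitching-vertex indicators recovers (\ref{eq:T}) with no missing or spurious constraints — and that follows cleanly once the identification $\tilde{\calV}_j = \calV_{2j}$ is invoked.
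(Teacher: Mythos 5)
Your proof is correct and follows essentially the same route as the paper, which disposes of the lemma with a brief remark that the plus trellis merges adjacent stages, prunes paths violating $\vecx^{[0]}=\vecz$, and relabels by $\vecx^{[1]}$; you have simply carried out that unfolding of the merged edge weights and the regrouping by label vector explicitly. The uniqueness observation at the end (that $\vecx^{[0]}=\vecz$, $\vecx^{[1]}=\vecz'$ pin down $\vecx$) is exactly what makes the collapse to a single term in (\ref{eq:plusTransformPathSum}) valid, so no gap remains.
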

\begin{IEEEproof}
This follows from the fact that the plus trellis is constructed by merging adjacent trellis stages and then pruning paths that do not satisfy $\vecx^{[0]}=\vecz$.
Finally, the remaining paths are relabeled with $\vecx^{[1]}$ values. 
\end{IEEEproof}

\subsection{Successive cancellation decoding}
As in Ar\i{}kan's seminal paper~\cite{Arikan_2009}, the transform defined above leads to a SC decoding algorithm. In brief, given $\vecy$ we first construct a base trellis $\calT$. Then, there is a recursive decoder that, given $\calT^{[b_1, b_2, \ldots, b_\lambda]}$, constructs $\calT^{[b_1, b_2, \ldots, b_\lambda, 0]}$ and calls itself with that argument.  When this returns the decoded $\vecxbrbr{b_1, b_2, \ldots, b_\lambda, 0}$, it then builds $\calT^{[b_1, b_2, \ldots, b_\lambda, 1]}$ with respect to those hard decisions and calls itself to decode $\vecxbrbr{b_1, b_2, \ldots, b_\lambda, 1}$. Then, the two decoded vectors are combined to form $\vecxbrbr{b_1, b_2, \ldots, b_\lambda}$ and the function returns. The following lemma makes this precise.
\begin{lemm}
    \label{lemm:recursiveT}
    Let $\calT$ be a base trellis with $N= 2^n$ sections corresponding to a received word $\vecy$ such that (\ref{eq:generalPathSumFunction}) holds for the corresponding path-sum function. For each $i \in [N]$ in order, let $\hat{u}_1^{i-1}$ be a vector of past decisions and $b_1,b_2,\ldots,b_n \in \{0,1\}$ satisfy $i(\vecb) = i$.
     Construct $\calT^{[b_1,b_2,\ldots,b_{n}]}$ iteratively as follows. For $\lambda = 1,2, \ldots, n$, let us define
     \vspace{0.25cm}
     \begin{equation*}
        \calT^{[b_1,b_2,\ldots,b_{\lambda}]} \triangleq
        \begin{cases} 
            (\calT^{[b_1,b_2,\ldots,b_{\lambda-1}]})^{[b_\lambda]} & \mbox{if $\lambda \geq 2$} \; , \\
            \calT^{[b_1]} & \mbox{if } \lambda = 1.
        \end{cases} \vspace{0.25cm}
     \end{equation*}
    If $b_\lambda = 1$, then we apply the plus transform with respect to the fixed vector 
    \begin{equation}
        \label{eq:veczRecursiveT}
        \vecz = \calA_{n-\lambda}^{-1}\left(\hat{u}_{\tau}^{\theta}\right) \; ,
    \end{equation}
    where $\hat{u}_{\tau}^{\theta} \triangleq \left(\hat{u}_{\tau},\hat{u}_{\tau+1},\ldots,\hat{u}_{\theta}\right)$ and 
    \begin{equation}
        \label{eq:thetaTau}
        \theta = \sum_{j=1}^\lambda b_j 2^{n-j} \; , \quad \tau = \theta - 2^{n-\lambda} + 1 \; .
    \end{equation}
    Then, for $\vecU = \calA_n(\vecX)\in \calX^N$, we have
    \[
        T^{[b_1,b_2,\ldots,b_{n}]}(u) = \Pr(U_i = u, U_1^{i-1} = \hat{u}_1^{i-1}, \vecY = \vecy) \; .
    \]
\end{lemm}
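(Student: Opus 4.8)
The plan is to prove Lemma~\ref{lemm:recursiveT} by induction on $\lambda$, peeling off one bit $b_\lambda$ at a time and showing that at each stage the path-sum function of the current trellis equals a conditional joint probability of the appropriate transformed bits. Concretely, I would first fix $\lambda$ and the prefix $b_1,\ldots,b_\lambda$, and set up the right induction hypothesis: with $\theta,\tau$ as in \eqref{eq:thetaTau} and $m = n-\lambda$, the trellis $\calT^{[b_1,\ldots,b_\lambda]}$ has $2^m$ sections and path-sum function
\[
    T^{[b_1,\ldots,b_\lambda]}(\vecv) = \Pr\bigl(\vecY = \vecy,\; \vecX^{[b_1,\ldots,b_\lambda]} = \vecv,\; \vecX^{(b_1,\ldots,b_\lambda,0,\ldots,0)} = \hat{u}_{\tau}^{\,?}\bigr),
\]
i.e.\ the joint probability that $\vecY=\vecy$, that the length-$2^m$ transform block $\vecX^{[b_1,\ldots,b_\lambda]}$ equals the free variable $\vecv$, and that all of the earlier transform coordinates (those with index $<\tau$ in the natural ordering of \eqref{eq:xRoundBracket}, equivalently the ones already decoded that are ``siblings on the left'') equal the corresponding entries of $\hat u_1^{i-1}$. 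The base case $\lambda=0$ is exactly the hypothesis \eqref{eq:generalPathSumFunction} that $T(\vecx) = \Pr(\vecY=\vecy,\vecX=\vecx)$.

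Second, I would run the inductive step in two cases according to $b_\lambda$. If $b_\lambda = 0$, apply Lemma~\ref{lemm:minusTransform}: the new path-sum function is the marginal of $T^{[b_1,\ldots,b_{\lambda-1}]}$ over all vectors whose $[0]$-component equals the new free variable $\vecv$. Combining this with the induction hypothesis and the recursive definition \eqref{eq:recursiveTransformDefinition} of $\vecx^{[b_1,\ldots,b_\lambda]} = (\vecx^{[b_1,\ldots,b_{\lambda-1}]})^{[0]}$, and noting that the set of ``already-fixed'' left-sibling coordinates does not change (since $\theta$ decreases to the old $\tau$ value — here I need to check the index bookkeeping in \eqref{eq:thetaTau} carefully), gives the claim. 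If $b_\lambda = 1$, apply Lemma~\ref{lemm:plusTransform} with the fixed vector $\vecz = \calA_{m+1}^{-1}(\hat u_\tau^\theta)$ from \eqref{eq:veczRecursiveT}: the new path-sum function is $T^{[b_1,\ldots,b_{\lambda-1}]}$ evaluated at the unique $\vecx'$ whose $[0]$-part is $\vecz$ and whose $[1]$-part is $\vecv$. The point is that $\vecz = \calA^{-1}(\hat u_\tau^\theta) = \hat u_\tau^\theta$ is precisely $(\vecx^{[b_1,\ldots,b_{\lambda-1}]})^{[0]}$ restricted to the already-decoded left half of the current block, so fixing it is the same as conditioning $\vecX^{[b_1,\ldots,b_{\lambda-1}]}$'s $[0]$-half to equal the past decisions $\hat u_\tau^\theta$; the identity $\calx^{(\vecb)} = u^{i-1}$ from the end of Section~\ref{sec:background} is what lets me recognize these coordinates as a contiguous block of $\hat u_1^{i-1}$. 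Feeding this through the induction hypothesis absorbs those coordinates into the list of fixed left-siblings and updates $\tau,\theta$ to their new values, completing the step.

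Finally, at $\lambda = n$ we have $m=0$, so $\calT^{[b_1,\ldots,b_n]}$ is a single-section (scalar) trellis, $\vecX^{[b_1,\ldots,b_n]} = u_{i(\vecb)} = U_i$ by \eqref{eq:bitReversedI}, and the collection of fixed left-sibling coordinates is exactly $\vecX^{(\vecb)} = U_1^{i-1}$ by \eqref{eq:xRoundBracket}. Substituting these into the induction hypothesis yields $T^{[b_1,\ldots,b_n]}(u) = \Pr(U_i = u, U_1^{i-1} = \hat u_1^{i-1}, \vecY = \vecy)$, which is the statement. The main obstacle I anticipate is not any single transform step — those are handled by Lemmas~\ref{lemm:minusTransform} and \ref{lemm:plusTransform} — but the index bookkeeping: verifying that the ordered set of transform coordinates fixed by the past decisions $\hat u_1^{i-1}$ is exactly the set described by $\vecx^{(\vecb)}$ at each intermediate stage, that $\tau,\theta$ in \eqref{eq:thetaTau} track the endpoints of the relevant block of $\hat u$ correctly (including the off-by-one in $\tau = \theta - 2^{n-\lambda}+1$), and that the ``unique $\vecx$'' in the plus transform \eqref{eq:plusTransformPathSum} really is pinned down by the combination of the fixed $\vecz$ (from earlier decisions) and the free $\vecv$. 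Getting the bit-reversal / lexicographic-ordering conventions of \eqref{eq:bitReversedI}–\eqref{eq:xRoundBracket} to line up with the recursion in \eqref{eq:veczRecursiveT}–\eqref{eq:thetaTau} is where the real care is needed; once that is pinned down, the probabilistic content is a routine marginalize-then-condition argument.
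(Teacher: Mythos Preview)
Your proposal is correct and follows essentially the same route as the paper's own proof: induction on $\lambda$ with the strengthened hypothesis that $T^{[b_1,\ldots,b_\lambda]}(\vecmu) = \Pr(\vecX^{[b_1,\ldots,b_\lambda]}=\vecmu,\ \vecX^{(\vecb_\lambda)}=\hat u_1^{\,i(\vecb_\lambda)-1},\ \vecY=\vecy)$, using Lemma~\ref{lemm:minusTransform} for $b_{\lambda}=0$ and Lemma~\ref{lemm:plusTransform} for $b_{\lambda}=1$, with the main work being the index bookkeeping that identifies the fixed $\vecz$ with the already-decoded block of $\hat u$. One small slip to fix when you write it up: $\calA_{n-\lambda}^{-1}(\hat u_\tau^\theta)$ is not equal to $\hat u_\tau^\theta$ (the \Arikan\ transform is an involution but not the identity), so the ``$\vecz = \hat u_\tau^\theta$'' shortcut in your plus case needs to go through the actual identification $\calA_{n-\lambda}(\vecX^{[b_1,\ldots,b_{\lambda-1},0]}) = U_\tau^\theta$ as the paper does.
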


\begin{IEEEproof}
    To facilitate a proof by induction, we actually prove a stronger claim. Namely, let  $0 \leq \lambda \leq n$ be given. Define $\vecb_\lambda$ as the vector in $\{0,1\}^n$ whose first $\lambda$ entries equal those of $\vecb$, while the remaining entries are all-zero. That is,
    \begin{equation}
        \label{eq:blambda}
        \vecb_\lambda = (b_1,b_2,\ldots,b_\lambda,0,0,\ldots,0) \; .
    \end{equation}
    Recalling the notation in (\ref{eq:vecxZero})--(\ref{eq:bitReversedI}) and (\ref{eq:xRoundBracket}), we will prove that for all $\vecmu \in \calX^{2^{n-\lambda}}$,
\begin{multlinecc}
    \label{eq:TPlusMinusHalfway}
    T^{[b_1,b_2,\ldots,b_\lambda]}(\vecmu) \\
    =  P(\vecX^{[b_1,b_2,\ldots,b_\lambda]} = \vecmu, \vecX^{(\vecb_\lambda)} = \hat{u}_1^{i(\vecb_\lambda) - 1}, \vecY = \vecy ) \; .
\end{multlinecc}
Clearly, for $\lambda=n$, the reduces to the claimed lemma.

The proof of (\ref{eq:TPlusMinusHalfway}) proceeds by induction on $\lambda$. For the base case, take $\lambda = 0$, and note that (\ref{eq:TPlusMinusHalfway}) holds by assumption: the LHS is by definition $T(\vecmu)$  while the RHS is simply $P(\vecX = \vecmu, \vecY = \vecy)$, and the two are equal by (\ref{eq:generalPathSumFunction}).

For the induction step, we assume that (\ref{eq:TPlusMinusHalfway}) is true for $\lambda$, and prove it to be true for $\lambda + 1$. Assume first that $b_{\lambda+1} = 0$. In this case, $\vecb_\lambda = \vecb_{\lambda+1}$.  Recall that since $b_\lambda = 0$, we get the trellis $\calT^{[b_1,b_2,\ldots,b_\lambda,b_{\lambda+1}]}$ by applying a minus transform (Definition~\ref{def:minus_transform}) on $\calT^{[b_1,b_2,\ldots,b_\lambda]}$. We must prove that (\ref{eq:TPlusMinusHalfway}) holds with $\lambda+1$ in place of $\lambda$, and this is indeed the case by Lemma~\ref{lemm:minusTransform}. Indeed, recall that by our recursive definition, $\vecX^{[b_1,b_2,\ldots,b_\lambda,b_{\lambda+1}]} = \left(\vecX^{[b_1,b_2,\ldots,b_\lambda]}\right)^{[0]}$, and apply Lemma~\ref{lemm:minusTransform}, where in (\ref{eq:generalPathSumFunction}) and (\ref{eq:minusTransformPathSum}) we replace $\vecX$, $\vecY$, and $\vecy$ with $\vecX^{[b_1,b_2,\ldots,b_\lambda]}$, $(\vecY,\vecX^{(\vecb_\lambda)})$, and $(y,\hat{u}_1^{i(\vecb_\lambda) - 1})$, respectively.

Now, let us assume that $b_{\lambda+1} = 1$. Because of this, note that $\vecb_\lambda \neq \vecb_{\lambda+1}$. As before, we assume that (\ref{eq:TPlusMinusHalfway}) is true for $\lambda$, and prove it to be true for $\lambda + 1$. By definition, we get the trellis $\calT^{[b_1,b_2,\ldots,b_\lambda,b_{\lambda+1}]}$ by applying a plus transform (Definition~\ref{def:plus_transform}) on $\calT^{[b_1,b_2,\ldots,b_\lambda]}$, with respect to the vector $\vecz$ defined in (\ref{eq:veczRecursiveT}) and (\ref{eq:thetaTau}), with $\lambda$ replaced by $\lambda+1$. Thus, if we denote by $T$ the probability function associated with $\calT^{[b_1,b_2,\ldots,b_\lambda]}$, we get by Lemma~\ref{lemm:plusTransform} that the probability function associated with $\calT^{[b_1,b_2,\ldots,b_\lambda,b_{\lambda+1}]}$, which we denote by $T'$, satisfies
\begin{IEEEeqnarray*}{rCl}
    T'(\vecz') & = &  T(\vecmu) \\
               & = &  P(\vecX^{[b_1,b_2,\ldots,b_\lambda]} = \vecmu, \vecX^{(\vecb_\lambda)} = \hat{u}_1^{i(\vecb_\lambda) - 1}, \vecY = \vecy ) \; ,
\end{IEEEeqnarray*}
where $\vecmu$ is the unique vector for which $\vecmu^{[0]} = \vecz$ and $\vecmu^{[1]} = \vecz'$. The condition $\vecX^{[b_1,b_2,\ldots,b_\lambda]} = \vecmu$ is equivalent to the pair of conditions
\[
    \vecX^{[b_1,b_2,\ldots,b_\lambda,0]} = \vecmu^{[0]} \quad \mbox{and} \quad \vecX^{[b_1,b_2,\ldots,b_\lambda,1]} = \vecmu^{[1]} \; .
\]
That is, to the pair of conditions
\[
    \vecX^{[b_1,b_2,\ldots,b_\lambda,0]} = \vecz \quad \mbox{and} \quad \vecX^{[b_1,b_2,\ldots,b_\lambda,b_{\lambda+1}]} = \vecz' \; .
\]
We will shortly prove that the pair of conditions 
\begin{equation}
\label{eq:lastStepOfRecursiveT_A}
    \vecX^{(\vecb_\lambda)} = \hat{u}_1^{i(\vecb_\lambda) - 1} \quad \mbox{and} \quad \vecX^{[b_1,b_2,\ldots,b_\lambda,0]} = \vecz
\end{equation}
can be simplified to
\begin{equation}
    \label{eq:lastStepOfRecursiveT_B}
    \vecX^{(\vecb_{\lambda+1})} = \hat{u}_1^{i(\vecb_{\lambda+1}) - 1} \; .
\end{equation}
Once this is proved, the lemma follows, since the above implies that
\begin{aligncc*}
    T'\onlydouble{&}(\vecz') = \\
    \onlydouble{&} P(\vecX^{[b_1,b_2,\ldots,b_\lambda,b_{\lambda+1}]} = \vecz', \vecX^{(\vecb_{\lambda+1})} = \hat{u}_1^{i(\vecb_{\lambda+1}) - 1}, \vecY = \vecy ).
\end{aligncc*}

Let us now show that (\ref{eq:lastStepOfRecursiveT_A}) is equivalent to (\ref{eq:lastStepOfRecursiveT_B}). Since $b_{\lambda+1}=1$, the set of 
transforms we need to add to $\vecX^{(\vecb_\lambda)}$ in order to get $\vecX^{(\vecb_{\lambda+1})}$ are those with prefix $(b_1,b_2,\ldots,b_\lambda,0)$. That is, we are missing the $\calA_{n-(\lambda+1)}$ transform of $\vecX^{[b_1,b_2,\ldots,b_\lambda,0]}$, and this transform must equal $\hat{u}_{i(\vecb_{\lambda})}^{i(\vecb_{\lambda+1}) - 1}$. To see that this indeed is the case, we observe that $\vecz$ is defined by (\ref{eq:veczRecursiveT}) and (\ref{eq:thetaTau}) with $\lambda+1$ in place of $\lambda$.
Recalling (\ref{eq:bitReversedI}) and (\ref{eq:blambda}), and keeping in mind that in  (\ref{eq:thetaTau}) we replace $\lambda$ by $\lambda+1$, we see that $\theta = i(\vecb_{\lambda+1})-1$ while $\tau = i(\vecb_{\lambda})$.
\end{IEEEproof}
Actually, the above lemma is not unique to the deletion channel and it applies to any base trellis for which (\ref{eq:generalPathSumFunction}) holds.  The above lemma also gives an efficient method for deciding the value of $\hat{u}_i$ at stage $i$, since
\begin{multlinecc}
    \label{eq:trellisConditionalProbCalc}
    \Pr(U_i = u | U_1^{i-1} = \hat{u}_1^{i-1}, \vecY = \vecy)
    \\
    = \frac{ T^{[b_1,b_2,\ldots,b_{n}]}(u)}{\displaystyle \sum_{u' \in \cal X} T^{[b_1,b_2,\ldots,b_{n}]}(u') } 
\end{multlinecc}
when $\Pr(U_1^{i-1} = \hat{u}_1^{i-1}, \vecY = \vecy) > 0$.

\subsection{Complexity}
\label{subsec:complexity}

In~\cite{Wang_2014}, SC trellis decoding is generalized to finite-state channels with memory.
For a finite-state channel with $A$ states, the decoding complexity of a length-$N$ code is shown to be $O(A^3 N \log N)$.
While there are some connections between finite-state channels and deletion channels~\cite{Castiglione_2015}, it is not clear if this complexity result can be applied directly to the deletion channel.
Using a different formulation, a SC decoder for polar codes on the deletion channel is defined in~\cite{Tian_2017}. Its complexity is $O(N^4 \log N)$ for a constant deletion rate and a uniform input distribution\footnote{As noted earlier, the complexity of the decoding algorithm in~\cite{Tian_2017} is misstated as $O(d^2 N \log N)$ for $d$ deletions but it is actually $O(d^3 N \log N)$.}.

In this section, we bound the complexity of computing the plus and minus transformations of a trellis. 
For a trellis $\mathcal{T}$ with $N$ sections, let $P_2 (j)$ be the number of distinct 2-step paths from states in $\mathcal{V}_{2j}$ to states in $\mathcal{V}_{2j+2}$ and define
\[ C(\mathcal{T}) \triangleq \sum_{j=0}^{N/2 - 1} P_2 (j). \]
From Definition~\ref{def:minus_transform}, one can verify that the minus transform requires $C(\mathcal{T})$ multiplies and adds to compute $\mathcal{T}^{[0]}$.
Similarly, from Definition~\ref{def:plus_transform}, it follows that the plus transform requires at most $C(\mathcal{T})$ multiplies and adds to compute $\mathcal{T}^{[1]}$.

Consider a trellis $\mathcal{T}_\lambda$ at depth-$\lambda$ in the decoding process.
Such a trellis will have $2^{n-\lambda}$ sections each corresponding to $2^\lambda$ channel uses.
For the deletion channel, we observe that each state in $\mathcal{V}_{2j}$ has at most $2(2^\lambda+1)|\mathcal{S}|$ outgoing edges.
This is because each edge can be labeled by 0 or 1, the number of deletions (between $0$ and $2^{\lambda}$) determines the change in the channel state, and the input state can change to any of $|\mathcal{S}|$ possibilities.
Combining these observations, and noting that the number of vertices in each segment is at most $2^n|\calS|$, we see that
\begin{align*}
C(\mathcal{T}_{\lambda}) &\leq  2^n |\mathcal{S}| \cdot \left( 2(2^{\lambda}+1) |\mathcal{S}| \right)^2 2^{n-\lambda} \leq 2^{2n+2} (2^\lambda+3)  |\mathcal{S}|^3.
\end{align*}
Since the full decoder uses $2^\lambda$ plus and minus operations at depth $\lambda$, the overall decoding complexity is
\[ \sum_{\lambda=0}^{n-1} 2^\lambda 2^{2n+2} (2^\lambda+3)  |\mathcal{S}|^3 = O(|\mathcal{S}|^3 N^4), \]
which is lower than previous methods by a $\log N$ factor.
This occurs because the $\lambda=n-1$ decoding step dominates the calculation and has $O(|\mathcal{S}|^3 N^4)$ complexity by itself.

The reader should happily note that the above quartic growth in $N$ is \emph{not} present in Theorem~\ref{theo:main}. The overall complexity of our scheme is much smaller because the guard bands allow the codeword to be separated into many smaller blocks whose trellises can be processed separately.

\section{Information rates}
\label{sec:informationRates}
In this section, we will introduce and analyze various information rates related to polar codes on the deletion channel.
For a given regular hidden-Markov input distribution, let $\vecX$ be an input vector of length $N$ and let $\vecY$ be the corresponding output vector (i.e., the observation of $\vecX$ through the deletion channel).
The main goal of this paper is to show that our polar coding scheme achieves the information rate
\begin{equation}
    \label{eq:informationRate}
    \calI = \lim_{N \to \infty} \frac{I(\vecX;\vecY)}{N} \; ,
\end{equation}
where $\vecX$ and $\vecY$ depend implicitly on $N$.
This existence of this limit is well-known~\cite{Dobrushin_1967} but we revisit it here because the same argument will be used later with slight variations.

\begin{lemm}
    \label{lemm:twoLimitsExist}
    Fix a hidden-Markov input distribution.  For a given $N$, let $\vecX = (X_1,X_2,\ldots,X_N)$ be a random vector with the above distribution. Let $\vecY$ be the result of passing $\vecX$ through a deletion channel with deletion probability $\delta$. Then, the following two limits exist,
\begin{equation}
    \label{eq:twoLimits}
    \lim_{N \to \infty} \frac{H(\vecX)}{N} \quad  \mbox{and} \quad \lim_{N \to \infty} \frac{H(\vecX|\vecY)}{N} \; .
\end{equation}
\end{lemm}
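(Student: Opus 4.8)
The plan is to establish existence of each limit in \eqref{eq:twoLimits} via subadditivity (Fekete's lemma). For the first limit, I would show that the sequence $a_N \triangleq H(X_1,X_2,\ldots,X_N)$ is subadditive in the sense that $a_{N+M} \le a_N + a_M$ whenever the process is stationary, which holds here. Indeed, by the chain rule $H(X_1^{N+M}) = H(X_1^N) + H(X_{N+1}^{N+M} \mid X_1^N) \le H(X_1^N) + H(X_{N+1}^{N+M})$, and by stationarity $H(X_{N+1}^{N+M}) = H(X_1^M)$. Fekete's lemma then gives $\lim_{N\to\infty} a_N/N = \inf_N a_N/N$, which exists (and is finite and nonnegative since $0 \le a_N \le N\log|\calX|$). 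This is the classical entropy-rate argument and requires only stationarity of the hidden-Markov input process, which is part of the FAIM definition.

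For the second limit, $\lim_{N\to\infty} H(\vecX \mid \vecY)/N$, the key point is that $\vecY$ is obtained from $\vecX$ by an i.i.d.\ deletion process. I would write $H(\vecX \mid \vecY) = H(\vecX,\vecY) - H(\vecY) = H(\vecX) + H(\vecY\mid\vecX) - H(\vecY)$. The term $H(\vecX)/N$ converges by the first part. For the other two terms, the natural route is to argue that $b_N \triangleq H(\vecX_1^N \mid \vecY)$ is itself subadditive. To see this, split the length-$(N+M)$ input into its first $N$ symbols $\vecX_{\mathrm I}$ and last $M$ symbols $\vecX_{\mathrm{II}}$, and correspondingly think of the output $\vecY$ as arising from independently deleting within each block; the concatenation of the two sub-outputs $\vecY_{\mathrm I} \concat \vecY_{\mathrm{II}}$ is a (coarser) observation than $\vecY$ because an observer who sees the pair $(\vecY_{\mathrm I},\vecY_{\mathrm{II}})$ can form $\vecY_{\mathrm I}\concat\vecY_{\mathrm{II}} = \vecY$ but not conversely recover the split point. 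Hence, conditioning reduces entropy,
\[
H(\vecX_{\mathrm I},\vecX_{\mathrm{II}} \mid \vecY) \le H(\vecX_{\mathrm I},\vecX_{\mathrm{II}} \mid \vecY_{\mathrm I},\vecY_{\mathrm{II}}) \le H(\vecX_{\mathrm I}\mid \vecY_{\mathrm I}) + H(\vecX_{\mathrm{II}}\mid \vecY_{\mathrm{II}}),
\]
where the second inequality uses the chain rule plus the fact that dropping $\vecY_{\mathrm I}$ and $\vecX_{\mathrm I}$ from the conditioning only increases entropy of the second term. Finally, stationarity gives $H(\vecX_{\mathrm{II}}\mid\vecY_{\mathrm{II}}) = b_M$, so $b_{N+M} \le b_N + b_M$, and Fekete again yields the limit.

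The main obstacle, and the step that needs care, is the subadditivity argument for $b_N$: one must be precise about the joint distribution of $(\vecX_{\mathrm I},\vecX_{\mathrm{II}},\vecY_{\mathrm I},\vecY_{\mathrm{II}},\vecY)$ and verify that the deletion channel acting on the concatenated input indeed factors, up to loss of the boundary information, through the pair of independent deletion channels acting on the two halves. The i.i.d.\ nature of deletions makes $\vecY_{\mathrm I},\vecY_{\mathrm{II}}$ conditionally independent given $(\vecX_{\mathrm I},\vecX_{\mathrm{II}})$ with $\vecY_{\mathrm I}$ depending only on $\vecX_{\mathrm I}$ and $\vecY_{\mathrm{II}}$ only on $\vecX_{\mathrm{II}}$, and $\vecY$ is a deterministic function of $(\vecY_{\mathrm I},\vecY_{\mathrm{II}})$ — this is the crucial data-processing relation. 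Once this is set up cleanly, both limits follow from Fekete's lemma; I expect the authors' proof to follow essentially this outline, possibly phrasing the second part via the pair $(H(\vecY)/N, H(\vecY\mid\vecX)/N)$ converging separately rather than working with $H(\vecX\mid\vecY)$ directly.
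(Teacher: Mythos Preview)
Your argument for the first limit is fine (standard subadditivity of $H(X_1^N)$ for a stationary process). The trouble is in the second limit.

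First, the inequality you display is in the wrong direction. You correctly observe that $(\vecY_{\mathrm I},\vecY_{\mathrm{II}})$ determines $\vecY = \vecY_{\mathrm I}\concat\vecY_{\mathrm{II}}$ but not conversely; that means $(\vecY_{\mathrm I},\vecY_{\mathrm{II}})$ is the \emph{finer} observation, so conditioning on it \emph{lowers} entropy:
\[
H(\vecX_{\mathrm I},\vecX_{\mathrm{II}}\mid \vecY) \;\ge\; H(\vecX_{\mathrm I},\vecX_{\mathrm{II}}\mid \vecY_{\mathrm I},\vecY_{\mathrm{II}}),
\]
the reverse of what you wrote. Thus $b_N = H(\vecX\mid\vecY)$ is \emph{super}additive, not subadditive, in the case where the two blocks are independent. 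Fekete still applies (to $-b_N$, or via the superadditive form), so in the i.i.d./uniform setting the corrected argument goes through, and this is exactly what the paper does.

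Second, and more seriously, for a genuine hidden-Markov input the two blocks $(\vecX_{\mathrm I},\vecY_{\mathrm I})$ and $(\vecX_{\mathrm{II}},\vecY_{\mathrm{II}})$ are \emph{not} independent, so your chain-rule step
\[
H(\vecX_{\mathrm I},\vecX_{\mathrm{II}}\mid \vecY_{\mathrm I},\vecY_{\mathrm{II}}) \;\le\; H(\vecX_{\mathrm I}\mid\vecY_{\mathrm I}) + H(\vecX_{\mathrm{II}}\mid\vecY_{\mathrm{II}})
\]
gives an inequality that points the \emph{wrong way} once the first step is corrected to $\ge$: you end up with $b_{N+M}\ge(\text{something})\le b_N+b_M$, which concludes nothing. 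The paper's fix is to work instead with the boundary-state-aware quantity $\hat{\calH}_N \triangleq H(\vecX\mid\vecY,S_0,S_N)$; conditioning on the middle state $S_{N}$ restores the conditional independence of the two blocks, so $\hat{\calH}_N$ is genuinely superadditive. One then sandwiches $\hat{\calH}_N \le \calH_N \le \hat{\calH}_N + 2\log_2|\calS|$ to transfer the limit back to $\calH_N = H(\vecX\mid\vecY)$. Your outline is missing this boundary-state idea, and without it the subadditivity/superadditivity argument does not close for non-i.i.d.\ inputs.
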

\begin{IEEEproof}
The proof of this lemma is detailed below for uniform inputs in Section~\ref{sec:info_rate_lemma_uniform} and hidden-Markov inputs in Section~\ref{sec:info_rate_lemma_hm}.
\end{IEEEproof}

Once the limits in~\eqref{eq:twoLimits} are established, the limit in~(\ref{eq:informationRate}) follows because
\[
    \frac{I(\vecX;\vecY)}{N} = \frac{H(\vecX)}{N} - \frac{H(\vecX|\vecY)}{N} \; .
\]

\subsection{Uniform input} \label{sec:info_rate_lemma_uniform}
In this subsection, we prove Lemma~\ref{lemm:twoLimitsExist}, for the restricted case in which the input distribution is i.i.d.\ and uniform. 
\begin{IEEEproof}[Proof of Lemma~\ref{lemm:twoLimitsExist} for Uniform Inputs]
In such a setting, the first limit in (\ref{eq:twoLimits}) clearly exists and equals $1$. To prove the second limit in (\ref{eq:twoLimits}), let us first define
\begin{equation}
    \label{eq:calH}
    \calH_N = H(\vecX|\vecY) \; , \quad |\vecX| = N \; .
\end{equation}
Our plan is to show that the sequence $\calH_N$ is superadditive, implying \cite[Lemma 1.2.1, page 3]{Steele:97b} the existence of the second limit in (\ref{eq:twoLimits}). Indeed, let $N_1$ and $N_2$ be given, and let $\vecX$ and $\vecX'$ be distributed according to the input distribution, and having lengths $N_1$ and $N_2$, respectively. Denote the outputs corresponding to to $\vecX$ and $\vecX'$ by $\vecY$ and $\vecY'$, respectively. We have
\begin{IEEEeqnarray*}{rCl}
    \calH_{N_1 + N_2} &=& H(\vecX \concat \vecX'| \vecY \concat \vecY') \\
                      &\eqann[=]{a}& H(\vecX, \vecX'| \vecY \concat \vecY') \\
                      &\geq& H( \vecX, \vecX'| \vecY \concat \vecY', \vecY, \vecY') \\
                      & \eqann{b} & H(\vecX,\vecX'| \vecY, \vecY') \\
                      & \eqann{c} & H(\vecX| \vecY, \vecY') + H(\vecX'|\vecX, \vecY, \vecY') \\
                      & \eqann{d} & H(\vecX| \vecY) + H(\vecX'|\vecY') \\
                      & = & \calH_{N_1} + \calH_{N_2} \; ,
\end{IEEEeqnarray*}
where \eqannref{a} holds because $N_1$ and $N_2$, the lengths of $\vecX$ and $\vecX'$, respectively, are constant parameters; \eqannref{b} holds because $ \vecY \concat \vecY'$ is a function of $\vecY$ and $\vecY'$; \eqannref{c} follows by the chain rule; \eqannref{d} holds because, for the i.i.d.\ uniform input distribution, the pair $(\vecX,\vecY)$ is independent of the pair $(\vecX',\vecY')$.
Hence, the sequence $\calH_N$ is indeed superadditive.
\end{IEEEproof}

\subsection{Hidden-Markov input} \label{sec:info_rate_lemma_hm}

We now prove Lemma~\ref{lemm:twoLimitsExist} for the case where the input distribution is a regular hidden-Markov process. Since now $\calH_N$ is not generally superadditive, we will take an indirect route to prove Lemma~\ref{lemm:twoLimitsExist}. Indeed, the following lemma is proved by defining a related quantity, $\hat{\calH}_N$, which is superadditive.
\begin{lemm}
    \label{lemm:HHatLimitExists}
    Fix a regular hidden-Markov input distribution.  For a given $N$, let $\vecX = (X_1,X_2,\ldots,X_N)$ be a random vector with the above distribution. Let $\vecY$ be the result of passing $\vecX$ through a deletion channel with deletion probability $\delta$. Then, the following limit exists:
\begin{equation}
    \label{eq:HHatLimit}
    \lim_{N \to \infty} \frac{H(\vecX|\vecY,S_0,S_N)}{N} \; .
\end{equation}
\end{lemm}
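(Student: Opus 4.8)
The plan is to re-run the superadditivity argument of Section~\ref{sec:info_rate_lemma_uniform}, but applied to the boundary-conditioned quantity $\hat{\calH}_N \triangleq H(\vecX \mid \vecY, S_0, S_N)$ rather than to $\calH_N$ itself. Conditioning on the two end states is exactly what restores the block independence that the i.i.d.\ case got for free: given the middle state, the Markov property makes the past and future halves of the input--output process conditionally independent. Once $\hat{\calH}_N$ is shown to be superadditive, and since $0 \leq \hat{\calH}_N \leq H(\vecX) = O(N)$ is bounded, the superadditive version of \cite[Lemma 1.2.1]{Steele:97b} (Fekete's lemma) gives the limit in~\eqref{eq:HHatLimit}. (Separately, and not needed here, combining this with $0 \le H(\vecX\mid\vecY) - \hat\calH_N \le 2\log|\calS|$ will later recover Lemma~\ref{lemm:twoLimitsExist}.)

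To set up superadditivity, fix $N_1,N_2 \geq 1$ and decompose a length-$(N_1+N_2)$ block as $\vecX = \vecX_1 \concat \vecX_2$ with $|\vecX_1| = N_1$, $|\vecX_2| = N_2$; write $S_0$, $S_{N_1}$, $S_{N_1+N_2}$ for the input-process state at the start, at the split, and at the end of the block. Since the deletion channel deletes each coordinate independently, the output factors as $\vecY = \vecY_1 \concat \vecY_2$, where $\vecY_1$ is the deletion image of $\vecX_1$ and $\vecY_2$ that of $\vecX_2$. Using that $\vecY$ is a deterministic function of $(\vecY_1,\vecY_2)$, that extra conditioning (on the finer data $\vecY_1,\vecY_2$ and on $S_{N_1}$) cannot increase entropy, and then the chain rule,
\begin{align*}
\hat{\calH}_{N_1+N_2} &= H(\vecX_1,\vecX_2 \mid \vecY, S_0, S_{N_1+N_2}) \\
&\geq H(\vecX_1,\vecX_2 \mid \vecY_1,\vecY_2, S_0, S_{N_1}, S_{N_1+N_2}) \\
&= H(\vecX_1 \mid \vecY_1,\vecY_2, S_0, S_{N_1}, S_{N_1+N_2}) \\
&\quad {}+ H(\vecX_2 \mid \vecX_1, \vecY_1,\vecY_2, S_0, S_{N_1}, S_{N_1+N_2}).
\end{align*}

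The crux is the Markov decoupling at time $N_1$: conditioned on $S_{N_1}$, the triple $(\vecX_1,\vecY_1,S_0)$ is independent of $(\vecX_2,\vecY_2,S_{N_1+N_2})$ --- the state chain is Markov, so its past and future are conditionally independent given $S_{N_1}$, and the deletion noise on the two halves is independent and independent of the input process. Hence the first term collapses to $H(\vecX_1 \mid \vecY_1, S_0, S_{N_1})$, and in the second term $(\vecX_1,\vecY_1,S_0)$ drops out of the conditioning, leaving $H(\vecX_2 \mid \vecY_2, S_{N_1}, S_{N_1+N_2})$. Strict stationarity of $(S_j,X_j)$ (and of the deletion channel) identifies these with $\hat{\calH}_{N_1}$ and $\hat{\calH}_{N_2}$, so $\hat{\calH}_{N_1+N_2} \geq \hat{\calH}_{N_1} + \hat{\calH}_{N_2}$, as desired.

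The step I expect to need the most care is the conditional-independence claim $(\vecX_1,\vecY_1,S_0) \perp (\vecX_2,\vecY_2,S_{N_1+N_2}) \mid S_{N_1}$: it hinges on the channel randomness being generated coordinatewise and independently of the input process, so that cutting the codeword at position $N_1$ also cleanly cuts $\vecY$ into $\vecY_1 \concat \vecY_2$, after which the Markov property of $(S_j)$ can be invoked at the single epoch $N_1$. The remaining manipulations --- monotonicity of conditional entropy under extra conditioning, the chain rule, and the stationarity relabeling --- are routine.
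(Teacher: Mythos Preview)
Your proof is correct and follows essentially the same route as the paper: define $\hat{\calH}_N = H(\vecX\mid\vecY,S_0,S_N)$, establish superadditivity via the chain $\vecY\concat\vecY'$ is a function of $(\vecY,\vecY')$, add conditioning on the middle state $S_{N_1}$, apply the chain rule, invoke the Markov conditional-independence at $S_{N_1}$, and finish with stationarity and Fekete's lemma. The only cosmetic addition is that you explicitly note the $O(N)$ upper bound on $\hat{\calH}_N$ needed for the superadditive limit to be finite, which the paper leaves implicit.
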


\begin{IEEEproof}
Define
\begin{equation}
    \hat{\calH}_N = H(\vecX|\vecY,S_0,S_N) \; , \quad |\vecX| = N \; . \label{eq:Hhat}
\end{equation}
To borrow the terminology of \cite{Shuval_Tal_Memory_2017}, the above defines the \emph{boundary-state-aware entropy}. Note that $S_0$ and $S_N$ are the states just before transmission has started, and just after transmission has ended, respectively.

We now show that $\hat{\calH}_N$ is superadditive. Indeed, let $\vecX$ and $\vecX'$ be consecutive input vectors of length $N_1$ and $N_2$, respectively. That is, $\vecX \concat \vecX'$ is a vector of length $N_1+N_2$ drawn from the input distribution. Denote by $\vecY$ and $\vecY'$ the output vectors corresponding to $\vecX$ and $\vecX'$, respectively. Then,
\begin{IEEEeqnarray*}{rCl}
    \hat{\calH}_{N_1 + N_2} &=& H(\vecX \concat \vecX'| \vecY \concat \vecY', S_0, S_{N_1+N_2}) \\
    &\eqann[=]{a}& H(\vecX,\vecX'| \vecY \concat \vecY', S_0, S_{N_1+N_2}) \\
                      & \eqann[\geq]{b} & H(\vecX,\vecX'| \vecY, \vecY', S_0, S_{N_1+N_2}) \\
                      & \geq & H(\vecX,\vecX'| \vecY, \vecY', S_0, S_{N_1}, S_{N_1+N_2}) \\
                      & \eqann{c} & H(\vecX| \vecY, \vecY', S_0, S_{N_1}, S_{N_1+N_2}) \\
                      & & + H(\vecX'|\vecX, \vecY, \vecY', S_0, S_{N_1}, S_{N_1+N_2}) \\
                      & \eqann{d} & H(\vecX| \vecY,S_0,S_{N_1}) + H(\vecX'|\vecY',S_{N_1},S_{N_1+N_2}) \\
                      & = & \hat{\calH}_{N_1} + \hat{\calH}_{N_2} \; ,
\end{IEEEeqnarray*}
where \eqannref{a} holds because $N_1$ and $N_2$, the lengths of $\vecX$ and $\vecX'$, respectively, are constant parameters; \eqannref{b} holds because $ \vecY \concat \vecY'$ is a function of $\vecY$ and $\vecY'$; \eqannref{c} follows by the chain rule; \eqannref{d} holds because of conditional independence: given $S_{N_1}$, $(\vecX, \vecY,S_0)$ is independent of $(\vecX',\vecY',S_{N_1+N_2})$.
Hence, the sequence $\hat{\calH}_N$ is indeed superadditive, and the following limit exists by \cite[Lemma 1.2.1, page 3]{Steele:97b},
\[
    \lim_{N \to \infty} \frac{\hat{\calH}_N}{N} \; .
\]
\end{IEEEproof}

All that remains now is to account for the difference in the entropies of $\calH_N$ and $\hat{\calH}_N$, incurred by conditioning on $S_0$ and $S_N$. As will be made clear in the following proof, this difference can be bounded by a constant, and hence vanishes when we divide by $N$.

\begin{IEEEproof}[Proof of Lemma~\ref{lemm:twoLimitsExist} for hidden-Markov inputs]
    We first note that the existence of the second limit in (\ref{eq:twoLimits}) implies the existence of the first limit. Indeed, taking the deletion probability $\delta$ equal to $1$ makes the second limit equal the first. Hence, all that remains is to prove the existence of the second limit.
    
    To show that the second limit in (\ref{eq:twoLimits}) exists, note that, for $|\vecX| = N$, we have on the one hand that
\begin{IEEEeqnarray*}{rCl}
    H(\vecX,S_0,S_N|\vecY) & = & H(\vecX|\vecY) + H(S_0,S_N|\vecX,\vecY) \\
                           & \geq & H(\vecX|\vecY) \\
                           & = & \calH_N \; ,
\end{IEEEeqnarray*}
and on the other hand that
\begin{IEEEeqnarray*}{rCl}
    H(\vecX,S_0,S_N|\vecY) & = & H(S_0,S_N|\vecY) + H(\vecX|\vecY,S_0,S_N) \\
                           & \leq &  2 \log_2 |\calS| +  H(\vecX|\vecY,S_0,S_N) \\
                           & = &  2 \log_2 |\calS| + \hat{\calH}_N  \; .
\end{IEEEeqnarray*}
Thus,
\[
 \calH_N \leq \hat{\calH}_N + 2 \log_2 |\calS|   \; .
\]
Since it is easily seen that $\hat{\calH}_N \leq \calH_N$, we have that
\begin{equation}
    \label{eq:preLimit}
    \frac{\hat{\calH}_N}{N} \leq \frac{\calH_N}{N} \leq \frac{\hat{\calH}_N}{N} + \frac{2 \log_2 |\calS|}{N} \; .
\end{equation}
We have already proved that the limit of the LHS of (\ref{eq:preLimit}) exists, in Lemma~\ref{lemm:HHatLimitExists}. Since the limit of $(2 \log_2 |\calS|)/N$ is $0$, the limit of the RHS of (\ref{eq:preLimit}) exists and equals that of the LHS. By the sandwich property, the limit of the middle term exists as well, which is the desired result. 
\end{IEEEproof}

We finish by restating the last part of the proof as a lemma.
\begin{lemm}
    \label{lemm:HatNoHatLimitsEqual}
    Fix a hidden-Markov input distribution.  For a given $N$, let $\vecX = (X_1,X_2,\ldots,X_N)$ be a random vector with the above distribution. Let $\vecY$ be the result of passing $\vecX$ through a deletion channel with deletion probability $\delta$. Then,
\begin{equation}
    \label{eq:HatNoHatLimitsEqual}
    \lim_{N \to \infty} \frac{H(\vecX|\vecY,S_0,S_N)}{N} = \lim_{N \to \infty} \frac{H(\vecX|\vecY)}{N} \; .
\end{equation}
\end{lemm}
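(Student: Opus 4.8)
The plan is to recognize that this equality was essentially already derived inside the proof of Lemma~\ref{lemm:twoLimitsExist} for hidden-Markov inputs, so I would simply isolate and repackage that chain of inequalities. Write $\calH_N = H(\vecX \mid \vecY)$ and $\hat{\calH}_N = H(\vecX \mid \vecY, S_0, S_N)$ as in (\ref{eq:calH}) and (\ref{eq:Hhat}), with $|\vecX| = N$.

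First I would record the two-sided bound relating $\calH_N$ and $\hat{\calH}_N$. On one side, since conditioning cannot increase entropy, $\hat{\calH}_N \le \calH_N$. On the other side, expand $H(\vecX, S_0, S_N \mid \vecY)$ by the chain rule in two ways: it equals $\calH_N + H(S_0, S_N \mid \vecX, \vecY) \ge \calH_N$, and it also equals $H(S_0, S_N \mid \vecY) + \hat{\calH}_N \le 2\log_2 |\calS| + \hat{\calH}_N$, where the last step uses that $(S_0, S_N)$ is a pair of $\calS$-valued random variables. Combining the two yields $\calH_N \le \hat{\calH}_N + 2\log_2 |\calS|$, hence the sandwich
\[
    \frac{\hat{\calH}_N}{N} \le \frac{\calH_N}{N} \le \frac{\hat{\calH}_N}{N} + \frac{2\log_2 |\calS|}{N} \; .
\]
The limit of the leftmost term exists by Lemma~\ref{lemm:HHatLimitExists} (superadditivity of the boundary-state-aware entropy $\hat{\calH}_N$), and the additive term $2\log_2|\calS|/N \to 0$, so the rightmost term converges to the same value; by the squeeze theorem $\calH_N/N$ converges and to exactly that value, which is the claim (\ref{eq:HatNoHatLimitsEqual}).

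I do not anticipate a genuine obstacle here: the only ingredients are that conditioning does not increase entropy, that a pair of $\calS$-valued variables contributes at most $2\log_2|\calS|$ bits, and the already-established existence of $\lim_N \hat{\calH}_N/N$. The substantive work lies upstream in Lemma~\ref{lemm:HHatLimitExists}; this lemma is merely the cleanup step that transfers that conclusion from the boundary-state-aware entropy to the ordinary conditional entropy, and the same sandwich argument could equally be read off from (\ref{eq:preLimit}).
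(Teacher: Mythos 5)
Your proposal is correct and follows essentially the same route as the paper: the paper explicitly obtains this lemma by restating the sandwich argument from the hidden-Markov case of Lemma~\ref{lemm:twoLimitsExist}, i.e., the two-way chain-rule expansion of $H(\vecX,S_0,S_N\mid\vecY)$ giving $\hat{\calH}_N \le \calH_N \le \hat{\calH}_N + 2\log_2|\calS|$, followed by division by $N$ and the squeeze theorem using Lemma~\ref{lemm:HHatLimitExists}. No gaps.
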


\section{Weak polarization}
\label{sec:weak}

In this section, we prove weak polarization for both the deletion channel and the trimmed deletion channel, as defined in Subsection~\ref{subsec:TrellisTDC}. As in \cite{Arikan_2009}, we will first prove that a certain process is submartingale, and then prove that it either converges to $0$ or to $1$.

As a first step, we will shortly define three entropies. These are defined with respect to an input $\vecX$ of length $N=2^n$, which has a regular hidden-Markov input distribution, and $\vecU= \calA_n (\vecX)$. The corresponding output is denoted $\vecY$. Recall that $S_0$ and $S_N$ are the (hidden) states of the input process, just before $\vecX$ is transmitted and right after $\vecX$ is transmitted, respectively. Lastly, denote by $\vecY^\DPT$ the result of trimming all leading and trailing `$0$' symbols from $\vecY$. Then, for a given $n$ and $1 \leq i \leq N = 2^n$, define
the following (deterministic) entropies:
\begin{IEEEeqnarray}{rCl}
    h_i &=& H(U_i|U_1^{i-1},\vecY) \; , \label{eq:hi} \\
    \hat{h}_i &=& H(U_i|U_1^{i-1},S_0,S_N,\vecY) \; , \label{eq:hathi}\\
    h_i^\DPT &=& H(U_i|U_1^{i-1},\vecY^\DPT) \label{eq:hiTDC} \; .
\end{IEEEeqnarray}
Clearly,
\[
    h_i^\DPT \geq h_i \geq \hat{h}_i \; .
\]
Note that in the case of a uniform input distribution, there is only one state, and hence $h_i$ and $\hat{h}_i$ are equal.

Following \cite{Arikan_2009}, we show weak polarization by considering a sequence $B_1,B_2,\ldots$ of i.i.d.\ $\mathrm{Ber}(1/2)$ random variables. For any $n\in \mathbb{N}$, let $J_n = i(B_1,B_2,\ldots,B_{n})$ be the random index defined by (\ref{eq:bitReversedI}), with $B_t$ in place of $b_t$.
We will study the three related random processes defined for $n\in \mathbb{N}$ by
\begin{IEEEeqnarray}{rCl}
    H_n  &=& h_{J_n} \; , \label{eq:Hn} \\
    \hat{H}_n     &=& \hat{h}_{J_n} \; , \label{eq:hatHn} \\
    H^\DPT_{n} &=& h_{J_n}^\DPT \; . \label{eq:HnTDC}
\end{IEEEeqnarray}
The arguments below will show that $\hat{H}_n$ is a submartingale, converging to either $0$ or $1$. From this we will infer that $H_n$ and $H^\DPT_{n}$ must converge to either $0$ or $1$ as well. Though neither $H_n$ nor $H^\DPT_{n}$ are necessarily submartingales.

\begin{theo} \label{thm:hatH_polarizes}
The sequence $\hat{H}_n$ converges (almost surely and in $L^1$) to a well-defined random variable $\hat{H}_\infty\ \in \{0,1\}$ and, for any $\epsilon > 0$, it follows that 
\begin{align}
\frac{1}{N} \left| \left\{ i \in [N] \, | \, H(U_i|U_1^{i-1},S_0,S_N,\vecY) \in [\epsilon, 1-\epsilon] \right\} \right| & \to  0. 
\end{align}
\end{theo}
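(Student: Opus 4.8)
The plan is to follow \Arikan's classical two-step template for weak polarization, as adapted to channels with memory in \cite{SasogluTal:18a,Shuval_Tal_Memory_2017}. First I would establish that $\hat{H}_n$ is a bounded submartingale with respect to the filtration generated by $B_1,B_2,\ldots,B_n$. Boundedness is immediate since $\hat{h}_i \in [0,1]$ (these are conditional entropies of a single binary symbol). For the submartingale property, I would use the recursive structure from Lemma~\ref{lemm:recursiveT}: conditioned on $B_1,\ldots,B_n$, the index $J_n$ is determined, and $J_{n+1}$ refines it by one more bit $B_{n+1}$, which is $0$ or $1$ with probability $1/2$ each. The key algebraic identity is the standard conservation law for the Ar\i{}kan transform: for the pair of synthesized coordinates obtained from two independent copies, the sum of the two conditional entropies equals twice the original (the minus channel raises entropy, the plus channel lowers it, and they average out). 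Concretely, writing $\hat{h}$ for the boundary-state-aware conditional entropy, one has $\hat{h}_{2i-1}^{(n+1)} + \hat{h}_{2i}^{(n+1)} \ge 2\hat{h}_i^{(n)}$ after accounting for the extra conditioning on intermediate boundary states --- the inequality (rather than equality) arises precisely because conditioning on $S_0,S_N$ but not on the intermediate state can only help. Taking expectations over $B_{n+1}$ gives $\mathbb{E}[\hat{H}_{n+1}\mid B_1,\ldots,B_n] \ge \hat{H}_n$, establishing the submartingale property.

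Second, by the martingale convergence theorem, $\hat{H}_n$ converges almost surely and in $L^1$ to a limit random variable $\hat{H}_\infty$. To show $\hat{H}_\infty \in \{0,1\}$, I would argue that $\mathbb{E}[\hat{H}_{n+1} - \hat{H}_n] \to 0$ (since the sequence of expectations is bounded and monotone), so the expected one-step increment vanishes; but the one-step increment can be bounded below by a continuous function of $\hat{H}_n$ that is strictly positive whenever $\hat{H}_n$ is bounded away from $0$ and $1$. This is where the ``rate of polarization'' estimate enters: one needs a bound of the form $\mathbb{E}[|\hat{H}_{n+1}-\hat{H}_n| \mid B_1,\ldots,B_n] \ge \kappa(\hat{H}_n)$ with $\kappa$ continuous, $\kappa(0)=\kappa(1)=0$, and $\kappa(t)>0$ for $t\in(0,1)$. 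Such a bound for the deletion-channel trellis setting should follow from the memory-channel polarization machinery of \cite{Shuval_Tal_Memory_2017}: the essential point is that if the minus-transform entropy does not strictly exceed the original, then the underlying synthesized channel is essentially deterministic or essentially noiseless, pinning $\hat{H}_n$ near an endpoint. Then $\hat{H}_\infty(1-\hat{H}_\infty)=0$ almost surely, i.e., $\hat{H}_\infty\in\{0,1\}$.

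Third, the displayed statement about the fraction of indices with $\hat{h}_i \in [\epsilon,1-\epsilon]$ is a direct translation of $L^1$ convergence to a $\{0,1\}$-valued limit. Since $J_n$ is uniform on $[N]$ (because $B_1,\ldots,B_n$ are i.i.d.\ $\mathrm{Ber}(1/2)$ and $i(\cdot)$ is a bijection onto $[N]$), we have
\[
    \frac{1}{N}\left|\left\{i\in[N] : \hat{h}_i \in [\epsilon,1-\epsilon]\right\}\right| = \Pr\!\left(\hat{H}_n \in [\epsilon,1-\epsilon]\right).
\]
Because $\hat{H}_n \to \hat{H}_\infty \in \{0,1\}$ almost surely, and the event $\{\hat{H}_n\in[\epsilon,1-\epsilon]\}$ forces $|\hat{H}_n - \hat{H}_\infty| \ge \epsilon$ for all large $n$ on a set of vanishing probability, the right-hand side tends to $0$ by dominated convergence (or directly by Markov's inequality applied to $|\hat{H}_n - \hat{H}_\infty|$ together with $L^1$ convergence).

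The main obstacle I anticipate is not the martingale bookkeeping but the ``strict improvement'' step: verifying that in the trellis formulation of the deletion channel, the minus transform strictly increases the boundary-state-aware conditional entropy unless that entropy is already near $0$ or $1$, with a quantitative modulus $\kappa$ that is uniform in $n$. This requires relating the path-sum functions $T^{[b_1,\ldots,b_\lambda,0]}$ and $T^{[b_1,\ldots,b_\lambda]}$ established in Lemma~\ref{lemm:recursiveT} to a genuine conditional-entropy inequality, and then invoking (or re-deriving) the quantitative polarization lemma of \cite{Shuval_Tal_Memory_2017} in a form that applies here. The conditioning on $S_0,S_N$ is exactly the device that makes the two independent-blocks decomposition clean enough for this argument to go through, which is why the theorem is phrased in terms of $\hat{H}_n$ rather than $H_n$; transferring the conclusion to $H_n$ and $H_n^\DPT$ is deferred to subsequent results that exploit $h_i^\DPT \ge h_i \ge \hat{h}_i$ and the sandwich bound \eqref{eq:preLimit}.
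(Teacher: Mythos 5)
Your overall skeleton matches the paper's: bounded submartingale (the paper's Lemma~\ref{lemm:subMartingale}), martingale convergence, a strict-improvement argument forcing $\hat{H}_\infty \in \{0,1\}$, and the translation of the index-fraction into $\Pr\big(\hat{H}_n \in [\epsilon,1-\epsilon]\big)$ via the uniformity of $J_n$ (your Markov-inequality finish for that last step is fine, and arguably simpler than the paper's continuity-point argument). However, there is a genuine gap at exactly the point you flag as the ``main obstacle'': you assert the strict improvement of the minus transform ``should follow from the memory-channel polarization machinery of \cite{Shuval_Tal_Memory_2017},'' but that machinery does not apply off the shelf here, because (as the paper itself notes) the joint input--output process of the deletion channel is neither FAIM nor FAIM-derived -- the output length is random and block boundaries are lost in $\vecY \concat \vecY'$. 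The paper's actual mechanism is bespoke: (i) Lemma~\ref{lemm:allTripletsProbable} uses regularity to guarantee every state triplet $(s_0,s_N,s_{2N})$ has probability at least $(\pimin)^3/2$ for large $N$; (ii) conditioning on $S_N=s_N$ makes the two halves conditionally independent, so the per-triplet quantities $\alpha$ and $\gamma$ of Lemma~\ref{lemm:alphaBetaConnection} are entropies of a modulo-$2$ sum of (conditionally) independent bits; and (iii) Lemma~\ref{lemm:alphaSubstantiallyLargerThanBeta} runs an averaging/contradiction argument to exhibit a triplet whose two conditional entropies $\gamma(s_0,s_N)$ and $\gamma(s_N,s_{2N})$ straddle $\hat{H}_n$, so that Mrs.\ Gerber's lemma in the two-argument form of \cite[Lemma 11]{SasogluTal:18a} yields a gain $\Delta'$ which survives the averaging over states because of (i). Without this triplet device, your claim that failure of strict improvement ``pins $\hat{H}_n$ near an endpoint'' is a restatement of the desired conclusion, not an argument: with memory, even after conditioning on $S_0,S_{2N}$, the two halves' conditional entropies need not individually equal $\hat{H}_n$, which is precisely the difficulty the paper's Lemma~\ref{lemm:alphaSubstantiallyLargerThanBeta} is built to overcome.

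Two smaller points. In the submartingale step, the inequality does not come only from omitting the intermediate state $S_N$; an equally essential source is that the observed output is the concatenation $\vecY \concat \vecY'$ rather than the pair $(\vecY,\vecY')$ (the deletion channel erases block boundaries), which is inequality (a) in the paper's proof of Lemma~\ref{lemm:subMartingale}. Also, your deduction ``$E(\hat{H}_{n+1}-\hat{H}_n)\to 0$ since expectations are bounded and monotone'' concerns the signed increment, while your lower bound $\kappa(\hat{H}_n)$ is on the conditional \emph{absolute} increment; the correct bridge (as in the paper) is that almost-sure convergence forces $|\hat{H}_{n+1}-\hat{H}_n|\to 0$ a.s., which is incompatible with the event $\{\hat{H}_{n+1}-\hat{H}_n > \Delta\}$ having conditional probability $\tfrac{1}{2}$ infinitely often on $\{\hat{H}_\infty\in(0,1)\}$. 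These two slips are repairable; the missing triplet argument is the substantive gap.
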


\begin{IEEEproof}
Lemma~\ref{lemm:subMartingale} below shows that $\hat{H}_1,\hat{H}_2,\hat{H}_3,\ldots \in [0,1]$ is a bounded submartingale with respect to $J_n$.
This implies that the sequence $\hat{H}_n$ converges (almost surely and in $L^1$) to a limit that is denoted by $\hat{H}_\infty$~\cite[p.~236]{Durrett-2019}.
Lemma~\ref{lemm:weakPolarizationFAIM} below shows that, for any $\epsilon >0$, there is a $\Delta > 0$ such that $\hat{H}_n \in [\epsilon,1-\epsilon]$ implies $\hat{H}_{n+1} > \hat{H}_n + \Delta$ with probability $\frac{1}{2}$.
Thus, the sequence $\hat{H}_n$ cannot converge to the set $(0,1)$ and hence $\hat{H}_\infty \in \{0,1\}$.

From~\eqref{eq:hathi} and~\eqref{eq:hatHn}, we see that $\Pr \left( \hat{H}_n \in [\epsilon,1-\epsilon] \right)$ equals
\[ \frac{1}{N} \left| \left\{ i \in [N] \, | \, H(U_i|U_1^{i-1},S_0,S_N,\vecY) \in [\epsilon, 1-\epsilon] \right\} \right|. \]
Since $\hat{H}_n$ converges almost surely to $\hat{H}_\infty$ and $\epsilon,1-\epsilon$ are continuity points of $\Pr(\hat{H}_\infty \leq x)$~\cite[Ch.~4]{Durrett-2019}, it follows that
\[ \lim_{n\to \infty} \Pr \left(\hat{H}_n \in [\epsilon,1-\epsilon] \right) =  \Pr \left( \hat{H}_\infty \in [\epsilon,1-\epsilon] \right) = 0. \]
This completes the proof.
\end{IEEEproof}

\begin{lemm}
    \label{lemm:subMartingale}
    For a hidden-Markov input distribution and a deletion channel with deletion probability $\delta$, let $\hat{H}_n$ and $J_n$ be as defined above. Then, the sequence $\hat{H}_1,\hat{H}_2,\hat{H}_3,\ldots$ is a bounded submartingale with respect to the $J_1,J_2,J_3,\ldots$ sequence.
\end{lemm}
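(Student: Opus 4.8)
The plan is to reduce the statement to a single-step polarization inequality and to prove that inequality using the two-block recursion of the Ar\i{}kan transform together with the Markov property of the hidden-state chain. Throughout, write $\hat{h}^{(N)}_i$ for the quantity $\hat{h}_i$ of \eqref{eq:hathi}, to record the dependence on the block length $N=2^n$. Boundedness and adaptedness are immediate: $\hat{H}_n=\hat{h}^{(N)}_{J_n}$ is a deterministic function of $J_n$, hence measurable with respect to $\sigma(J_1,\dots,J_n)$, and $0\le\hat{h}^{(N)}_i\le 1$ because $U_i$ is binary. So the only thing to prove is $\mathbb{E}[\hat{H}_{n+1}\mid J_1,\dots,J_n]\ge\hat{H}_n$.

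From \eqref{eq:bitReversedI} one has $i(B_1,\dots,B_{n+1})=2\,i(B_1,\dots,B_n)-1+B_{n+1}$, so, conditioned on $J_1,\dots,J_n$ (equivalently on $B_1,\dots,B_n$), the index $J_{n+1}$ equals $2J_n-1$ or $2J_n$, each with probability $\tfrac{1}{2}$, since $B_{n+1}$ is an independent fair coin. Hence $\mathbb{E}[\hat{H}_{n+1}\mid J_1,\dots,J_n]=\tfrac12\big(\hat{h}^{(2N)}_{2J_n-1}+\hat{h}^{(2N)}_{2J_n}\big)$ with $N=2^n$, and it suffices to show, for each $i\in[2^n]$,
\[
    \hat{h}^{(2N)}_{2i-1}+\hat{h}^{(2N)}_{2i}\;\ge\;2\,\hat{h}^{(N)}_i .
\]

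For this, pass to the block-length-$2N$ setting: a regular hidden-Markov input $\vecX\in\calX^{2N}$ with boundary states $S_0,S_{2N}$ and midpoint state $S_N$, its deletion-channel output $\vecY$, and $\vecU=\calA_{n+1}(\vecX)$. Split $\vecX$ into its two length-$N$ halves $\vecX^{(1)}$ and $\vecX^{(2)}$; let $\vecY^{(1)}$ and $\vecY^{(2)}$ be the deletion-channel outputs of $\vecX^{(1)}$ and $\vecX^{(2)}$ \emph{separately}, so that $\vecY=\vecY^{(1)}\concat\vecY^{(2)}$, and set $\vecU^{(1)}=\calA_n(\vecX^{(1)})$, $\vecU^{(2)}=\calA_n(\vecX^{(2)})$. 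The two-block recursion of the Ar\i{}kan transform (a consequence of \eqref{eq:vecxZero}--\eqref{eq:recursiveTransformDefinition}, by induction on $n$) gives $U_{2i-1}=U^{(1)}_i\oplus U^{(2)}_i$ and $U_{2i}=U^{(2)}_i$ for every $i\in[N]$; in particular $(U_1^{2i-2},U_{2i-1},U_{2i})$ is an invertible linear image of the pair consisting of the length-$i$ prefix of $\vecU^{(1)}$ and the length-$i$ prefix of $\vecU^{(2)}$. Now I would chain five steps. (i) Conditioning each of the two left-hand terms additionally on $S_N$ only decreases it, and the chain rule then combines them into $H(U_{2i-1},U_{2i}\mid U_1^{2i-2},\vecY,S_0,S_N,S_{2N})$. (ii) Replacing the observation $\vecY$ by the finer pair $(\vecY^{(1)},\vecY^{(2)})$ decreases this further. (iii) The invertible change of variables above rewrites it as $H\big(U^{(1)}_i,U^{(2)}_i\mid \vecU^{(1)}_{<i},\vecU^{(2)}_{<i},\vecY^{(1)},\vecY^{(2)},S_0,S_N,S_{2N}\big)$, where $\vecU^{(c)}_{<i}$ ($c\in\{1,2\}$) denotes the first $i-1$ components of $\vecU^{(c)}$. (iv) By the Markov property of the state chain together with the independence of the deletion noise on the two halves, conditioning on $S_N$ renders $(\vecU^{(1)},S_0,\vecY^{(1)})$ independent of $(\vecU^{(2)},S_{2N},\vecY^{(2)})$, so the joint entropy splits into $H\big(U^{(1)}_i\mid \vecU^{(1)}_{<i},\vecY^{(1)},S_0,S_N\big)+H\big(U^{(2)}_i\mid \vecU^{(2)}_{<i},\vecY^{(2)},S_N,S_{2N}\big)$. (v) By strict stationarity of the hidden-Markov process, $(\vecX^{(1)},S_0,S_N,\vecY^{(1)})$ and $(\vecX^{(2)},S_N,S_{2N},\vecY^{(2)})$ each have the same joint law as the length-$N$ tuple $(\vecX,S_0,S_N,\vecY)$, so both terms equal $\hat{h}^{(N)}_i$, and the sum is $2\,\hat{h}^{(N)}_i$, as required.

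The deletion channel enters the argument only through elementary facts --- the output of $\vecX^{(1)}\concat\vecX^{(2)}$ is $\vecY^{(1)}\concat\vecY^{(2)}$, the deletion noise on the two halves is independent, and $\vecY$ is a deterministic function of $(\vecY^{(1)},\vecY^{(2)})$ --- so the bookkeeping mirrors that of the channels-with-memory arguments in \cite{SasogluTal:18a,Shuval_Tal_Memory_2017}. I expect the main obstacle to be step~(iv): one must phrase the conditional independence precisely, noting that the only coupling between the two halves runs through the shared state chain and is severed by conditioning on $S_N$, and then match the two resulting length-$N$ entropies to $\hat{h}^{(N)}_i$ --- for the second half this uses stationarity of the state chain, so that $(\vecX^{(2)},S_N,S_{2N})$ is distributed like $(\vecX,S_0,S_N)$. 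Finally, it is instructive to locate the only inequality in the chain --- step~(ii) --- which reflects that the decoder sees the concatenation $\vecY$ but not the split point between $\vecY^{(1)}$ and $\vecY^{(2)}$: were that split point revealed, the polar conservation identity would hold with equality and $\hat{H}_n$ would be a martingale rather than merely a submartingale.
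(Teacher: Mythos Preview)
Your proof is correct and follows essentially the same route as the paper's: combine the two entropies by the chain rule, refine the conditioning by revealing the split $(\vecY^{(1)},\vecY^{(2)})$ and the midpoint state $S_N$, then use the Markov property and stationarity to split the joint entropy and identify each half with $\hat{h}^{(N)}_i$. One small correction to your closing remark: step~(ii) is not the \emph{only} inequality---step~(i), conditioning on $S_N$, is also a genuine inequality, so even with the split point revealed the process need not be an exact martingale.
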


\begin{IEEEproof}
    Since $\hat{H}_n$ is clearly bounded between $0$ and $1$, it remains to show that $E(\hat{H}_{n+1}|J_1,J_2,\ldots,J_n) \geq  \hat{H}_n$. Let $\vecX \concat \vecX'$ be a length-$2N$ input to the channel. Denote by $\vecY \concat \vecY'$ the corresponding output, where $\vecY$ only contains inputs from $\vecX$ and $\vecY'$ only contains inputs from $\vecX'$. 
    Recall that $\vecU = \calA_{n}(\vecX)$ and define $\vecV = \calA_n(\vecX')$ and
    \[ \vecF = (U_1 \xor V_1, V_1,U_2 \xor  V_2, V_2,\ldots,U_N \xor V_N,V_N). \]
    By (\ref{eq:bitReversedI}), we have that $J_{n+1} = 2J_n -1$ with probability $1/2$ and $J_{n+1} = 2J_n$ with probability $1/2$. Thus,
    \begin{IEEEeqnarray*}{rCl}
        \IEEEeqnarraymulticol{3}{l}{E(\hat{H}_{n+1} |J_1^n)} \\
        \quad  & =& E \big( H(F_{J_{n+1}} | F_1^{J_{n+1}-1},\vecY \concat \vecY',S_0,S_{2N}) | J_1^n \big) \\
        & = &  \frac{1}{2} H(F_{2 J_{n}-1} | F_1^{2 J_{n}-2},\vecY \concat \vecY',S_0,S_{2N}) \\
        & & \quad + \frac{1}{2} H(F_{2 J_{n}} | F_1^{2 J_{n}-1},\vecY \concat \vecY',S_0,S_{2N}) \\    
        & = &\frac{1}{2} H(F_{2 J_{n}-1},F_{2 J_{n}} | F_1^{2 J_{n}-2},\vecY \concat \vecY',S_0,S_{2N})  \\
        & =& \frac{1}{2} H(U_{J_n} \xor V_{J_n},V_{J_n} | F_1^{2 J_{n}-2},\vecY \concat \vecY',S_0,S_{2N})  \\
        & = & \frac{1}{2} H(U_{J_n},V_{J_n} | U_1^{J_{n}-1},V_1^{J_{n}-1},\vecY \concat \vecY',S_0,S_{2N}) \\
        & \eqann[\geq]{a} & \frac{1}{2} H(U_{J_n},V_{J_n} | U_1^{J_{n}-1},V_1^{J_{n}-1},\vecY,\vecY',S_0,S_{2N}) \\
        & \eqann[\geq]{b} & \frac{1}{2} H(U_{J_n},V_{J_n} | U_1^{J_{n}-1},V_1^{J_{n}-1},\vecY,\vecY',S_0,S_N,S_{2N}) \\
        & \eqann{c} & \frac{1}{2} H(U_{J_n} | U_1^{J_{n}-1},\vecY,S_0,S_N)  \\
        && \quad + \frac{1}{2} H(V_{J_n} | V_1^{J_{n}-1},\vecY',S_N,S_{2N}) \\
        & \eqann{d} & \hat{H}_{n}.
    \end{IEEEeqnarray*}

    The  inequality \eqannref{a} follows from the fact that $\vecY \concat \vecY'$ is a deterministic function of $\vecY,\vecY'$. Inequality \eqannref{b} follows since conditioning reduces entropy. Step \eqannref{c} holds by the Markov property.
    Finally, \eqannref{d} is  due to stationarity: $\hat{H}_n = H(U_{J_n} | U_1^{J_{n}-1},\vecY,S_0,S_N) = H(V_{J_n} | V_1^{J_{n}-1},\vecY',S_N,S_{2N})$.
\end{IEEEproof}

\vspace{1mm}
Since the sequence $\hat{H}_n$ is a bounded submartingale, it converges almost surely and in $L_1$ to a random variable $\hat{H}_\infty \in [0,1]$. 
To show that $\hat{H}_\infty \in \{0,1\}$ with probability 1, one can show that, if $\epsilon \leq \hat{H}_n \leq 1-\epsilon$, then there is a $\Delta = \Delta(\epsilon) > 0$ such that $\hat{H}_n^- - \hat{H}_n > \Delta(\epsilon)$, where
\begin{equation}
    \label{hat_n_plus}
    \hat{H}_n^- \triangleq  H(U_{J_n} \xor V_{J_n} |U_1^{J_{n}-1},V_1^{J_{n}-1},\vecY \concat \vecY',S_0,S_{2N}) \; .
\end{equation}
That is, a `minus' operation applied to non-polarized entropy changes the entropy by at least $\Delta$. Such a result indeed establishes the above, since it dictates that $\hat{H}_n$ cannot converge to anything other than either $0$ or $1$. As before, we first prove the above for the simple case of i.i.d.\ uniform input, and then generalize to a hidden-Markov input.
\subsection{Uniform input}

\begin{lemm}
    \label{lemm:weakPolarizationSymmetric}
    Let $\vecX$ and $\vecX'$ be independent vectors of length $N = 2^n$, both drawn from an i.i.d.\ uniform distribution. Let $\hat{H}_n$ and $\hat{H}_n^-$ be as defined in (\ref{eq:hathi}), (\ref{eq:hatHn}), and (\ref{hat_n_plus}), with $S_0$, $S_N$ and $S_{2N}$ being degenerate random variables always taking the value $1$. Then, for every $\epsilon > 0$ there exists $\Delta(\epsilon) > 0$ such that if $\epsilon \leq \hat{H}_n \leq 1- \epsilon$, then $\hat{H}_n^- - \hat{H}_n > \Delta(\epsilon)$.
\end{lemm}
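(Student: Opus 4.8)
The plan is to reduce the claim to an elementary estimate on the binary entropy function $h_2(p) \triangleq -p\log_2 p - (1-p)\log_2(1-p)$. Fix an index $i$ and condition on $\{J_n = i\}$; since $S_0,S_N,S_{2N}$ are degenerate here, the goal becomes $\hat{h}_i^- - \hat{h}_i > \Delta(\epsilon)$, where $\hat{h}_i = H(U_i \mid U_1^{i-1},\vecY)$ and $\hat{h}_i^- = H(U_i \xor V_i \mid U_1^{i-1},V_1^{i-1},\vecY\concat\vecY')$, with $\vecV = \calA_n(\vecX')$ and $(\vecV,\vecY')$ an independent copy of $(\vecU,\vecY)$. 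Since $\vecY\concat\vecY'$ is a deterministic function of the pair $(\vecY,\vecY')$, we have $\hat{h}_i^- \ge H(U_i\xor V_i \mid U_1^{i-1},V_1^{i-1},\vecY,\vecY')$, so it is enough to lower bound the latter quantity minus $\hat{h}_i$. First I would observe that, because $\vecX$ is i.i.d.\ uniform, so is $\vecU = \calA_n(\vecX)$; hence, setting $A \triangleq \Pr(U_i = 1 \mid U_1^{i-1},\vecY)$ and letting $B$ be the analogous random variable built from the primed system, $A$ and $B$ are i.i.d., while conditioned on all four ``side'' variables $U_i$ and $V_i$ are independent $\mathrm{Ber}(A)$ and $\mathrm{Ber}(B)$ variables. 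Writing $a\star b \triangleq a(1-b)+(1-a)b$ for the binary convolution, this gives the clean reformulation $\hat{h}_i = \mathbb{E}[h_2(A)]$ and $H(U_i\xor V_i\mid U_1^{i-1},V_1^{i-1},\vecY,\vecY') = \mathbb{E}[h_2(A\star B)]$.

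Next I would record the pointwise fact that, because $a\star b$ is a convex combination of $a$ and $1-a$, it always lies between $a$ and $1-a$, so that $h_2(a\star b) \ge h_2(a)$, with equality if and only if $a = \tfrac12$ or $b \in \{0,1\}$. Consequently the difference in question is $\mathbb{E}[h_2(A\star B) - h_2(A)] \ge 0$, and the task is to bound this expectation away from $0$. The idea is to exhibit a set of nonnegligible probability on which $A$ is bounded away from $\tfrac12$ and $B$ is bounded away from $\{0,1\}$. From $\hat{h}_i = \mathbb{E}[h_2(A)] \le 1-\epsilon$, splitting $\mathbb{E}[1-h_2(A)] \ge \epsilon$ according to whether $|A - \tfrac12|$ is below or above a threshold $c = c(\epsilon)$ (and using $1 - h_2(\tfrac12 - c)\to 0$ as $c\to 0$) yields $\Pr(|A-\tfrac12| \ge c) \ge \epsilon/2$. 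Symmetrically, from $\hat{h}_i = \mathbb{E}[h_2(A)] \ge \epsilon$, splitting according to whether $A$ lies in $[c',1-c']$ (and using $h_2(c')\to 0$ as $c'\to 0$) yields a threshold $c' = c'(\epsilon)$ with $\Pr(c' \le A \le 1-c') \ge \epsilon/2$.

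Finally, since $A$ and $B$ are i.i.d., the event $K \triangleq \{|A - \tfrac12| \ge c\} \cap \{c' \le B \le 1-c'\}$ has probability at least $\epsilon^2/4$, and on the compact set $\{a : |a-\tfrac12|\ge c\}\times[c',1-c']$ the continuous function $(a,b)\mapsto h_2(a\star b) - h_2(a)$ is strictly positive by the equality analysis above, hence is bounded below by some $\Delta'(\epsilon)>0$ there. Therefore $\hat{h}_i^- - \hat{h}_i \ge \mathbb{E}[(h_2(A\star B)-h_2(A))\mathbbm{1}_K] \ge \Delta'(\epsilon)\Pr(K) \ge \Delta'(\epsilon)\epsilon^2/4 \triangleq \Delta(\epsilon) > 0$.

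I expect the main obstacle to be the two concentration estimates together with the compactness step: one must use \emph{both} halves of the hypothesis $\epsilon \le \hat{h}_i \le 1-\epsilon$ --- the lower bound forbidding the posterior $A$ from concentrating near $\{0,1\}$, the upper bound forbidding it from concentrating near $\tfrac12$ --- and then argue that, uniformly away from these two degenerate regimes, a single minus step strictly increases the entropy, which is exactly where the compactness/continuity argument is needed to turn ``strictly positive'' into ``bounded below.'' Everything else is bookkeeping, and the same skeleton (with an extra layer of conditioning on the boundary states, handled as in Lemma~\ref{lemm:subMartingale}) will carry over to the hidden-Markov case.
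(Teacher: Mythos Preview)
Your argument is correct and structurally identical to the paper's through the key reduction: both drop the state conditioning (degenerate here), note that $(U_i,U_1^{i-1},\vecY)$ and $(V_i,V_1^{i-1},\vecY')$ are independent, replace the side information $\vecY\concat\vecY'$ by the finer $(\vecY,\vecY')$, and are left with lower-bounding $H(U_i\xor V_i\mid U_1^{i-1},V_1^{i-1},\vecY,\vecY') - H(U_i\mid U_1^{i-1},\vecY)$. The paper dispatches this last step in one line by invoking Mrs.\ Gerber's Lemma, which for two i.i.d.\ binary sources with conditional entropy $\hat{H}_n$ gives $H(U_i\xor V_i\mid\cdots) \ge h_2\bigl(h_2^{-1}(\hat{H}_n)\star h_2^{-1}(\hat{H}_n)\bigr)$, strictly larger than $\hat{H}_n$ whenever $\hat{H}_n\in(0,1)$. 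Your route instead reproves a weak form of that inequality from scratch: write both entropies as expectations over the posterior $A$, use each half of the hypothesis $\epsilon\le\hat{H}_n\le 1-\epsilon$ to show that $A$ cannot concentrate near $\tfrac12$ nor near $\{0,1\}$, intersect these events using the independence of $A$ and $B$, and finish by compactness. This is more work but entirely self-contained and makes the constant explicit (if suboptimal), $\Delta(\epsilon)=\tfrac{\epsilon^2}{4}\min_{(a,b)\in K}\bigl(h_2(a\star b)-h_2(a)\bigr)$; the paper's citation gives the sharp bound but hides the mechanism. Your closing remark is also accurate: the hidden-Markov extension in the paper follows exactly the skeleton you sketch, conditioning on the boundary states $(S_0,S_N,S_{2N})$ to recover independence and then averaging.
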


\begin{IEEEproof}
    Denote $i=J_n$, and assume a fixed $\epsilon$ for which $\epsilon \leq \hat{H}_n \leq 1- \epsilon$.  Then, since $S_0$, $S_N$, and $S_{2N}$ are degenerate, we observe that $(U_i,U_1^{i-1},\vecY)$ is independent of $(V_i,V_1^{i-1},\vecY')$.
    It follows that
    \[
            H(U_i \xor V_i|U_1^{i-1},V_1^{i-1},\vecY, \vecY')
    \]
is the entropy of the modulo-2 sum of the independent binary random variables $U_i$ and $V_i$ .
Thus, Mrs.\ Gerber's Lemma~\cite[Lemma 2.2]{sasoglu:12b} implies that, for every $\epsilon >0$, there is $\Delta >0$ such that
    \[
        H(U_i \xor V_i|U_1^{i-1},V_1^{i-1},\vecY, \vecY') - H(U_i|U_1^{i-1},\vecY) \geq \Delta \; .
    \]
	Since
    \begin{align*}
        \hat{H}_{n+1}^- &= H(U_i \xor V_i|U_1^{i-1},V_1^{i-1},\vecY \concat \vecY') \\
        &\geq H(U_i \xor V_i|U_1^{i-1},V_1^{i-1},\vecY, \vecY'),
    \end{align*}
    the result follows.
\end{IEEEproof}

\subsection{Hidden-Markov input}
The proof of Lemma~\ref{lemm:weakPolarizationSymmetric} above relied on the mutual independence of $(U_i,U_1^{i-1},\vecY)$ and $(V_i,V_1^{i-1},\vecY')$. To emulate\footnote{For independence, it is sufficient to condition on the event $S_N = s_N$. Conditioning on the more specific event $S_0 = s_0$, $S_N = s_N$, $S_{2N} = s_{2N}$ is needed for latter parts.} this property in a FAIM setting, we note that for $s_0$, $s_N$, and $s_{2N}$ fixed, we indeed have that $(U_i,U_1^{i-1},\vecY)$  and $(V_i,V_1^{i-1},\vecY')$ are independent, when conditioning on the event $S_0 = s_0$, $S_N = s_N$, $S_{2N} = s_{2N}$. 
Towards this end, for $s_0,s_N,s_{2N} \in \calS$, we denote the probability of these three states occurring as
\begin{equation}
    \label{eq:ps0sNs2N}
    p(s_0,s_N,s_{2N}) =  \Pr(S_0 = s_0, S_N = s_N, S_{2N} = s_{2N} ) \; .
\end{equation}

In the reminder of this subsection, we will assume that $N$ is large enough such that the above probability is always positive. This is indeed possible, by the following lemma.
\begin{lemm}
    \label{lemm:allTripletsProbable}
    For $s \in \calS$, denote by $\pi(s)$ the stationary probability of $s$. That is, the probability that $S_0=s$. Let
    \[
        \pimin = \min_{s \in \calS} \pi(s) \; , 
    \]
    Then, $\pimin > 0$, and there exists a $\nu$ such that for all $N \geq 2^{\nu}$ and all $s_0,s_N, s_{2N} \in \calS$ we have
    \begin{equation}
        \label{eq:allStatesProbable}
        \Pr(S_0 = s_0, S_N = s_N, S_{2N} = s_{2N} ) > \frac{(\pimin)^3}{2} \; .
    \end{equation}                                                                                                                                                                 
\end{lemm}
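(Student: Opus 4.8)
The plan is to prove the two assertions of Lemma~\ref{lemm:allTripletsProbable} separately: first that $\pimin > 0$, and then the quantitative lower bound \eqref{eq:allStatesProbable} for sufficiently large $N$.

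\medskip
\noindent\textbf{Step 1: $\pimin > 0$.}
Since the hidden state process $(S_j)$ is, by assumption, a finite-state, irreducible, aperiodic (hence regular) Markov chain, it has a unique stationary distribution $\pi$, and irreducibility forces $\pi(s) > 0$ for every $s \in \calS$. As $\calS$ is finite, $\pimin = \min_{s \in \calS}\pi(s)$ is a minimum over finitely many strictly positive numbers, hence strictly positive.

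\medskip
\noindent\textbf{Step 2: convergence of the $k$-step transition probabilities.}
Because $(S_j)$ is regular, the $k$-step transition matrix $P^{(k)}$ converges entrywise to the rank-one matrix all of whose rows equal $\pi$; that is, $\Pr(S_{j+k} = s' \mid S_j = s) \to \pi(s')$ as $k \to \infty$, uniformly over the finite set of pairs $(s,s')$. Fix $\eta \in (0,1)$ to be chosen later (e.g.\ $\eta = 1/4$ will suffice). By this convergence, there exists $k_0$ such that for all $k \ge k_0$ and all $s,s' \in \calS$,
\[
    \left| \Pr(S_{j+k} = s' \mid S_j = s) - \pi(s') \right| \le \eta \, \pi(s') \; .
\]
In particular $\Pr(S_{j+k} = s' \mid S_j = s) \ge (1-\eta)\,\pi(s')$ for all $k \ge k_0$.

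\medskip
\noindent\textbf{Step 3: factor the triple probability via the Markov property and conclude.}
For $N \ge k_0$ the state process satisfies, by stationarity and the Markov property,
\begin{align*}
    \Pr(S_0 = s_0,\, S_N = s_N,\, S_{2N} = s_{2N})
    &= \pi(s_0) \cdot \Pr(S_N = s_N \mid S_0 = s_0) \\
    &\quad\; \times \Pr(S_{2N} = s_{2N} \mid S_N = s_N) \\
    &\ge \pi(s_0)\cdot (1-\eta)\pi(s_N)\cdot (1-\eta)\pi(s_{2N}) \\
    &\ge (1-\eta)^2 \, (\pimin)^3 \; .
\end{align*}
Choosing $\eta$ small enough that $(1-\eta)^2 > 1/2$ (e.g.\ $\eta = 1/4$ gives $(1-\eta)^2 = 9/16 > 1/2$), and taking $\nu = \lceil \log_2 k_0 \rceil$ so that $N \ge 2^{\nu}$ implies $N \ge k_0$, we obtain \eqref{eq:allStatesProbable} for all such $N$ and all $s_0, s_N, s_{2N} \in \calS$. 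This completes the proof.

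\medskip
\noindent The only mild subtlety — and the step I would be most careful about — is invoking the convergence of $P^{(k)}$ to the stationary matrix with the right mode of uniformity; but since $\calS$ is finite, entrywise convergence is automatically uniform, so there is no real obstacle here. Everything else is a direct application of standard facts about regular (ergodic) finite Markov chains together with the Markov property.
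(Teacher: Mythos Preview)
Your proof is correct and follows essentially the same approach as the paper: both use that regularity of the finite-state chain gives $\pimin>0$ and that the $N$-step transitions converge to $\pi$, so $S_0,S_N,S_{2N}$ are asymptotically independent and the joint probability approaches $\pi(s_0)\pi(s_N)\pi(s_{2N})\ge(\pimin)^3$. The paper's proof is a two-line sketch invoking ``asymptotic independence''; you have simply made that sketch quantitative via the $(1-\eta)$ bound and the Markov factorization, which is exactly the intended fleshing-out.
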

\begin{IEEEproof}
    Since the underlying Markov chain is regular (i.e., finite-state, irreducible, and aperiodic), some power of the transition matrix must be strictly positive and this implies that $\pimin>0$. Regularity further implies that $S_0,S_N,S_{2N}$ become asymptotically independent as $N$ increases.
    Thus, there must be an $N_0 =2^{n_0}$ such that (\ref{eq:allStatesProbable}) holds for all $N\geq N_0$.
\end{IEEEproof}

For $(s_0,s_N,s_{2N})$, we define the quantities $\alpha(s_0,s_N,s_{2N})$ and $\beta(s_0,s_N,s_{2N})$ as follows.
\begin{IEEEeqnarray}{l}
 \alpha(s_0,s_N,s_{2N}) \triangleq \label{eq:alpha} \\ 
 H(U_i \xor V_i|U_1^{i-1},V_1^{i-1},\vecY, \vecY',S_0=s_0,S_N=s_N,S_{2N}=s_{2N}) \nonumber
 \end{IEEEeqnarray}
 and
 \begin{equation}
     \label{eq:beta}
     \beta(s_0,s_N,s_{2N}) \triangleq  \frac{\gamma(s_0,s_N) + \gamma(s_N,s_{2N})}{2} \; ,
\end{equation}
where
 \begin{equation}
     \label{eq:gamma}
    \gamma(s_0,s_N) \triangleq  H(U_i|U_1^{i-1},\vecY,S_0=s_0,S_N=s_N) \; .
\end{equation}
Note that by stationarity,
\[
    \gamma(s_N,s_{2N}) =  H(V_i|V_1^{i-1},\vecY',S_N=s_N,S_{2N}=s_{2N}) \; .
\]

The following lemma states how $\alpha$ and $\beta$ are related to our quantities of interest, $\hat{H}_n$ and  $\hat{H}_n^-$. 
\begin{lemm}
    \label{lemm:alphaBetaConnection}
    Let $N = 2^n > 2^{\nu}$, where $\nu$ was promised in Lemma~\ref{lemm:allTripletsProbable}. Then, for $\alpha$ and $\beta$ as defined above, we have that
    \begin{equation}
        \label{eq:hatH_alpha}
        \hat{H}_n^- \geq \sum_{s_0,s_N,s_{2N} \in \calS}  p(s_0,s_N,s_{2N}) \cdot \alpha(s_0,s_N,s_{2N}) \; ,
    \end{equation}
    and
    \begin{equation}
        \label{eq:hatH_beta}
     \hat{H}_n =  \sum_{s_0,s_N,s_{2N} \in \calS}  p(s_0,s_N,s_{2N}) \cdot \beta(s_0,s_N,s_{2N}) \; .  
    \end{equation}
    Furthermore, for all $s_0,s_N,s_{2N} \in \calS$,
    \begin{equation}
    \label{eq:alphaGeqBeta}
    \alpha(s_0,s_N,s_{2N}) \geq \beta(s_0,s_N,s_{2N}) \; .
    \end{equation}
\end{lemm}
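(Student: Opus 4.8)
The plan is to prove the three assertions in turn, beginning with the identity~\eqref{eq:hatH_beta}, which is essentially bookkeeping, and ending with the pointwise inequality~\eqref{eq:alphaGeqBeta}, which is where the Markov structure is actually used. Throughout I fix the index $i = J_n$ (as in the proof of Lemma~\ref{lemm:weakPolarizationSymmetric}) and treat all quantities as deterministic; the hypothesis $N = 2^n > 2^\nu$ is used, via Lemma~\ref{lemm:allTripletsProbable}, only to guarantee $p(s_0,s_N,s_{2N}) > 0$ for every triple, so that the conditional entropies $\gamma(\cdot,\cdot)$ and $\alpha(\cdot,\cdot,\cdot)$ are well-defined.

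For~\eqref{eq:hatH_beta}, I would expand $\hat{H}_n = H(U_i \mid U_1^{i-1},\vecY,S_0,S_N)$ as the average of $\gamma(s_0,s_N)$ over the law of $(S_0,S_N)$, and then replace $\Pr(S_0 = s_0, S_N = s_N)$ by $\sum_{s_{2N}} p(s_0,s_N,s_{2N})$; this gives $\hat{H}_n = \sum_{s_0,s_N,s_{2N}} p(s_0,s_N,s_{2N})\,\gamma(s_0,s_N)$. By stationarity (the second block $\vecX'$ is a time-shift of $\vecX$ by $N$), we also have $\hat{H}_n = H(V_i \mid V_1^{i-1},\vecY',S_N,S_{2N}) = \sum_{s_0,s_N,s_{2N}} p(s_0,s_N,s_{2N})\,\gamma(s_N,s_{2N})$. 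Averaging the two expressions for $\hat{H}_n$ and invoking the definition~\eqref{eq:beta} of $\beta$ yields~\eqref{eq:hatH_beta}. For~\eqref{eq:hatH_alpha}, I start from the definition~\eqref{hat_n_plus} of $\hat{H}_n^-$ and apply two standard inequalities: first, since $\vecY\concat\vecY'$ is a deterministic function of $(\vecY,\vecY')$, refining the conditioning from $\vecY\concat\vecY'$ to the pair $(\vecY,\vecY')$ does not increase the entropy; second, conditioning additionally on $S_N$ decreases it further. This gives $\hat{H}_n^- \geq H(U_i \xor V_i \mid U_1^{i-1},V_1^{i-1},\vecY,\vecY',S_0,S_N,S_{2N})$, and expanding the right-hand side as an average over the values of $(S_0,S_N,S_{2N})$ and using the definition~\eqref{eq:alpha} of $\alpha$ produces exactly $\sum p(s_0,s_N,s_{2N})\,\alpha(s_0,s_N,s_{2N})$, which is~\eqref{eq:hatH_alpha}.

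The substantive claim is~\eqref{eq:alphaGeqBeta}, and the key ingredient is the conditional-independence fact recorded just before this lemma: conditioned on $\{S_0 = s_0, S_N = s_N, S_{2N} = s_{2N}\}$, the triple $(U_i,U_1^{i-1},\vecY)$ --- a deterministic function of the first block $\vecX$ and the deletions acting on it --- is independent of $(V_i,V_1^{i-1},\vecY')$, which depends only on the second block $\vecX'$ and its deletions. Moreover, since $S_{2N}$ is determined by the second block alone, further conditioning on $\{S_{2N} = s_{2N}\}$ does not change the conditional law of $(U_i,U_1^{i-1},\vecY)$ given $\{S_0 = s_0, S_N = s_N\}$; hence $\gamma(s_0,s_N)$ also equals $H(U_i \mid U_1^{i-1},\vecY)$ under the full conditioning on $(s_0,s_N,s_{2N})$, and symmetrically for $\gamma(s_N,s_{2N})$ with $U$ and $V$ interchanged. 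Working entirely under this conditioning, I would bound $\alpha(s_0,s_N,s_{2N}) = H(U_i \xor V_i \mid U_1^{i-1},V_1^{i-1},\vecY,\vecY')$ from below by conditioning also on $V_i$ (which cannot raise the entropy), using that $U_i \xor V_i$ is in bijection with $U_i$ once $V_i$ is known, and then using the conditional independence to drop $V_i,V_1^{i-1},\vecY'$ from the conditioning; the net effect is $\alpha(s_0,s_N,s_{2N}) \geq H(U_i \mid U_1^{i-1},\vecY) = \gamma(s_0,s_N)$. The symmetric argument (conditioning on $U_i$ instead) gives $\alpha(s_0,s_N,s_{2N}) \geq \gamma(s_N,s_{2N})$, so $\alpha(s_0,s_N,s_{2N}) \geq \max\{\gamma(s_0,s_N),\gamma(s_N,s_{2N})\} \geq \tfrac{1}{2}\big(\gamma(s_0,s_N)+\gamma(s_N,s_{2N})\big) = \beta(s_0,s_N,s_{2N})$. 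Note that here the crude bound ``$\max \geq$ average'' suffices; the strict gap (needing Mrs.\ Gerber's Lemma) is only required for the subsequent polarization-rate lemma. The one place that calls for genuine care is the conditional-independence claim together with the ``same conditional law'' observation --- extracting cleanly from the FAIM Markov structure that, once the three boundary states are pinned, the two blocks and all quantities derived from them decouple; everything else is the chain rule, ``conditioning reduces entropy,'' and the bijection $U_i \leftrightarrow U_i \xor V_i$ given $V_i$.
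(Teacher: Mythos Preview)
Your proposal is correct and follows essentially the same route as the paper's proof: the same two ``conditioning reduces entropy'' steps (refining $\vecY\concat\vecY'$ to $(\vecY,\vecY')$, then adding $S_N$) for~\eqref{eq:hatH_alpha}, the same stationarity-and-averaging argument for~\eqref{eq:hatH_beta}, and the same derivation of $\alpha \geq \gamma(s_0,s_N)$ and $\alpha \geq \gamma(s_N,s_{2N})$ via conditioning on $V_i$ (resp.\ $U_i$), the bijection $U_i \leftrightarrow U_i \xor V_i$, and the Markov conditional independence. Your observation that ``$\max \geq$ average'' suffices here, with Mrs.\ Gerber's Lemma deferred to the next lemma, matches the paper's organization exactly.
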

\begin{IEEEproof}
    Define $i = J_n$. To prove (\ref{eq:hatH_alpha}), we proceed similarly to the proof in Lemma~\ref{lemm:subMartingale} and deduce that
    \begin{IEEEeqnarray*}{rCl}
        \hat{H}_n^- & = & H(U_i \xor V_i|U_1^{i-1},V_1^{i-1},\vecY \concat \vecY',S_0,S_{2N}) \\
        & \geq & H(U_i \xor V_i|U_1^{i-1},V_1^{i-1},\vecY, \vecY',S_0,S_{2N}) \\
        & \geq & H(U_i \xor V_i|U_1^{i-1},V_1^{i-1},\vecY, \vecY',S_0,S_N,S_{2N}) \\
        & = & \sum_{s_0,s_N,s_{2N} \in \calS}  p(s_0,s_N,s_{2N}) \cdot \alpha(s_0,s_N,s_{2N}) \; ,  
    \end{IEEEeqnarray*}

    The proof of (\ref{eq:hatH_beta}) follows by stationarity. That is,
\begin{IEEEeqnarray*}{rCl}
    \hat{H}_n & = & H(U_i |U_1^{i-1},\vecY, S_0,S_N) \\
              & = & \frac{H(U_i |U_1^{i-1},\vecY, S_0,S_N) + H(V_i |V_1^{i-1},\vecY', S_N,S_{2N})}{2} \\
              & = & \sum_{s_0,s_N,s_{2N} \in \calS}  p(s_0,s_N,s_{2N}) \cdot \frac{\gamma(s_0,s_N) + \gamma(s_N,s_{2N})}{2} \\ 
              & = & \sum_{s_0,s_N,s_{2N} \in \calS}  p(s_0,s_N,s_{2N}) \cdot \beta(s_0,s_N,s_{2N}) \; .  
\end{IEEEeqnarray*}

By (\ref{eq:beta}), we deduce that (\ref{eq:alphaGeqBeta}) will follow from proving that
\begin{equation}
    \label{eq:alphaGeqGammaOne}
    \alpha(s_0,s_N,s_{2N}) \geq \gamma(s_0,s_N) 
\end{equation}
and
\begin{equation}
    \label{eq:alphaGeqGammaTwo}
 \alpha(s_0,s_N,s_{2N}) \geq \gamma(s_N,s_{2N})
\end{equation}
W.l.o.g, we prove (\ref{eq:alphaGeqGammaOne}). Indeed, given that $S_N = s_N$, we have by the Markov property that  $(S_0, U_1^{i-1}, U_i, \vecY)$ and $(V_1^{i-1}, V_i,\vecY',S_{2N})$ are independent. Hence, for any $s_{2N}$ we may also write $\gamma$, defined in (\ref{eq:gamma}), as
\begin{aligncc*}
    \gamma(s_0,s_N) =  
    H( U_i| \onlydouble{&} U_1^{i-1},V_1^{i-1},V_i,\vecY,\vecY', \\ \onlydouble{&} S_0=s_0,S_N=s_N,S_{2N} = s_{2N}) \; .
\end{aligncc*}
Lastly, note that in the above expression for $\gamma$, since we condition on $V_i$, we could have written $U_i \xor V_i$ in place of $U_i$. This would give us the expression for $\alpha$ in (\ref{eq:alpha}), up to a further conditioning on $V_i$. Since conditioning reduces entropy, (\ref{eq:alphaGeqGammaOne}) follows. As noted, the proof of (\ref{eq:alphaGeqGammaTwo}) is similar. Hence, we deduce (\ref{eq:alphaGeqBeta}). 
\end{IEEEproof}

In light of Lemma~\ref{lemm:alphaBetaConnection}, our plan is to show the existence of a triplet $(s_0,s_N,s_{2N})$ for which $\alpha(s_0,s_N,s_{2N})$ is substantially greater than $\beta(s_0,s_N,s_{2N})$. The next lemma assures us such a triplet indeed exists.

\begin{lemm}
    \label{lemm:alphaSubstantiallyLargerThanBeta}
    For every $\epsilon > 0$ there exists a $\Delta' = \Delta'(\epsilon)$ for which the following holds. Let $N = 2^n > 2^{\nu}$, where $\nu$ was promised in Lemma~\ref{lemm:allTripletsProbable}. Then, if $\epsilon \leq \hat{H}_n \leq 1- \epsilon$, then there exists a triplet $s_0,s_N,s_{2N}$ such that
    \begin{equation}
        \label{eq:alphaSubstantiallyLargerThanBeta}
        \alpha(s_0,s_N,s_{2N}) > \beta(s_0,s_N,s_{2N}) + \Delta' \; .
    \end{equation}
\end{lemm}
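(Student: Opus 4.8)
The plan is to sharpen the conditioning argument used for (\ref{eq:alphaGeqBeta}) in Lemma~\ref{lemm:alphaBetaConnection} by invoking the conditional form of Mrs.\ Gerber's Lemma \cite[Lemma 2.2]{sasoglu:12b}, and then to use the hypothesis $\epsilon\le\hat{H}_n\le1-\epsilon$ to pin down a suitable triplet. I may assume $\epsilon<1/2$, since otherwise the hypothesis is vacuous. Write $h$ for the binary entropy function, set $p\star q\triangleq p(1-q)+q(1-p)$, and define $f(p,q)\triangleq h(p\star q)-\tfrac12 h(p)-\tfrac12 h(q)$ on $[0,\tfrac12]^2$. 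The one analytic fact I need is that $f$ is continuous and nonnegative on $[0,\tfrac12]^2$ and vanishes there only at the two corners $(0,0)$ and $(\tfrac12,\tfrac12)$; this is immediate from $p\star q\ge\max(p,q)$ for $p,q\in[0,\tfrac12]$ together with the strict monotonicity of $h$ on $[0,\tfrac12]$. Consequently, on any compact $K\subseteq[0,\tfrac12]^2$ with $(0,0),(\tfrac12,\tfrac12)\notin K$ we have $\min_K f>0$.

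Fix a triplet $(s_0,s_N,s_{2N})$ and condition on the event $\{S_0=s_0,S_N=s_N,S_{2N}=s_{2N}\}$, under which $(U_i,U_1^{i-1},\vecY)$ and $(V_i,V_1^{i-1},\vecY')$ are independent (as recalled before Lemma~\ref{lemm:alphaBetaConnection}). Applying Mrs.\ Gerber's Lemma to $U_i\oplus V_i$ with side information $(U_1^{i-1},\vecY)$ and $(V_1^{i-1},\vecY')$, and using the Markov property to discard the irrelevant boundary state from each of the two conditional entropies (exactly as in the derivation of (\ref{eq:alphaGeqGammaOne})), I obtain
\[
    \alpha(s_0,s_N,s_{2N})\ \ge\ h\bigl(h^{-1}(\gamma(s_0,s_N))\star h^{-1}(\gamma(s_N,s_{2N}))\bigr).
\]
Since $\beta(s_0,s_N,s_{2N})=\tfrac12\bigl(\gamma(s_0,s_N)+\gamma(s_N,s_{2N})\bigr)$ by (\ref{eq:beta}), writing $p=h^{-1}(\gamma(s_0,s_N))$ and $q=h^{-1}(\gamma(s_N,s_{2N}))$ yields $\alpha(s_0,s_N,s_{2N})-\beta(s_0,s_N,s_{2N})\ge f(p,q)$. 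Hence it suffices to exhibit a single triplet for which the point $(p,q)$ lies in a compact subset of $[0,\tfrac12]^2$ avoiding $(0,0)$ and $(\tfrac12,\tfrac12)$, where the compact set depends only on $\epsilon$; the corresponding $\min_K f$ is then the desired $\Delta'(\epsilon)$.

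To find the triplet I use only that $\hat{H}_n$ is the convex combination $\sum_{s_0,s_N}\Pr(S_0=s_0,S_N=s_N)\,\gamma(s_0,s_N)$, so $\max_{s_0,s_N}\gamma(s_0,s_N)\ge\epsilon$ and $\min_{s_0,s_N}\gamma(s_0,s_N)\le1-\epsilon$. If some pair $(s_0^*,s_N^*)$ has $\gamma(s_0^*,s_N^*)\in[\epsilon/2,1-\epsilon/2]$, I pick any $s_{2N}$: then $p\in[h^{-1}(\epsilon/2),h^{-1}(1-\epsilon/2)]$, a compact subinterval of $(0,\tfrac12)$, and $q\in[0,\tfrac12]$, so $(p,q)$ lies in a rectangle missing both corners. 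Otherwise every pair is ``low'' ($\gamma<\epsilon/2$) or ``high'' ($\gamma>1-\epsilon/2$); the pair attaining the maximum must then be high and the one attaining the minimum must be low, so both classes are nonempty. I claim there is a state $b^*$ that terminates a high pair $(s_0^*,b^*)$ and initiates a low pair $(b^*,s_{2N}^*)$: otherwise, picking any high pair $(a_1,b_1)$ forces every pair $(b_1,\cdot)$ to be high, picking any low pair $(a_2,b_2)$ forces every pair $(\cdot,a_2)$ to be low, and then $\gamma(b_1,a_2)$ would be both $>1-\epsilon/2$ and $<\epsilon/2$, a contradiction. For the triplet $(s_0^*,b^*,s_{2N}^*)$ one gets $p\in(h^{-1}(1-\epsilon/2),\tfrac12]$ and $q\in[0,h^{-1}(\epsilon/2))$, again a compact rectangle missing both corners. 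In either case $\alpha-\beta\ge f(p,q)\ge\Delta'(\epsilon)>0$ for the chosen triplet, which is (\ref{eq:alphaSubstantiallyLargerThanBeta}).

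The main obstacle is the second case above: averaging $\hat{H}_n\in[\epsilon,1-\epsilon]$ does not by itself supply any single state-pair whose $\gamma$ is bounded away from both $0$ and $1$, so one is forced to chain a ``high'' pair and a ``low'' pair through a common middle state $b^*$, and the crux is the elementary observation that $\gamma(b_1,a_2)$ cannot simultaneously be high and low. Everything else --- the Mrs.\ Gerber's Lemma estimate and the compactness argument built on the zero set of $f$ --- is routine.
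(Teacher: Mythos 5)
Your proof is correct, and it follows the same overall strategy as the paper --- condition on the boundary-state triplet to restore independence, apply the conditional form of Mrs.\ Gerber's Lemma, and locate a good triplet by chaining through a common middle state --- but it differs in two places. First, for the combinatorial step the paper finds a triplet whose two $\gamma$-values straddle $\hat{H}_n$ itself (via a contradiction argument: if every triplet had both $\gamma$'s on the same side of $\hat{H}_n$, chaining would force all $\gamma$'s to one side, contradicting the fact that $\hat{H}_n$ is their weighted average), whereas you split into the case where some pair already has $\gamma\in[\epsilon/2,1-\epsilon/2]$ and the case where all pairs are extreme, and in the latter chain a ``high'' pair to a ``low'' pair through a common state $b^*$; your contradiction (``$\gamma(b_1,a_2)$ cannot be simultaneously high and low'') is the two-threshold analogue of the paper's argument. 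Second, the paper then obtains the quantitative gap $\Delta'(\epsilon)$ by citing part (i) of \cite[Lemma 2.2]{sasoglu:12b} together with \cite[Lemma 11]{SasogluTal:18a} as a black box, while you make this step self-contained: you identify the zero set of $f(p,q)=h(p\star q)-\tfrac12 h(p)-\tfrac12 h(q)$ on $[0,\tfrac12]^2$ as exactly $\{(0,0),(\tfrac12,\tfrac12)\}$ and extract $\Delta'(\epsilon)$ by compactness on a rectangle (depending only on $\epsilon$) avoiding both corners. The self-contained compactness argument is a nice bonus; the price is a slightly longer case analysis, and note that your thresholds are chosen so that in each case the relevant rectangle excludes both corners of the zero set --- which your hypothesis-driven bounds ($\max\gamma\ge\hat{H}_n\ge\epsilon$, $\min\gamma\le\hat{H}_n\le1-\epsilon$, all pairs having positive probability by Lemma~\ref{lemm:allTripletsProbable}) indeed guarantee. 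The only cosmetic nit is the remark that the hypothesis is ``vacuous'' for $\epsilon\ge 1/2$: at $\epsilon=1/2$ it is not vacuous, but the statement there follows trivially from the statement at any smaller $\epsilon$, so nothing is lost.
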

\begin{IEEEproof}
    By definition of $\gamma$ in (\ref{eq:gamma}), we have that 
    \begin{IEEEeqnarray}{rCl}
        \hat{H}_n &=& \sum_{s_0,s_N \in \calS} \Pr(S_0 = s_0, S_N = s_N) \cdot  \gamma(s_0,s_N) \label{eq:HnGamma} \\
                  &=& \sum_{s_N,s_{2N} \in \calS} \Pr(S_N = s_N, S_{2N} = s_{2N}) \cdot  \gamma(s_N,s_{2N}) \; , \IEEEnonumber
    \end{IEEEeqnarray}
    where the second equality follows by stationarity.
    A crucial point will be to show the existence of a triplet $(s_0,s_N,s_{2N})$ for which $(\hat{H}_n - \gamma(s_0,s_N))\cdot(\hat{H}_n - \gamma(s_N,s_{2N})) \leq 0$.
    In other words, either
    \begin{equation}
        \label{eq:gammaMixing}
        \gamma(s_0,s_N) \leq \hat{H}_n \quad \mbox{and} \quad  \gamma(s_N,s_{2N}) \geq \hat{H}_n \; ,
    \end{equation}
    or
    \begin{equation}
  \label{eq:gammaMixing2}
        \gamma(s_0,s_N) \geq \hat{H}_n \quad \mbox{and} \quad  \gamma(s_N,s_{2N}) \leq \hat{H}_n \; .
    \end{equation}
    
    To show this by contradiction, we start by supposing that this is not the case. Then, for all $s_0,s_N,s_{2N} \in \calS$, it must be that 
    \begin{equation}
        \label{eq:bridge}
        (\hat{H}_n - \gamma(s_0,s_N))\cdot(\hat{H}_n - \gamma(s_N,s_{2N})) > 0 \; .
    \end{equation}

    Fix some arbitrary $a,b \in \calS$. By specializing $s_0$ to $a$ and $s_N$ to $b$ in (\ref{eq:bridge}), we deduce that $\hat{H}_n \neq \gamma(a,b)$. Assume w.l.o.g.\ that $\gamma(a,b) < \hat{H}_n$. We now claim that for all $c,d \in \calS$,
    \begin{equation}
        \label{eq:gammacd}
        \gamma(c,d) < \hat{H}_n \; .
    \end{equation}
    Indeed, let $c,d \in \calS$ be given. By setting $s_0 = a$, $s_N = b$, $s_{2N} = c$, we deduce from (\ref{eq:bridge}) that $\gamma(b,c) < \hat{H}_n$. Hence, if we set $s_0 = b$, $s_N = c$, $s_{2N} = d$ in (\ref{eq:bridge}), we deduce (\ref{eq:gammacd}).

    From the above paragraph, we conclude that for all $s_0, s_N \in \calS$, we must have that $\gamma(s_0,s_N) < \hat{H}_n$. However, recalling from (\ref{eq:HnGamma}) that $\hat{H}_n$ is a weighted average of such $\gamma$ terms, we arrive at a contradiction. Hence, there exists a triplet $(s_0,s_N,s_{2N})$ for which either (\ref{eq:gammaMixing}) or (\ref{eq:gammaMixing2}) holds.  This is the triplet we are searching for. Indeed, since we have assumed that $\epsilon \leq \hat{H}_n \leq 1- \epsilon$, the above triplet satisfies
    \[
        \min \{ \gamma(s_0,s_N), \gamma(s_N,s_{2N}) \} \leq 1-\epsilon
    \]
and
    \[
        \max \{ \gamma(s_0,s_N), \gamma(s_N,s_{2N}) \} \geq \epsilon \; .
    \]
    Our result now follows by combining part (i) of \cite[Lemma 2.2]{sasoglu:12b} with\footnote{The first two strict inequalities in the statement of \cite[Lemma 11]{SasogluTal:18a} are essentially typos: they should both be replaced by weak inequalities, as is evident from reading the beginning of the proof.} \cite[Lemma 11]{SasogluTal:18a}.
\end{IEEEproof}

Combining Lemmas~\ref{lemm:alphaBetaConnection} and~\ref{lemm:alphaSubstantiallyLargerThanBeta} gives the following key result.
\begin{lemm}
    \label{lemm:weakPolarizationFAIM}
    For every $\epsilon > 0$ there exists $\Delta = \Delta(\epsilon)$ for which the following holds. Let $N = 2^n > 2^{\nu}$, where $\nu$ was promised in Lemma~\ref{lemm:allTripletsProbable}. Then, if $\epsilon < \hat{H}_n \leq 1- \epsilon$, then 
    \[
\hat{H}_n^- - \hat{H}_n > \Delta(\epsilon)
    \] 
\end{lemm}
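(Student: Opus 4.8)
The plan is to obtain this lemma as an immediate consequence of Lemmas~\ref{lemm:alphaBetaConnection}, \ref{lemm:alphaSubstantiallyLargerThanBeta}, and~\ref{lemm:allTripletsProbable}: the first gives a term-by-term nonnegative lower bound on $\hat{H}_n^- - \hat{H}_n$ indexed by boundary-state triplets, the second produces a single triplet whose contribution to that sum is at least a universal gap $\Delta'(\epsilon)$, and the third shows that this triplet (indeed every triplet) has probability at least $(\pimin)^3/2 > 0$. Multiplying the last two bounds yields the desired $\Delta(\epsilon)$.

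In detail, I would fix $\epsilon > 0$ and assume $N = 2^n > 2^{\nu}$ with $\epsilon < \hat{H}_n \leq 1-\epsilon$. By Lemma~\ref{lemm:alphaBetaConnection}, combining~(\ref{eq:hatH_alpha}), (\ref{eq:hatH_beta}), and the pointwise inequality~(\ref{eq:alphaGeqBeta}) gives
\[
\hat{H}_n^- - \hat{H}_n \;\geq\; \sum_{s_0,s_N,s_{2N} \in \calS} p(s_0,s_N,s_{2N}) \bigl( \alpha(s_0,s_N,s_{2N}) - \beta(s_0,s_N,s_{2N}) \bigr),
\]
which is a sum of nonnegative terms. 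Since $\epsilon < \hat{H}_n \leq 1-\epsilon$, Lemma~\ref{lemm:alphaSubstantiallyLargerThanBeta} supplies a triplet $(s_0^*,s_N^*,s_{2N}^*)$ with $\alpha(s_0^*,s_N^*,s_{2N}^*) - \beta(s_0^*,s_N^*,s_{2N}^*) > \Delta'(\epsilon)$, and Lemma~\ref{lemm:allTripletsProbable} (again using $N > 2^{\nu}$) ensures $p(s_0^*,s_N^*,s_{2N}^*) > (\pimin)^3/2$. Retaining only the $(s_0^*,s_N^*,s_{2N}^*)$ summand and discarding the remaining nonnegative terms,
\[
\hat{H}_n^- - \hat{H}_n \;>\; \frac{(\pimin)^3}{2}\,\Delta'(\epsilon) \;\triangleq\; \Delta(\epsilon) \;>\; 0,
\]
which is the claimed bound.

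There is no substantive obstacle remaining at this stage: the analytic heavy lifting — Mrs.\ Gerber's Lemma and the \cite{SasogluTal:18a} estimate inside Lemma~\ref{lemm:alphaSubstantiallyLargerThanBeta}, the chain-rule and conditional-independence manipulations inside Lemma~\ref{lemm:alphaBetaConnection}, and the mixing argument producing a $\gamma$-straddling triplet — has already been discharged. The only point worth stating explicitly is that $\Delta(\epsilon)$ is independent of $n$: both $\Delta'(\epsilon)$ and $\pimin$ are determined by $\epsilon$ and the fixed regular hidden-Markov input process alone, so the bound holds uniformly over all $N > 2^{\nu}$, which is precisely what is needed to feed into the submartingale argument of Theorem~\ref{thm:hatH_polarizes}.
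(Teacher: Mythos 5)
Your proposal is correct and follows essentially the same route as the paper: the paper's proof likewise sets $\Delta = \Delta'(\epsilon)\,(\pimin)^3/2$ and combines (\ref{eq:allStatesProbable}), (\ref{eq:hatH_alpha}), (\ref{eq:hatH_beta}), (\ref{eq:alphaGeqBeta}), and the triplet from Lemma~\ref{lemm:alphaSubstantiallyLargerThanBeta}. Your write-up simply makes explicit the single-term lower bound and the $n$-independence of $\Delta(\epsilon)$, both of which match the paper's argument.
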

\begin{IEEEproof}
    Take
\[
    \Delta =   \frac{\Delta' \cdot (\pimin)^3}{2} \; ,
\]
where $\Delta'$ is as defined in Lemma~\ref{lemm:alphaSubstantiallyLargerThanBeta}. Now, simply combine (\ref{eq:allStatesProbable}), (\ref{eq:hatH_alpha}), (\ref{eq:hatH_beta}), (\ref{eq:alphaGeqBeta}) and the existence of triplet $s_0,s_N,s_{2N}$ for which (\ref{eq:alphaSubstantiallyLargerThanBeta}) holds, to yield the claim.
\end{IEEEproof}

The following lemma will be useful.
\begin{lemm} \label{lem:order_rv_convergence}
For $n\in \mathbb{N}$, let $A_n$ and $B_n$ be real random variables defined on a common probability space.
Suppose $B_n$ converges in $L^1$ to $B_\infty$ and $E(A_n)$ converges to $E(B_\infty)$.
If $A_n \geq B_n$ for all $n\in \mathbb{N}$, then $A_n$ converges in $L^1$ to $B_\infty$.
\end{lemm}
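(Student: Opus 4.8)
The plan is to control $E|A_n-B_\infty|$ by the triangle inequality $|A_n-B_\infty|\le |A_n-B_n|+|B_n-B_\infty|$ and to show that each term on the right vanishes. The second term satisfies $E|B_n-B_\infty|\to 0$ directly from the hypothesized $L^1$ convergence, so the whole problem reduces to showing $E|A_n-B_n|\to 0$.

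The key step is to exploit the one-sided bound: since $A_n\ge B_n$ we have $|A_n-B_n|=A_n-B_n\ge 0$, hence $E|A_n-B_n|=E(A_n)-E(B_n)$. (Before writing this I would note that $A_n$ is integrable: its negative part is dominated by that of $B_n\in L^1$, and $E(A_n)$ is assumed to converge to the finite quantity $E(B_\infty)$, so $A_n\in L^1$ at least for all large $n$, which is all we need.) Now use that $L^1$ convergence of $B_n$ to $B_\infty$ forces $E(B_n)\to E(B_\infty)$, while $E(A_n)\to E(B_\infty)$ holds by assumption; subtracting gives $E(A_n)-E(B_n)\to 0$, i.e.\ $E|A_n-B_n|\to 0$.

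Putting the two estimates together,
\[
    E|A_n-B_\infty|\;\le\; \bigl(E(A_n)-E(B_n)\bigr)+E|B_n-B_\infty|\;\longrightarrow\;0,
\]
which is exactly the assertion that $A_n\to B_\infty$ in $L^1$. I do not expect any genuine obstacle here: the argument is a routine combination of the triangle inequality, the fact that $L^1$ convergence preserves expectations, and the sign information $A_n\ge B_n$ that turns $E|A_n-B_n|$ into the difference of expectations. The only point deserving a sentence of care is the integrability of $A_n$, handled as noted above.
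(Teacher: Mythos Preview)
Your proof is correct and follows essentially the same approach as the paper: triangle inequality on $|A_n-B_\infty|$, then using $A_n\ge B_n$ to turn $E|A_n-B_n|$ into $E(A_n)-E(B_n)$, which vanishes since both expectations converge to $E(B_\infty)$. Your aside on the integrability of $A_n$ is a small care point the paper omits, but otherwise the arguments are identical.
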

\begin{IEEEproof}
By definition, $B_n$ converges to $B_\infty$ in $L^1$ if and only if $E(|B_n-B_\infty|) \to 0$.
Thus, by the triangle inequality,
\begin{align*}
E(|A_n - B_\infty|) &\leq E(|A_n - B_n|) + E(|B_n - B_\infty|) \\
& = E(A_n - B_n) + E(|B_n - B_\infty |) \\
& = E(A_n) - E(B_n) + E(|B_n - B_\infty|).
\end{align*}
In the limit, the first two terms converge to $E(B_\infty)$ and the last term converges to $0$.
Thus, $E(|A_n - B_\infty|) \to 0$.
\end{IEEEproof}

The following theorem claims weak polarization for the three cases discussed earlier.
\begin{theo}
    \label{theo:slowPolarization}
    Fix $\epsilon \in (0,1)$ and let $N = 2^n$. For a given hidden-Markov input distribution, let $\vecX = (X_1,X_2,\ldots,X_N)$ be a random vector with the above distribution. Let $\vecY$ be the result of passing $\vecX$ through a deletion channel with deletion probability $\delta$. Denote $\vecU = \calA(\vecX)$. Let $S_0$ and $S_N$ be as in Definition~\ref{def:FAIM}. Then, 
\begin{IEEEeqnarray}{rCl}
    \IEEEyesnumber\label{eq:SlowLowEntropy}\IEEEyessubnumber*
    && \lim_{n \to \infty} \frac{\mysize{\myset{i : H(U_i | U_1^{i-1},\vecY,S_0,S_N) < \epsilon}}}{N} \label{eq:slowLowEntropy_stateInformed} \\
    & = &  \lim_{n \to \infty} \frac{\mysize{\myset{i : H(U_i | U_1^{i-1},\vecY) < \epsilon}}}{N} \label{eq:slowLowEntropy_regular} \\
    & = & \lim_{n \to \infty} \frac{\mysize{\myset{i : H(U_i | U_1^{i-1},\vecY^\DPT) < \epsilon}}}{N} \label{eq:slowLowEntropy_TDC}\\
    & = & 1 - \lim_{n \to \infty} \frac{H(\vecX|\vecY)}{N} \label{eq:slowLowEntropy_target}
\end{IEEEeqnarray}
and
\begin{IEEEeqnarray}{rCl}
    \IEEEyesnumber\label{eq:SlowHighEntropy}\IEEEyessubnumber*
    && \lim_{n \to \infty} \frac{\mysize{\myset{i : H(U_i | U_1^{i-1},\vecY,S_0,S_N) > 1- \epsilon}}}{N} \label{eq:slowHighEntropy_stateInformed}\\
    & = &  \lim_{n \to \infty} \frac{\mysize{\myset{i : H(U_i | U_1^{i-1},\vecY) > 1- \epsilon}}}{N} \label{eq:slowHighEntropy_regular} \\
    & = & \lim_{n \to \infty} \frac{\mysize{\myset{i : H(U_i | U_1^{i-1},\vecY^\DPT) > 1- \epsilon}}}{N} \label{eq:slowHighEntropy_TDC} \\
    & = & \lim_{n \to \infty} \frac{H(\vecX|\vecY)}{N} \label{eq:slowHighEntropy_target}
\end{IEEEeqnarray}
\end{theo}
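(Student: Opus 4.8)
The plan is to derive all of \eqref{eq:SlowLowEntropy} and \eqref{eq:SlowHighEntropy} from the already-proved polarization of the boundary-state-aware process $\hat{H}_n$ (Theorem~\ref{thm:hatH_polarizes}), using nothing more than a first-moment comparison together with the pointwise sandwich $\hat{H}_n \leq H_n \leq H_n^\DPT$, which holds for \emph{every} realization of the random index $J_n$ because $\hat{h}_i \leq h_i \leq h_i^\DPT$ for each $i$.

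\textbf{Step 1: identify the polarized mass.} First I would note that, since the $B_j$ are i.i.d.\ $\mathrm{Ber}(1/2)$, the index $J_n = i(B_1,\ldots,B_n)$ is uniform on $[N]$; hence, by the chain rule and the fact that $\vecU = \calA_n(\vecX)$ is a bijection,
\[ E(\hat{H}_n) = \tfrac{1}{N} H(\vecX|\vecY,S_0,S_N) = \tfrac{1}{N}\hat{\calH}_N, \qquad E(H_n) = \tfrac{1}{N} H(\vecX|\vecY) = \tfrac{1}{N}\calH_N . \]
From Theorem~\ref{thm:hatH_polarizes} we have $\hat{H}_n \to \hat{H}_\infty \in \{0,1\}$ in $L^1$, so $E(\hat{H}_n) \to \Pr(\hat{H}_\infty = 1)$; combining this with Lemma~\ref{lemm:HatNoHatLimitsEqual} (which gives $\lim \hat{\calH}_N/N = \lim \calH_N/N$) pins down $\Pr(\hat{H}_\infty = 1) = \lim_{N} H(\vecX|\vecY)/N$ and $\Pr(\hat{H}_\infty = 0) = 1 - \lim_{N} H(\vecX|\vecY)/N$. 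Because $\hat{H}_\infty$ has atoms only at $0$ and $1$, the almost-sure convergence already noted in Theorem~\ref{thm:hatH_polarizes} yields convergence of $\Pr(\hat{H}_n < \epsilon)$ and $\Pr(\hat{H}_n > 1-\epsilon)$ to $\Pr(\hat{H}_\infty = 0)$ and $\Pr(\hat{H}_\infty = 1)$, which is exactly the statement that line \eqref{eq:slowLowEntropy_stateInformed} equals the target \eqref{eq:slowLowEntropy_target} and that line \eqref{eq:slowHighEntropy_stateInformed} equals the target \eqref{eq:slowHighEntropy_target}.

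\textbf{Step 2: the regular and trimmed channels by a squeeze.} For the regular channel, $E(H_n) = \calH_N/N \to \lim H(\vecX|\vecY)/N = E(\hat{H}_\infty)$, while $H_n \geq \hat{H}_n$ and $\hat{H}_n \to \hat{H}_\infty$ in $L^1$; Lemma~\ref{lem:order_rv_convergence} (with $A_n = H_n$, $B_n = \hat{H}_n$) then forces $H_n \to \hat{H}_\infty$ in $L^1$, hence in distribution, and the continuity-point argument of Step~1 (squeezing $\Pr(H_n<\epsilon)$ between $\Pr(H_n\le\epsilon/2)$ and $\Pr(H_n\le\epsilon)$) gives lines \eqref{eq:slowLowEntropy_regular} and \eqref{eq:slowHighEntropy_regular}. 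For the trimmed channel, the only new point is that $E(H_n^\DPT) = H(\vecX|\vecY^\DPT)/N$ has the same limit: since $\vecY^\DPT$ is a function of $\vecY$ we get $H(\vecX|\vecY^\DPT) \geq H(\vecX|\vecY)$, and since $\vecY$ is recovered from $\vecY^\DPT$ by prepending and appending some numbers $L,R \in \{0,1,\ldots,N\}$ of zeros, $H(\vecX|\vecY^\DPT) \leq H(\vecX|\vecY) + H(\vecY|\vecY^\DPT) \leq H(\vecX|\vecY) + 2\log_2(N+1)$. Dividing by $N$ and applying Lemma~\ref{lem:order_rv_convergence} with $A_n = H_n^\DPT \geq \hat{H}_n$ gives $H_n^\DPT \to \hat{H}_\infty$ in $L^1$, and then lines \eqref{eq:slowLowEntropy_TDC} and \eqref{eq:slowHighEntropy_TDC} follow as before.

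\textbf{Expected obstacle.} No single step is deep; the two places that need genuine care are (i) identifying $\Pr(\hat{H}_\infty = 1)$ through the first-moment identity combined with Lemma~\ref{lemm:HatNoHatLimitsEqual}, and (ii) the $O(\log N)$ bound on $H(\vecY|\vecY^\DPT)$, which is precisely what makes trimming asymptotically costless; here one should handle the degenerate outputs (empty $\vecY$, or $\vecY$ all zeros) explicitly, though they do not change the bound. The rest is routine: invoking Lemma~\ref{lem:order_rv_convergence} and converting $L^1$ convergence into convergence of the normalized index counts $\frac{1}{N}|\{i : \cdot < \epsilon\}|$ and $\frac{1}{N}|\{i : \cdot > 1-\epsilon\}|$ at the continuity points $\epsilon$ and $1-\epsilon$.
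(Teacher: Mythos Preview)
Your proposal is correct and follows essentially the same approach as the paper: identify $\Pr(\hat{H}_\infty=1)$ via the first-moment identity and Lemma~\ref{lemm:HatNoHatLimitsEqual}, then push convergence from $\hat{H}_n$ to $H_n$ and $H_n^\DPT$ using Lemma~\ref{lem:order_rv_convergence} together with an $O(\log N)$ entropy bound for the trimming. The only cosmetic difference is that the paper handles $H_n^\DPT$ first (sandwiching against $H(\vecU|\vecY,S_0,S_N)$, hence picking up a $2\log_2|\calS|$ term) and then recovers $H_n$ by the pointwise sandwich $\hat{H}_n \leq H_n \leq H_n^\DPT$, whereas you apply Lemma~\ref{lem:order_rv_convergence} directly to both $H_n$ and $H_n^\DPT$, which is slightly cleaner.
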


\begin{IEEEproof}
    For simplicity, the proof is split into 4 parts.

    \paragraph*{Part I: (\ref{eq:slowLowEntropy_target}) and (\ref{eq:slowHighEntropy_target}) are well defined}
    Recall from Lemma~\ref{lemm:twoLimitsExist} that $\lim_{n \to \infty} H(\vecX|\vecY)/N$ exists. Thus, the right hand sides of both (\ref{eq:slowLowEntropy_target}) and (\ref{eq:slowHighEntropy_target}) are well defined.
    
    \paragraph*{Part II: (\ref{eq:slowLowEntropy_stateInformed})$=$(\ref{eq:slowLowEntropy_target}) and (\ref{eq:slowHighEntropy_stateInformed})$=$(\ref{eq:slowHighEntropy_target})}

    Since the \Arikan\ transform is invertible, it follows that $\hat{\calH}_N = H(\vecX|\vecY,S_0,S_N) = H(\vecU|\vecY,S_0,S_N)$, where $\hat{\calH}_N$ is defined in (\ref{eq:Hhat}). Thus, from the chain rule for entropy, we observe that
    \begin{align*} E(\hat{H}_n) &= \frac{1}{N} \sum_{i=1}^N H(U_i | U_1^{i-1},\vecY,S_0,S_N) \\ &= \frac{1}{N} H(\vecU|\vecY,S_0,S_N) \\ &= \frac{1}{N} \hat{\calH}_N .
    \end{align*}    
    From Theorem~\ref{thm:hatH_polarizes}, we see that $\hat{H}_n$ converges in $L^1$ to $\hat{H}_\infty \in \{0,1\}$.
    This implies that $E(\hat{H}_\infty) = \lim_{n\to\infty} E(\hat{H}_n)$ which exists and equals $\lim_{N \to \infty} \hat{\calH}_N/N$ by Lemma~\ref{lemm:HHatLimitExists}.
    Since $\hat{H}_\infty \in \{0,1\}$, observing that $E(\hat{H}_\infty) = \Pr (\hat{H}_\infty = 1) $ shows that
    \begin{align*}
	\eqref{eq:slowHighEntropy_stateInformed} &= \lim_{n \to \infty} \Pr (\hat{H}_n > 1-\epsilon) = \Pr (\hat{H}_\infty = 1) = \lim_{n \to \infty} \frac{1}{N} \hat{\calH}_N,
	\end{align*}
    where the second equality holds because convergence in $L^1$ implies convergence in distribution and $1-\epsilon$ is a continuity point of $\Pr(\hat{H}_\infty \leq x)$~\cite[Ch.~4]{Durrett-2019}.
    Since Lemma~\ref{lemm:HatNoHatLimitsEqual} shows that $\lim_{N \to \infty} \hat{\calH}_N/N$ equals (\ref{eq:slowHighEntropy_target}), it follows that  (\ref{eq:slowHighEntropy_stateInformed}) equals (\ref{eq:slowHighEntropy_target}).
    The last step is observing that 
    \begin{align*}
        \eqref{eq:slowLowEntropy_stateInformed} &= \lim_{n \to \infty} \Pr (\hat{H}_n < \epsilon) = \Pr (\hat{H}_\infty = 0)=1-\Pr (\hat{H}_\infty = 1)         
    \end{align*}
    holds because convergence in $L^1$ implies convergence in distribution and $\epsilon$ is a continuity point of $\Pr(\hat{H}_\infty \leq x)$.   
    Thus,  (\ref{eq:slowLowEntropy_stateInformed}) equals (\ref{eq:slowLowEntropy_target}).

    \paragraph*{Part III: (\ref{eq:slowLowEntropy_TDC})$=$(\ref{eq:slowLowEntropy_target}) and (\ref{eq:slowHighEntropy_TDC})$=$(\ref{eq:slowHighEntropy_target})}
    To prove these equalities, we will apply Lemma~\ref{lem:order_rv_convergence} to the sequences $A_n = H_n^*$ and $B_n = \hat{H}_n$.
    Theorem~\ref{thm:hatH_polarizes} shows that $\hat{H}_n$ converges in $L^1$ to $\hat{H}_\infty$ and we established in the previous part that $E(\hat{H}_\infty)$ equals (\ref{eq:slowHighEntropy_target}).  
    From the definitions in~\eqref{eq:hatHn} and~\eqref{eq:HnTDC}, it follows that $H_n^* \geq \hat{H}_n$ for all $n \in \mathbb{N}$.
    The only other element required for Lemma~\ref{lem:order_rv_convergence} is that $E(H_n^*) \to E(\hat{H}_\infty)$ and this will be shown below.
    Assuming this for now, we observe Lemma~\ref{lem:order_rv_convergence} implies that $H_n^*$ converges in $L^1$ to $\hat{H}_\infty$ and gives the desired result
    \begin{align*}
    \eqref{eq:slowLowEntropy_TDC} &= \lim_{n\to\infty} \Pr(H_n^* < \epsilon) = \Pr(\hat{H}_{\infty} < \epsilon) = \eqref{eq:slowLowEntropy_target} \\
    \eqref{eq:slowHighEntropy_TDC} &= \lim_{n\to\infty} \Pr(H_n^* > 1- \epsilon) = \Pr(\hat{H}_{\infty} > 1- \epsilon) = \eqref{eq:slowHighEntropy_target},
    \end{align*}
    where the second equality on each line holds because convergence in $L^1$ implies convergence in distribution and $\epsilon,1-\epsilon$ are continuity points of $\Pr(\hat{H}_\infty \leq x)$~\cite[Ch.~4]{Durrett-2019}.   

    To show that $E(H_n^*) \to E(\hat{H}_\infty)$, we will use the fact that
    \begin{multlinecc}
        \label{eq:HUGivenYTDC_sandwich}
        H(\vecU | \vecY,S_0,S_N) \leq H(\vecU | \vecY^\DPT) \leq \\
        H( \vecU | \vecY,S_0,S_N) + 2 \log_2 |\calS| + 2 \log_2 (N+1) \; .
    \end{multlinecc}
    Indeed, the first inequality holds because $\vecY^\DPT$ is a function of $\vecY$. The second inequality follows from first noting that
    \[
         H(\vecU|\vecY^\DPT) \leq H(\vecY,S_0,S_N,\vecU |\vecY^\DPT) \; .  \\
    \]
    And then observing that
    \begin{IEEEeqnarray*}{rCl}
        \IEEEeqnarraymulticol{3}{l}{H(\vecY,S_0,S_N,\vecU |\vecY^\DPT)} \\
        \quad &=& H(\vecY | \vecY^\DPT) + H(S_0,S_N|\vecY,\vecY^\DPT ) + H(\vecU|\vecY,\vecY^\DPT,S_0,S_N) \\
        \quad &\eqann{a}& H(\vecY | \vecY^\DPT) + H(S_0,S_N|\vecY,\vecY^\DPT ) + H(\vecU|\vecY,S_0,S_N) \\
        \quad & \eqann[\leq]{b} & H(\vecY | \vecY^\DPT) + 2 \log_2 |\calS| + H(\vecU|\vecY,S_0,S_N) \\
        \quad & \eqann[\leq]{c} & 2 \log_2(N+1) + 2 \log_2 |\calS| + H(\vecU|\vecY,S_0,S_N) \; ,
    \end{IEEEeqnarray*}
    where \eqannref{a} follows from $\vecY^\DPT$ being a function of $\vecY$, \eqannref{b} follows by $S_0$ and $S_N$ each having a support of size $|\calS|$, and \eqannref{c} follows since in order to construct $\vecY$ from $\vecY^\DPT$, it suffices to be told how many `$0$' symbols have been trimmed from each side of $\vecY$, and both numbers are always between $0$ and $N$. Combining the above two displayed equations yields the RHS of (\ref{eq:HUGivenYTDC_sandwich}).

    Finally, we divide both sides of (\ref{eq:HUGivenYTDC_sandwich}) by $N$ and take the limit as $N\to\infty$.
    Since the left-most and right-most terms converge to $E(\hat{H}_\infty)$, the sandwich property implies that the center term, $E(H_n^*)$ also converges to this quantity.

    \paragraph*{Part IV: (\ref{eq:slowLowEntropy_stateInformed})$=$(\ref{eq:slowLowEntropy_regular})$=$(\ref{eq:slowLowEntropy_TDC}) and (\ref{eq:slowHighEntropy_stateInformed})$=$(\ref{eq:slowHighEntropy_regular})$=$(\ref{eq:slowHighEntropy_TDC})}
    Note that, for $1 \leq i \leq N$, we have
    \[
        H(U_i | U_1^{i-1},\vecY,S_0,S_N) \leq
        H(U_i | U_1^{i-1},\vecY) \leq 
        H(U_i | U_1^{i-1},\vecY^\DPT) \; .
    \]
    We have already proved that (\ref{eq:slowLowEntropy_stateInformed})$=$(\ref{eq:slowLowEntropy_TDC}) and (\ref{eq:slowHighEntropy_stateInformed})$=$(\ref{eq:slowHighEntropy_TDC}).
    Thus, by the sandwich property, (\ref{eq:slowLowEntropy_stateInformed})$=$(\ref{eq:slowLowEntropy_regular})$=$(\ref{eq:slowLowEntropy_TDC}) and (\ref{eq:slowHighEntropy_stateInformed})$=$(\ref{eq:slowHighEntropy_regular})$=$(\ref{eq:slowHighEntropy_TDC}).
\end{IEEEproof}

\section{Strong polarization}
\label{sec:strong}

To rigorously claim a coding scheme for the deletion channel, one must also show strong polarization. For this, Theorem~\ref{theo:slowPolarization} is not sufficient and, so far, we have been unable to prove strong polarization for the \emph{standard} polar code construction.
Thus, we will modify the standard coding scheme to proceed.

\subsection{Overview of Coding Scheme}
\label{sec:strong_setup}
Fix a deletion probability $\delta$ and a regular hidden Markov input distribution. Recall that our goal is to achieve the information rate $\calI$ given in (\ref{eq:informationRate}). For didactic reasons, we first consider a simplified setting in which this goal is easily attained. Specifically, let $N_0$ be given parameter, and consider a block-TDC with block length $N_0$ and deletion probability $\delta$. That is, for each input block $\vecX(\phi)$ of length $N_0$, where $\phi = 1, 2, \ldots$, the channel outputs $\vecY^\DPT(\phi)$, which is the result of passing $\vecX(\phi)$ through a TDC with deletion probability $\delta$. The crucial point to note is that, contrary to a deletion channel, the output of a block-TDC \emph{contains commas between segments}. That is, we know exactly which output segment corresponds to which input block.

How would one code for such a channel and achieve a rate approaching $\calI$? For this, we will assume that
\begin{equation}
    \label{eq:N0}
    N_0=2^{n_0} \; ,
\end{equation}
and that we can choose $N_0$ to be arbitrarily large. Let
\begin{equation}
    \label{eq:Phi}
    \Phi = 2^{n_1}
\end{equation}
be the number of blocks we will transmit through the channel. Consider the following input distribution: each block $\vecX(\phi)$ will be distributed according to the input distribution that we have fixed at the start of this subsection, and the input blocks $\vecX(1), \vecX(2), \ldots, \vecX(\Phi)$ will be \emph{i.i.d.} In a nutshell, this suffices to achieve a coding rate of $\calI$ with vanishing probability of error for the following two reasons.
First, Theorem~\ref{theo:slowPolarization} shows weak polarization for each block and, in each block, we have the required fractions of high-entropy/low-entropy indices.  Second, the independence between blocks implies that strong polarization will occur.

We now back the above claim with a few more details. We denote the output of the encoder --- the concatenation of the above blocks --- by
\begin{equation}
    \label{eq:vecx}
    \vecX = \vecX(1) \concat \vecX(2) \concat \cdots \concat \vecX(\Phi) \; .
\end{equation}
This output has length
\begin{equation}
    \label{eq:NN0Phi}
    N = N_0 \cdot \Phi = 2^{n_0+n_1} = 2^n \; .
\end{equation}
We will use a sans-serif font to denote a vector whose elements are `blocks'. Thus, we will denote the partitioning of the above $\vecX$ into blocks of length $N_0$ by 
\begin{equation}
    \label{eq:bigvecx}
    \bigvecX = (\vecX(1), \vecX(2), \ldots, \vecX(\Phi)) \; .
\end{equation}
The corresponding output of the block-TDC is denoted 
\begin{equation}
    \label{eq:bigVecYDPT}
    \bigvecY^\DPT = (\vecY^\DPT(1),\vecY^\DPT(2),\ldots,\vecY^\DPT(\Phi)) \; .
\end{equation}
That is, $\bigvecY^\DPT$ is comprised of $\Phi$ distinguishable blocks --- it is \emph{not} simply the concatenation of the $\vecY^\DPT(\phi)$. The superscript `$\DPT$' in $\bigvecY^\DPT$ suggest that trimming operation is applied \emph{blockwise}.

We first consider the polar transform of $\vecX(\phi)$, denoted\footnote{We reserve the letter $U$, commonly used to denote the result of a polar transform, for a related yet distinct definition that is yet to appear.}
\begin{equation}
    \label{eq:Vphi}
    \vecV(\phi) = \calA(\vecX(\phi)) \; ,
\end{equation}
 where $1 \leq \phi \leq \Phi$. Note that $\vecV(\phi)$ is a binary vector of length $N_0$,
\[
    \vecV(\phi) = (V_1(\phi),V_2(\phi),\ldots,V_{N_0}(\phi)) \; .
\]

Recall that $\vecY^\DPT(\phi)$ is the output corresponding to $\vecX(\phi)$, and note that since we have assumed that the $\vecX(\phi)$ are i.i.d., then this must also hold for triplets $(\vecX(\phi),\vecV(\phi),\vecY^\DPT(\phi))$, when ranging over $1 \leq \phi \leq \Phi$.

For a fixed $1 \leq \phi \leq \Phi$ and a given $1 \leq i_0 \leq N_0$, consider the pair of entropies
\begin{multlinecc}
    \label{eq:twoIdealizedEntropies}
    H(V_{i_0}(\phi)|V_1^{i_0-1}(\phi) ,\vecY^\DPT(\phi)) \;\; \mbox{and} \onlysingle{\;\;\quad} \\
    H(V_{i_0}(\phi)|V_1^{i_0-1}(\phi))\; .
\end{multlinecc}
We now make two important observations. First, since we have already established that the $(\vecX(\phi),\vecV(\phi),\vecY^\DPT(\phi))$ are i.i.d.\ over $\phi$, we deduce that (\ref{eq:twoIdealizedEntropies}) is independent of $\phi$. Second, both entropies in (\ref{eq:twoIdealizedEntropies}) exhibit slow polarization, in the sense of Theorem~\ref{theo:slowPolarization}. That is, on one hand, we deduce that (\ref{eq:slowLowEntropy_TDC})$=$(\ref{eq:slowLowEntropy_target}) and (\ref{eq:slowHighEntropy_TDC})$=$(\ref{eq:slowHighEntropy_target}), if in both (\ref{eq:slowLowEntropy_TDC}) and (\ref{eq:slowHighEntropy_TDC}) we replace $U_i$, $U_1^{i-1}$, $\vecY^\DPT$, $n$ and $N$ by $V_{i_0}(\phi)$, $V_1^{i_0-1}(\phi)$, $\vecY^\DPT(\phi)$, $n_0$ and $N_0$, respectively. These statements hold for all $\delta\in [0,1]$.  For the special case of $\delta=1$, one gets a degenerate channel where $\vecY^\DPT(\phi)$ always equals the empty string. Thus, on the other hand, the same claim of (\ref{eq:slowLowEntropy_TDC})$=$(\ref{eq:slowLowEntropy_target}) and (\ref{eq:slowHighEntropy_TDC})$=$(\ref{eq:slowHighEntropy_target}), under the above substitutions continues to hold, with $\vecY$ and $\vecY^\DPT$ removed from these equations.

Since the first entropy in (\ref{eq:twoIdealizedEntropies}) is always less than or equal to the second, we deduce from the above paragraph and the first half of Theorem~\ref{theo:slowPolarization} that for $\epsilon \in (0,1)$ fixed, the fraction of indices $i_0$ for which
\begin{multlinecc*}
    H(V_{i_0}(\phi)|V_1^{i_0-1}(\phi) ,\vecY^\DPT(\phi)) < \epsilon \quad \mbox{and} \onlysingle{\quad\quad} \\
    H(V_{i_0}(\phi)|V_1^{i_0-1}(\phi)) \geq \epsilon
\end{multlinecc*}
tends to
\begin{multlinecc*}
    \left( 1 - \lim_{n_0 \to \infty} \frac{H(\vecX(\phi)|\vecY(\phi)))}{N_0} \right) \\
    - \left( 1 - \lim_{n_0 \to \infty} \frac{H(\vecX(\phi))}{N_0} \right) = \calI \; ,
\end{multlinecc*}
as $n_0 \to \infty$. For simplicity of exposition, let us further restrict $\epsilon$ to $\epsilon \in (0,1/2)$.  By both halves of Theorem~\ref{theo:slowPolarization}, we deduce that the fraction of indices $i_0$ for which
\[
    \epsilon \leq H(V_{i_0}(\phi)|V_1^{i_0-1}(\phi)) \leq 1 - \epsilon
\]
vanishes. The conclusion is stated as a lemma, for future reference.
\begin{lemm}
    \label{lemm:slowPolarizationStar}
    For $\epsilon \in (0,1/2)$ fixed, the fraction of indices $1 \leq i_0 \leq N_0$ for which
\begin{multlinecc}
    \label{eq:goodIndexI}
    H(V_{i_0}(\phi)|V_1^{i_0-1}(\phi) ,\vecY^\DPT(\phi)) < \epsilon \quad \mbox{and} \onlysingle{\quad\quad} \\
    H(V_{i_0}(\phi)|V_1^{i_0-1}(\phi)) > 1 - \epsilon
\end{multlinecc}
tends to $\calI$, as $n_0 \to \infty$, and is the same for every $1 \leq \phi \leq \Phi$.
\end{lemm}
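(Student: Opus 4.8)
The plan is to reduce the statement to two applications of Theorem~\ref{theo:slowPolarization} to a single block, together with a symmetry observation for the ``same for every $\phi$'' clause. First I would note that, since the blocks $\vecX(\phi)$ are i.i.d., $\vecV(\phi) = \calA(\vecX(\phi))$, and $\vecY^\DPT(\phi)$ is produced from $\vecX(\phi)$ alone by the (block) TDC, the joint law of the triple $(\vecX(\phi),\vecV(\phi),\vecY^\DPT(\phi))$ does not depend on $\phi$. Hence both entropies appearing in~(\ref{eq:goodIndexI}), and therefore the set of indices $i_0$ satisfying~(\ref{eq:goodIndexI}), is identical for every $1 \leq \phi \leq \Phi$; this already settles the final clause. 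It then suffices to fix one $\phi$, drop it from the notation, and show that the fraction of indices $i_0$ with $H(V_{i_0}|V_1^{i_0-1},\vecY^\DPT) < \epsilon$ and $H(V_{i_0}|V_1^{i_0-1}) > 1-\epsilon$ tends to $\calI$.

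Next I would invoke Theorem~\ref{theo:slowPolarization} twice. The first invocation applies it to the length-$N_0$ input $\vecX(\phi)$ with deletion probability $\delta$, using the substitutions $U_i,U_1^{i-1},\vecY^\DPT,n,N \mapsto V_{i_0},V_1^{i_0-1},\vecY^\DPT(\phi),n_0,N_0$. The equalities (\ref{eq:slowLowEntropy_TDC})$=$(\ref{eq:slowLowEntropy_target}) and (\ref{eq:slowHighEntropy_TDC})$=$(\ref{eq:slowHighEntropy_target}) then say that the fraction of $i_0$ with $H(V_{i_0}|V_1^{i_0-1},\vecY^\DPT) < \epsilon$ tends to $a \triangleq 1 - \lim_{n_0\to\infty} H(\vecX)/N_0$... more precisely to $a \triangleq 1 - \lim_{n_0\to\infty} H(\vecX(\phi)|\vecY(\phi))/N_0$, and the fraction with this entropy $> 1-\epsilon$ tends to $1-a$; both limits exist by Lemma~\ref{lemm:twoLimitsExist}. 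The second invocation applies the same theorem with deletion probability $\delta = 1$: in that degenerate case $\vecY$, and hence $\vecY^\DPT$, is almost surely the empty string, so $H(V_{i_0}|V_1^{i_0-1},\vecY^\DPT) = H(V_{i_0}|V_1^{i_0-1})$ and $\lim_{n_0\to\infty} H(\vecX(\phi)|\vecY(\phi))/N_0 = \lim_{n_0\to\infty} H(\vecX(\phi))/N_0$. The theorem then gives that the fraction of $i_0$ with $H(V_{i_0}|V_1^{i_0-1}) < \epsilon$ tends to $b \triangleq 1 - \lim_{n_0\to\infty} H(\vecX(\phi))/N_0$ and the fraction with this entropy $> 1-\epsilon$ tends to $1-b$.

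Finally I would assemble these facts. Since $H(V_{i_0}|V_1^{i_0-1},\vecY^\DPT) \leq H(V_{i_0}|V_1^{i_0-1})$, the index set where $H(V_{i_0}|V_1^{i_0-1}) < \epsilon$ is contained in the index set where $H(V_{i_0}|V_1^{i_0-1},\vecY^\DPT) < \epsilon$, so the fraction of $i_0$ with $H(V_{i_0}|V_1^{i_0-1},\vecY^\DPT) < \epsilon$ and $H(V_{i_0}|V_1^{i_0-1}) \geq \epsilon$ tends to $a - b = \lim_{n_0\to\infty}\bigl(H(\vecX(\phi)) - H(\vecX(\phi)|\vecY(\phi))\bigr)/N_0 = \calI$. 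Using $\epsilon < 1/2$, the fraction of $i_0$ with $\epsilon \leq H(V_{i_0}|V_1^{i_0-1}) \leq 1-\epsilon$ equals $1$ minus the fraction below $\epsilon$ minus the fraction above $1-\epsilon$, hence tends to $1 - b - (1-b) = 0$; so replacing the condition $H(V_{i_0}|V_1^{i_0-1}) \geq \epsilon$ by $H(V_{i_0}|V_1^{i_0-1}) > 1-\epsilon$ changes the counted index set by only $o(N_0)$ indices, and the fraction of ``good'' indices of~(\ref{eq:goodIndexI}) therefore also tends to $\calI$. Combined with the $\phi$-independence from the first paragraph, this proves the lemma.

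I do not expect a genuine obstacle: the substance is entirely contained in Theorem~\ref{theo:slowPolarization} and in the existence of the entropy-rate limits (Lemma~\ref{lemm:twoLimitsExist}), both already available. The only points demanding care are bookkeeping ones — pairing the correct half (low-entropy versus high-entropy) of Theorem~\ref{theo:slowPolarization} with the correct channel instantiation (the TDC with parameter $\delta$ versus the degenerate $\delta = 1$ channel), verifying that the $\delta=1$ specialization indeed collapses the TDC-conditioned entropy to the unconditioned one, and keeping straight the inclusions among the three relevant index sets when passing from the ``$\geq \epsilon$'' formulation to the ``$> 1-\epsilon$'' formulation used in the lemma.
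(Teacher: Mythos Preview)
Your proposal is correct and follows essentially the same route as the paper: the paper's argument (given in the paragraphs immediately preceding the lemma) likewise observes the $\phi$-independence from the i.i.d.\ block structure, applies Theorem~\ref{theo:slowPolarization} once to the TDC with parameter $\delta$ and once in the degenerate case $\delta=1$ to handle the two entropies in~(\ref{eq:twoIdealizedEntropies}), uses the monotonicity $H(V_{i_0}|V_1^{i_0-1},\vecY^\DPT) \leq H(V_{i_0}|V_1^{i_0-1})$ to obtain the difference $\calI$, and then invokes both halves of Theorem~\ref{theo:slowPolarization} to show the ``middle'' fraction vanishes so that $\geq \epsilon$ can be upgraded to $> 1-\epsilon$. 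Your bookkeeping of the index-set inclusions and the $a-b$ computation matches the paper's reasoning.
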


We now note that for a given $\phi$ and $i_0$, we have an efficient method of calculating the probabilities corresponding to (\ref{eq:goodIndexI}). Namely, this is achieved by using the base trellis defined for a TDC in Subsection~\ref{subsec:TrellisTDC}, applying a series of plus and minus polarization operations on it, according to the binary representation of $i_0-1$, and then invoking (\ref{eq:trellisConditionalProbCalc}). That is, the only thing stopping us from applying the Honda-Yamamoto scheme \cite{Honda_Yamamoto_2013} at this point is the fact that the above $\epsilon$ is fixed.

Informally, we overcome the above problem as follows. Take $\epsilon$ `small' and $n_0$ as well as $n_1$ `large'. Consider a `good' index $i_0$. That is, an index $i_0$ for which (\ref{eq:goodIndexI}) holds. This will be the case for a fraction of indices `very close' to $\calI$. Next, recall the definition of $\vecX$ in (\ref{eq:vecx}), and denote its polar transform as \[
    \vecU = \calA(\vecX) \; .
\]

Consider the subvector $U_{(i_0-1) \cdot \Phi + 1}^{i_0 \cdot \Phi}$. It is not hard to prove that
\begin{equation}
    \label{eq:vecVSlice}
    U_{(i_0-1) \cdot \Phi + 1}^{i_0 \cdot \Phi} = \calA((V_{i_0}(1),V_{i_0}(2),\ldots,V_{i_0}(\Phi))) \; .
\end{equation}
That is, the LHS of (\ref{eq:vecVSlice}) is gotten by applying the \Arikan\ transform to the vector $(V_{i_0}(1),V_{i_0}(2),\ldots,V_{i_0}(\Phi))$. Since each entry of this vector satisfies (\ref{eq:goodIndexI}), `almost all' indices $i$ of $\vecU$, where $(i_0-1) \cdot \Phi + 1 \leq i \leq i_0 \cdot \Phi$ are strongly polarized. That is, satisfy
\begin{multlinecc}
    \label{eq:goodIndexJ}
    Z(U_i|U_1^{i-1} ,\bigvecY^\DPT) < 2^{-n_1 \beta} \quad \mbox{and} \onlysingle{\quad\quad} \\
    K(U_i|U_1^{i-1}) < 2^{-n_1 \beta}
\end{multlinecc}
where $Z$ and $K$ are the conditional Bhattacharyya parameter 
and the conditional total variation (see Definitions~\ref{defi:Z} and \ref{defi:K} in Appendix~\ref{sec:ZK})
, $\beta < 1/2$ is some fixed constant, and $\bigvecY^\DPT$ is the block-TDC output vector defined in (\ref{eq:bigVecYDPT}). That is, the overall fraction of useful indices $1 \leq i \leq N_0 \Phi$ with respect to the Honda-Yamamoto scheme will be `very close' to $\calI$, and the error of the scheme will approach $0$ at a rate of roughly $2^{-\sqrt{N_1}}$.

The reader may not be surprised to learn that the above informal statements can be made rigorous and proven\footnote{Such a proof is not a straightforward adaptation of the ideas in \cite{Arikan_2009} and \cite{Arikan_Telatar_2009}. Namely, it requires the use of \cite[Lemma 40]{ShuvalTal:18a}, which we indeed invoke in the proof of Theorem~\ref{theo:main}.}. Indeed, this will be done as part of the proof of Theorem~\ref{theo:main}. However, one important point remains to be addressed. That is, the channel we will in fact be coding for is the deletion channel, and \emph{not} the block-TDC. Hence, in the above description, we have implicitly assumed a genie which has manufactured the punctuated vector $\bigvecY^\DPT$ for us. The purpose of the guard-bands, defined shortly, is to approximate such a genie in practice.

Our actual coding scheme will be as follows. For the encoding step, we will first use the Honda-Yamamoto scheme with respect to the block-TDC. I.e., the information bits will be placed in indices $j$ of $\vecU$ for which (\ref{eq:goodIndexJ}) holds. The resulting codeword will be $\vecX$. Then, we will add to $\vecX$ runs of `$0$' symbols in key locations, and transmit the resulting word (which will be longer than $\vecX$) on the deletion channel. On the decoder side, a preliminary step will be to deduce the punctuated vector $\bigvecY^\DPT$ from the received vector $\vecY$. That is, we will remove the guard bands (and trim the $\vecY(\phi)$ into $\vecY^\DPT(\phi)$ in the process), thus producing $\bigvecY^\DPT$. Then, the decoder will be applied on $\bigvecY^\DPT$ to yield $\vecU$, and thus the information bits.

\subsection{Guard bands}
In this subsection, we first describe how the guard bands are added to $\vecX$ on the encoder side. We then explain how the decoder deduces the punctuated vector $\bigvecY^\DPT$ from the received vector $\vecY$. 

\begin{figure}
\small
    \begin{equation*}
        \begin{array}{cccccccc}
            \vecV(1) & & \vecV(2) & & \vecV(\phi) & \cdots & \vecV(\Phi) \\
            \xuparrow{0.7cm} \calA & & \xuparrow{0.7cm} \calA & & \xuparrow{0.7cm} \calA && \xuparrow{0.7cm} \calA \\[5mm]
            \vecX(1) & 00\ldots 0 & \vecX(2) & 00\ldots 0 & \vecX(\phi) & \ldots & \vecX(\Phi)
        \end{array}
    \end{equation*}
\normalsize

\caption{The $\Phi = N/N_0$ blocks, denoted $\vecX(1), \vecX(2), \ldots, \vecX(\Phi)$, have length $N_0=2^{n_0}$, are i.i.d., and each is distributed according to the regular hidden-Markov input distribution. Their polar transforms are $\vecV(1), \vecV(2), \ldots, \vecV(\Phi)$. An additional $n-n_0$ polarization steps (not shown) will be applied to $\vecV(1),\vecV(2),\ldots,\vecV(\Phi)$, resulting in $\vecU$. The transmitted codeword is gotten by separating consecutive $\vecX(\cdot)$ vectors by a `guard band'. That is, by a string of `$0$' symbols. The length of the guard bands is not constant. For example, the middle guard band is always the longest, while the first and last guard bands are always the shortest.}
\label{fig:manyGuardBands}
\end{figure}

We start by defining how guard bands are added between the blocks $\vecX(1),\vecX(2),\ldots,\vecX(\Phi)$, see Figure~\ref{fig:manyGuardBands}. That is, we define how $\vecX$ is transformed into $g(\vecX)$. This is done in a simple recursive manner. Informally, let $\vecx$ be a vector of length $2^n$. If this length is greater than the designated block-length $N_0$, we halve $\vecx$, add $\ell_n$ `$0$' symbols in the middle, and then apply $g$ recursively to each original half. Namely, for $\vecx = \xI \concat \xII \in \calX^{2^n}$ with
\[
    \xI = x_1^{2^{n-1}} \in \calX^{2^{n-1}} \; , \quad  \xII = x_{2^{n-1}+1}^{2^n} \in \calX^{2^{n-1}} 
\]
being the first and second halves of $\vecx$, respectively, we define
\begin{IEEEeqnarray}{rCl}
    \label{eq:guardBand}
    g(\vecx) &\triangleq &
    \begin{cases}
    \vecx & \text{if }n \leq n_0
    \\
        g(\xI) \concat \overbrace{00 \ldots 0}^{\ell_n} \concat g(\xII) & \text{if }n> n_0, \\
    \end{cases} \\
    \noalign{\noindent and\vspace{2\jot}}
    \ell_n & \triangleq & \floor{2^{(1-\xi) (n-1)}}, \label{eq:ln}
\end{IEEEeqnarray}
where $\xi \in (0,1/2)$ is a yet-to-be-specified `small' constant. The parameter $\xi$ controls the rate penalty of adding guard bands, on one hand, and the probability of the decoder successfully removing the guard bands, on the other hand.  We will require that $n_0 > 1$, so that the inequality
\begin{equation}
    \label{eq:lnSimpleInequality}
\ell_n > 2^{(n-1)(1-\xi)-1} 
\end{equation}
used later on will hold for all relevant $n$, i.e., for $n > n_0$. Note the above specifically implies that $\ell_n > 0$.

We now explain how the guard bands are removed, from the received word $\vecY$, in order to produce the punctuated sequence $\bigvecY^\DPT$ defined in (\ref{eq:bigVecYDPT}). Equivalently, we now show a procedure with the following outcome: for each block index $1 \leq \phi < \Phi$, we will produce the trimmed vector $\vecY^*(\phi)$ corresponding to the block $\vecX(\phi)$. Before explaining how this is done, we first mention that our method has a small yet non-zero probability of failing. That is, there is a non-zero probability that our method will fail to produce $\bigvecY^\DPT$.  This probability will be analyzed at a later stage.

    \begin{figure}
        \begin{center}
        \begin{tikzpicture}[scale=1.3]
        \node at (-3.15,4.75) {$\vecX$};
        \draw [very thick,draw=black] (-1.3,5) rectangle (0,4.5) node[pos=.5] {$\XI$};
        \draw [very thick,draw=black] (0,5) rectangle (1.3,4.5) node[pos=.5] {$\XII$};
        
        \node at (-3.15,3.75) {$\vecG$};
        \draw [very thick,draw=black] (-2.5,4) rectangle (-0.5,3.5) node[pos=.5] {$\GI$};
        \draw [very thick,draw=black] (-0.5,4) rectangle (0.5,3.5) node[pos=.5] {$\Gstar$};
        \draw [very thick,draw=black] (0.5,4) rectangle (2.5,3.5) node[pos=.5] {$\GII$};
        
        \draw[dashed] (-1.3,4.5) -- (-2.5,4);
        \draw[dashed] (1.3,4.5) -- (2.5,4);
        \draw[dashed] (0,4.5) -- (-0.5,4);
        \draw[dashed] (0,4.5) -- (0.5,4);
        
        \node at (-3.15,2.75) {$\vecY$};
        \draw [very thick,draw=black] (-2.2,3) rectangle (-0.60,2.5) node[pos=.5] {$\YI$};
        \draw [very thick,draw=black] (-0.60,3) rectangle (0.30,2.5) node[pos=.5] {$\Ystar$};
        \draw [very thick,draw=black] (0.30,3) rectangle (2.2,2.5) node[pos=.5] {$\YII$};
        
        \draw[dashed] (-2.5,3.5) -- (-2.2,3);
        \draw[dashed] (2.5,3.5) -- (2.2,3);
        \draw[dashed] (-0.5,3.5) -- (-0.60,3);
        \draw[dashed] (0.5,3.5) -- (0.30,3);
        
        \node at (-3.15,1.75) {$\vecZ =\vecY^*$};
        \draw [very thick,draw=black] (-2.0,2) rectangle (-0.60,1.5) node[pos=.5] {$\ZI$};
        \draw [very thick,draw=black] (-0.60,2) rectangle (0.30,1.5) node[pos=.5] {$\Zstar$};
        \draw [very thick,draw=black] (0.30,2) rectangle (2.0,1.5) node[pos=.5] {$\ZII$};
        
        \draw[dashed] (-2.2,2.5) -- (-2.0,2);
        \draw[dashed] (2.2,2.5) -- (2.0,2);
        \draw[dashed] (-0.60,2.5) -- (-0.6,2);
        \draw[dashed] (0.30,2.5) -- (0.3,2);
        
        \end{tikzpicture}
        \end{center}
        \caption{The random variables $\vecX$, $\vecG$, $\vecY$, and $\vecZ$.}
        \label{fig:XGYZ}
    \end{figure}

    Our procedure for producing $\bigvecY^\DPT$ will have a preliminary step, and will then involve a recursion. The preliminary step is simple: we trim the received vector $\vecY$ of leading and trailing zeros to produce $\vecY^*$. We stress that, generally, $\vecY^\DPT$ does \emph{not} equal the punctuated sequence $\bigvecY^\DPT$ defined in (\ref{eq:bigVecYDPT}).  In order to introduce notation required later on, let us now define the above operation more verbosely. Let $\XI$ and $\XII$ be the left and right halves of $\vecX$, see Figure~\ref{fig:XGYZ}. Thus, the transmitted word is $g(\vecX) = \GI \concat \Gstar \concat \GII$, where $\GI = g(\XI)$, $\GII = g(\XII)$, and $\Gstar$ is the middle guard band of length $\ell_n$, where $n$ is $\log_2$ of the length of $\vecX$. Clearly, $\GI$ and $\GII$ are of equal length. Denote the parts of $\vecY$ corresponding to $\GI$, $\Gstar$ and $\GII$ by $\YI$, $\Ystar$, and $\YII$, respectively. Note that at this stage, the decoder sees $\vecY$, but can only make an informed guess as to what parts of $\vecY$ constitute $\YI$, $\Ystar$, and $\YII$. We remove from the received word $\vecY$ all leading and trailing `$0$' symbols and denote the resulting vector $\vecZ = \vecY^*$. We denote the parts of $\vecZ$ corresponding to $\YI$, $\Ystar$, and $\YII$ by $\ZI$, $\Zstar$, and $\ZII$, respectively. In order to build up the reader's intuition, we note that in a `typical case', $\ZI$ is $\YI$ after the leading zeros have been removed, $\ZII$ is $\YII$ after the trailing zeros have been removed, and $\Zstar$ is simply $\Ystar$. As explained, the production of $\vecZ$ from $\vecY$ constitutes the preliminary step of our method.

We will now specify how the punctuated vector $\bigvecY^\DPT$ is recursively produced from $\vecZ$. For the base case, note that if $\Phi=1$, then $\bigvecY^\DPT$ is simply $\vecZ$.  Our procedure hinges on the assumption that the middle index of $\vecZ$ originated from a guard band symbol. Specifically, we will assume that the middle index of $\vecZ$ (rounding down) belongs to $\Zstar$. As explained, there is a probability of this assumption being false, and this will be analyzed at a later stage. For now, consider the case in which the assumption holds. In this case, the crucial observation is that $\YI^\DPT$ equals the first half of $\vecZ$, trimmed, while $\YII^\DPT$ equals the second half of $\vecZ$, trimmed. Namely, if $\zeta$ is the length of $\vecZ$, then
\begin{IEEEeqnarray}{rCl}
    \YI^\DPT &=& (Z_1,Z_2,\ldots, Z_{\lfloor{\zeta/2}\rfloor})^\DPT \; , \label{eq:YIFromZ} \\
    \YII^\DPT &=& (Z_{\lfloor{\zeta/2}\rfloor+1} , Z_{\lfloor{\zeta/2}\rfloor+1}, \ldots, Z_\zeta)^\DPT \label{eq:YIIFromZ}\; ,
\end{IEEEeqnarray}
since the guard band $\Zstar$ has been `trimmed out'.  Thus, we have reduced our original problem of producing $\bigvecY^\DPT$ from $\vecY^\DPT$ into two equivalent problems, each half the size of the original: find the first half of $\bigvecY^\DPT$, namely $\vecY^\DPT(1), \vecY^\DPT(2),\ldots,\vecY^\DPT(\Phi/2)$, from $\YI^*$ and the second half of $\bigvecY^\DPT$ from $\YII^*$. Thus, we continue recursively: we apply our method first to the RHS (\ref{eq:YIFromZ}) and then to the RHS of (\ref{eq:YIIFromZ}). If, during all these recursive invocations, our assumptions on the middle index being part of the middle guard band were indeed correct, then we will have succeeded in producing $\bigvecY^\DPT$. Note that the recursion depth is $n-n_0$.

There are two points that must be addressed. First, recall that adding guard bands makes the transmitted word longer. We must show that this has a vanishingly small effect on the rate of our scheme. Second, we must show that our scheme of producing $\bigvecY^\DPT$ from $\vecY$ has a vanishingly small probability of failing. Once this is done, the proof of Theorem~\ref{theo:main} will follow easily.

\subsection{Auxiliary lemmas}
\label{sec:auxiliaryLemmas}

In this section, we state and prove a number of lemmas key to the proof of Theorem~\ref{theo:main}.

In the sequel, we will choose a fixed $\nu \in (0,\frac{1}{3}]$ and set $n_0 = \lfloor \nu n \rfloor$.
The parameter $\nu$ will trade-off reliability and decoding complexity (e.g., see Theorem~\ref{theo:main}). Recall that both $\xi$, the parameter through which $\ell_n$ is defined in (\ref{eq:ln}), and $\nu$ are positive and fixed (not a function of $n$). Thus, the following lemma ensures that the rate penalty of adding guard bands is negligible as $n \to \infty$.

\begin{lemm}
    \label{lemm:blockLength}
         Let $\vecx$ be a vector of length $|\vecx| = 2^n$. Then,
         \begin{equation}
             \label{eq:vecgLengthBounds}
             |\vecx| \leq |g(\vecx)| < \left( 1 + \frac{2^{-(\xi \cdot  n_0 + 1)}}{1-2^{-\xi}} \right) \cdot |\vecx| \; .
         \end{equation}
\end{lemm}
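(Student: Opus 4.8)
The plan is to unfold the recursion in the definition \eqref{eq:guardBand} of $g$ and count the total number of zero symbols inserted. The lower bound $|\vecx| \leq |g(\vecx)|$ is immediate since $g$ only inserts symbols, never removes them (a trivial induction on $n$: if $n \leq n_0$ then $g(\vecx) = \vecx$; otherwise $|g(\vecx)| = |g(\xI)| + \ell_n + |g(\xII)| \geq |\xI| + |\xII| = |\vecx|$). So the real content is the upper bound.

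For the upper bound, I would introduce a function counting the total guard-band length added to a length-$2^m$ vector, say $L_m \triangleq |g(\vecx)| - 2^m$ for $|\vecx| = 2^m$ (well-defined since $g$ acts the same way on any vector of a given length). Then $L_m = 0$ for $m \leq n_0$, and for $m > n_0$ the recursion gives $L_m = 2 L_{m-1} + \ell_m$. Unrolling this linear recurrence from $m = n$ down to the base level $n_0$, and using that $2^{n-m} L_m$ telescopes, one gets
\[
    L_n = \sum_{m = n_0+1}^{n} 2^{n-m}\, \ell_m \; .
\]
Now substitute $\ell_m = \floor{2^{(1-\xi)(m-1)}} \leq 2^{(1-\xi)(m-1)}$ from \eqref{eq:ln}, so that
\[
    L_n \leq \sum_{m=n_0+1}^{n} 2^{n-m}\, 2^{(1-\xi)(m-1)} = 2^{n} 2^{-(1-\xi)} \sum_{m=n_0+1}^{n} 2^{-\xi(m-1)} \cdot 2^{\,0}\,,
\]
wait—more carefully, $2^{n-m} 2^{(1-\xi)(m-1)} = 2^{n} \cdot 2^{-m} \cdot 2^{(1-\xi)(m-1)} = 2^{n-(1-\xi)} \cdot 2^{-\xi m} \cdot 2^{\xi}$, hmm let me just say I would collect the $m$-dependent part as a geometric series in $2^{-\xi}$.

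Carrying out that geometric sum: the exponent of $2$ in the $m$-th term is $n - m + (1-\xi)(m-1) = n - (1-\xi) - \xi m$, so $L_n \leq 2^{n - (1-\xi)} \sum_{m = n_0+1}^{\infty} 2^{-\xi m}$. The tail geometric series starting at $m = n_0 + 1$ equals $2^{-\xi(n_0+1)} / (1 - 2^{-\xi})$, giving
\[
    L_n \leq \frac{2^{\,n - (1-\xi)} \cdot 2^{-\xi(n_0+1)}}{1 - 2^{-\xi}} = \frac{2^{\,n} \cdot 2^{-1+\xi - \xi n_0 - \xi}}{1-2^{-\xi}} = \frac{2^{-(\xi n_0 + 1)}}{1 - 2^{-\xi}} \cdot 2^{n} \; .
\]
Since $|g(\vecx)| = 2^n + L_n = |\vecx| + L_n$, dividing by $|\vecx| = 2^n$ yields exactly the claimed strict upper bound $|g(\vecx)| < \bigl(1 + \frac{2^{-(\xi n_0 + 1)}}{1 - 2^{-\xi}}\bigr)|\vecx|$; strictness is safe because we extended a finite sum to an infinite one (and the tail is strictly positive), so the inequality is strict even before using $\floor{\cdot} \leq$ loosely. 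The main obstacle here is purely bookkeeping: getting the telescoping of the recurrence $L_m = 2L_{m-1} + \ell_m$ right and then managing the exponents in the geometric sum so they land on the precise constant $2^{-(\xi n_0 + 1)}/(1 - 2^{-\xi})$ stated in \eqref{eq:vecgLengthBounds}. I would present the telescoping step explicitly (e.g., observe $2^{-(m)}L_m - 2^{-(m-1)}L_{m-1} = 2^{-m}\ell_m$ and sum over $m$) to keep the algebra transparent, then bound the resulting finite sum by the infinite geometric series.
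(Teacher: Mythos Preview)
Your proposal is correct and follows essentially the same route as the paper: both unfold the recursion to obtain $|g(\vecx)| = 2^n + \sum_{m=n_0+1}^{n} 2^{n-m}\ell_m$, bound $\ell_m \leq 2^{(1-\xi)(m-1)}$, and extend the resulting finite geometric sum to an infinite one to land on the stated constant. The only cosmetic difference is that you set up an explicit telescoping for $L_m = 2L_{m-1} + \ell_m$, whereas the paper simply asserts the closed form by induction.
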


\begin{IEEEproof}
    From the definition of $g(\vecx)$, induction shows
    \begin{equation}
        \label{eq:vecgLengthRecursion}
        |g(\vecx)| =
        \begin{cases}
            2^n & \mbox{if $n \leq  n_0$} \\
            2^{n} + \sum_{t = n_0 + 1}^{n} 2^{n - t} \cdot \ell_t & \mbox{otherwise}.  
        \end{cases}
    \end{equation}
    Thus, the lower bound in (\ref{eq:vecgLengthBounds}) is trivial, since $|\vecx| = 2^n$, and every term in the sum in (\ref{eq:vecgLengthRecursion}) is non-negative, by (\ref{eq:ln}). The upper bound in (\ref{eq:vecgLengthBounds}) is trivially true for $n \leq n_0$. For the case $n > n_0$, we have that
    \begin{IEEEeqnarray*}{rCl}
        |g(\vecx)|/|\vecx| & \eqann{a} & 1 + \sum_{t = n_0 + 1}^{n} 2^{-t} \cdot \ell_t \\
                           & \eqann[\leq]{b}  & 1 + \sum_{t = n_0 + 1}^{n} 2^{-t} \cdot 2^{(1-\xi) \cdot (t-1)} \\
                           & =  & 1 + \sum_{t = n_0 + 1}^{n}  2^{-\xi \cdot (t-1)-1} \\
                           & <  & 1 + \sum_{t = n_0 + 1}^{\infty}  2^{-\xi \cdot (t-1)-1} \\
                           & \eqann{c} &  1 + \frac{2^{-(\xi \cdot  n_0 + 1)}}{1-2^{-\xi}} \; .
    \end{IEEEeqnarray*}
    where \eqannref{a} follows from $|\vecx| = 2^n$ and (\ref{eq:vecgLengthRecursion}); \eqannref{b} follows from (\ref{eq:ln}); \eqannref{c} is simply the sum of geometric series.
\end{IEEEproof}

A key idea enabling the `genie' described earlier is the recursive processing of each half of the received sequence. This processing will be successful if the middle symbol of the received sequence is a `$0$' originating from the outermost guard band, as per the recursive definition in (\ref{eq:guardBand}). The following lemma shows that this is indeed the case, with very high probability. 
\begin{lemm}
    \label{lemm:guardBandErrorBound}
    Let the guard-band length $\ell_n$ in (\ref{eq:ln}) use a fixed $\xi \in (0,1/2)$.
    Fix the channel deletion probability $\delta$ and a regular hidden-Markov input distribution. Let $n > n_0 > 1$ and let $\vecX$ be a random vector of length $N = 2^n$ distributed according to the modified input distribution described above: i.i.d.\ blocks of length $N_0 = 2^{n_0}$, each distributed according to the specified input distribution. Denote by $\vecY$ the result of transmitting $g(\vecX)$ through the deletion channel. Then, there exists a constant $\theta > 0$, dependent only on the input distribution and the deletion probability such that, for $n_0$ large enough, the probability that the middle symbol of $\vecY^\DPT$ (rounding down) is not a `$0$' from the outer guard band of length $\ell_n$ is at most
    $
        2^{-\theta \cdot 2^{(1-2\xi) n_0}}.
    $
\end{lemm}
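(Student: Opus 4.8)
The plan is to control, for a single guard band of length $\ell_n$, the probability that the middle symbol of the (trimmed) received word comes from that guard band, and then to relate ``middle of $\vecY^\DPT$'' to ``middle of the part of $\vecY$ arising from the central guard band $\Gstar$.'' The central guard band $\Gstar$ of the outermost split has length $\ell_n > 2^{(n-1)(1-\xi)-1}$, and since $n \geq n_0$, also $\ell_n \geq \ell_{n_0+1} > 2^{(1-\xi)n_0 - 1}$. Of these $\ell_n$ symbols, the deletion channel keeps, in expectation, $(1-\delta)\ell_n$ of them; by a Chernoff/Hoeffding bound for the i.i.d.\ deletion process, the number of surviving guard-band symbols is at least, say, $\tfrac{1-\delta}{2}\ell_n$ except with probability at most $2^{-c\ell_n}$ for a constant $c = c(\delta) > 0$. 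Call the surviving central-guard-band symbols $\Ystar$ (all of which are `$0$'); they form a contiguous run inside $\vecY$.

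Next I would argue that, conditioned on $\Ystar$ being long (at least $\tfrac{1-\delta}{2}\ell_n$ symbols), the midpoint of $\vecY^\DPT$ lands inside $\Ystar$ with overwhelming probability. For this it suffices to show that the ``imbalance'' between the two sides — the difference between the number of surviving, non-trimmed symbols to the left of $\Ystar$ (coming from $g(\XI)$) and those to the right (coming from $g(\XII)$) — is smaller than $|\Ystar|$. Both side-lengths are sums over the respective halves of Bernoulli$(1-\delta)$ survival indicators (for the $\vecX(\phi)$ blocks and the internal guard bands), and the trimming only removes a bounded number of leading/trailing zeros on the extreme ends; since the two halves $g(\XI)$ and $g(\XII)$ have exactly equal length, the expected imbalance is $0$, and a Hoeffding bound gives that the imbalance exceeds $t$ with probability at most $2\exp(-t^2/(2N))$. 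Taking $t = \tfrac{1-\delta}{4}\ell_n$ and recalling $\ell_n^2 \gtrsim 2^{2(1-\xi)n_0 - 2}$ while $N = 2^n$, the exponent is of order $2^{2(1-\xi)n_0}/2^{n}$; this is where the hypothesis $n_0 = \lfloor \nu n\rfloor$ and $\xi < 1/2$ must be used carefully — one needs $2(1-\xi)n_0 - n$ to be a positive multiple of $n_0$, equivalently $2(1-\xi)\nu > \nu$, i.e.\ $\xi < 1/2$, which holds. This yields a bound of the form $2^{-\theta' 2^{(1-2\xi)n_0}}$ after absorbing constants, matching the claimed exponent $2^{(1-2\xi)n_0}$ up to the constant $\theta$.

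Combining the two estimates via a union bound gives a failure probability at most $2^{-c\ell_n} + 2^{-\theta' 2^{(1-2\xi)n_0}}$; since $\ell_n \geq 2^{(1-\xi)n_0-1}$ and $1-\xi > 1-2\xi$, the first term is dominated by the second for $n_0$ large, and we may take $\theta$ slightly smaller than $\theta'$ to obtain the single clean bound $2^{-\theta\cdot 2^{(1-2\xi)n_0}}$. A point requiring care: the quantity ``middle symbol of $\vecY^\DPT$'' refers to the trimmed received word, so I must check that trimming leading/trailing zeros from $\vecY$ cannot move the midpoint out of $\Ystar$ — but trimming only shortens $\vecY$ from its two extreme ends, symmetric to the midpoint in effect on the imbalance argument, and in the worst case removes at most $N$ symbols from each end, which is already accounted for in the Hoeffding bound on the side-lengths.

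The main obstacle I anticipate is not the concentration inequalities themselves but bookkeeping the several sources of randomness cleanly: the survival/deletion of the $\ell_n$ central-guard-band symbols, the survival counts in the two halves $g(\XI), g(\XII)$ (each a sum of roughly $N/2$ Bernoulli variables, but with block structure and internal guard bands), and the effect of trimming — all while keeping the exponent sharp enough to land exactly at $2^{(1-2\xi)n_0}$ rather than something weaker. The delicate inequality is ensuring $\ell_n$ is large enough (this is exactly why the paper imposed $n_0 > 1$ and recorded \eqref{eq:lnSimpleInequality}) so that $\ell_n^2 / N \gtrsim 2^{(1-2\xi)n_0}$; verifying $(1-\xi)\cdot 2 \cdot n_0 - n \geq (1-2\xi) n_0$, i.e.\ $n_0 \geq$ (a constant times) $n - $ something, reduces to $n_0 \leq n$ together with $\xi<1/2$, so it goes through, but the constants must be tracked.
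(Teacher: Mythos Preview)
Your overall strategy matches the paper's: bound the survival count of the central guard band, bound the left/right imbalance via Hoeffding, and union-bound. However, there are two genuine gaps.

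\textbf{The exponent calculation is wrong.} You lower-bound $\ell_n$ by $2^{(1-\xi)n_0-1}$ and then try to show $\ell_n^2/N \gtrsim 2^{(1-2\xi)n_0}$. But $\ell_n^2/N \gtrsim 2^{2(1-\xi)n_0}/2^n$, and your claimed reduction ``$2(1-\xi)n_0 - n \geq (1-2\xi)n_0$'' simplifies to $n_0 \geq n$, which is false. Moreover, the lemma does \emph{not} assume $n_0 = \lfloor \nu n\rfloor$; that relationship appears only later in the paper, so you cannot invoke it here (and even if you could, $\nu \leq 1/3$ would make $2(1-\xi)n_0 - n$ negative). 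The fix is what the paper does: use the \emph{actual} size $\ell_n > 2^{(1-\xi)(n-1)-1}$ together with $|g(\XI)| \leq 2\cdot 2^{n-1}$ (for $n_0$ large enough) to get $\ell_n^2/|g(\XI)| \gtrsim 2^{(1-2\xi)(n-1)}$, and only \emph{then} weaken to $2^{(1-2\xi)n_0}$ using $n-1 \geq n_0$.

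\textbf{Trimming is not handled.} You assert that trimming ``removes a bounded number'' of symbols, is ``symmetric to the midpoint,'' and is ``already accounted for in the Hoeffding bound on the side-lengths.'' None of these holds: the number of leading and trailing zeros in $\vecY$ is random and unbounded, trimming is generally asymmetric (it can remove very different amounts from the two ends), and the Hoeffding bound on deletion survivals says nothing about how many of the surviving symbols are leading/trailing zeros. The paper treats this carefully: it uses non-degeneracy of the hidden-Markov input to find $\tau,p$ with $\Pr(X_1^\tau = 0^\tau \mid S_0=s) < p < 1$, partitions the first $\Theta(\ell_{n_0+1})$ input symbols into length-$\tau$ segments, and argues that with probability $1 - (1-(1-p)(1-\delta))^{\Theta(\ell_{n_0+1})/\tau}$ some segment contributes a surviving `$1$', which caps the left-trim by $\hat\ell/4$ (and symmetrically on the right). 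This is a necessary ingredient you are missing.
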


\begin{IEEEproof}
    Let $\vecG = g(\vecX)$ (see Fig.~\ref{fig:XGYZ}). Recall that we denote the first and second halves of $\vecX$ by $\XI$ and $\XII$, respectively. Let $\GI = g(\XI)$ and $\GII = g(\XII)$, and denote by $\Gstar$ the guard band comprised of $\ell_n$ `$0$' symbols between $\GI$ and $\GII$.  Hence, by (\ref{eq:guardBand}),
    \[
        \vecG = \GI \concat \Gstar \concat \GII \; .
    \]
    Denote by $\vecY$ the (untrimmed) result of passing $\vecG$ through the deletion channel. Let  $\YI$, $\YII$, and $\Ystar$ be the parts of $\vecY$ corresponding to $\GI$, $\GII$, and $\Gstar$, respectively. Let $\vecZ = \vecY^\DPT$ be the trimmed $\vecY$. Define $\ZI$, $\ZII$, and $\Zstar$, as the parts of $\vecZ$ corresponding to $\GI$, $\GII$, and $\Gstar$, respectively.
    
    For $\vecZ = (Z_1,Z_2,\ldots,Z_t)$ with $t\geq 1$, the middle index of $\vecZ$ (rounding down) is $s=\floor{(t+1)/2}$.
    A sufficient condition for $Z_s$ belonging to $\Zstar$ is
        \begin{equation}
            \label{eq:threePartsCondition}
             |\ZI| < |\Zstar|+|\ZII| \; , \quad |\ZII| < |\ZI| + |\Zstar| \; .
         \end{equation}
    To see that this is sufficient, we observe that $|\ZI| < |\Zstar|+|\ZII|$ implies that the middle index does not fall in $\ZI$ because then
    \begin{align*}
      \floor{(|\vecZ|+1)/2} &= \floor{(|\ZI|+ |\Zstar|+|\ZII|+1)/2} \\ &\geq \floor{(|\ZI|+ |\ZI|+2)/2} = |\ZI| + 1.
    \end{align*}
    Similarly, if $|\ZII| < |\ZI| + |\Zstar|$, then the middle index does not fall in $\ZII$ because then
        \begin{align*}
            \lfloor (|\vecZ| &+1)/2 \rfloor = \floor{(|\ZI|+ |\Zstar|+|\ZII|+1)/2} \\ &\leq \floor{(|\ZI|+ |\Zstar|+|\ZI| + |\Zstar|)/2} = |\ZI|+ |\Zstar|.
        \end{align*}

Now, we will analyze the probability of (\ref{eq:threePartsCondition}).
     Denote by $\alpha$, $\beta$, and $\gamma$ the following length differences between the three parts of $\vecG$ and the three corresponding parts of $\vecY$,
     \begin{IEEEeqnarray*}{rCl}
         \alpha  & = &|\GI| - |\YI| \; , \\
         \beta  & = & |\Gstar| - |\Ystar| \; , \\  
         \gamma & = & |\GII| - |\YII| \; . 
     \end{IEEEeqnarray*}
     Also, denote by $\alpha'$, $\beta'$, and $\gamma'$ the length differences resulting from trimming,
     \begin{IEEEeqnarray*}{rCl}
         \alpha'  & = &|\YI| - |\ZI| \; , \\
         \beta'  & = & |\Ystar| - |\Zstar| \; ,\\  
         \gamma' & = & |\YII| - |\ZII| \; . 
     \end{IEEEeqnarray*}
     Suppose that the trimming on both sides stopped short of the guard band. In this case, $\beta'=0$. Since $|\GI| = |\GII|$ and $|\Gstar| = \ell_n$, condition (\ref{eq:threePartsCondition}) would reduce to 
     \begin{IEEEeqnarray}{rCl}
         \alpha + \alpha' &<& \gamma + \gamma' + \ell_n - \beta \; , \label{eq:firstAlphaBetaGammaCondition} \\
         \gamma + \gamma' &<& \alpha + \alpha' + \ell_n - \beta \; .  \label{eq:secondAlphaBetaGammaCondition}
              \end{IEEEeqnarray}
              Our aim is to show that, with very high probability, both (\ref{eq:firstAlphaBetaGammaCondition}) and (\ref{eq:secondAlphaBetaGammaCondition}) hold, as well as the assumption leading to their formulation.

              Recall that $\delta$ is the channel deletion probability and let
              \begin{equation}
                  \label{eq:ellHat}
                  \hat{\ell} = \ell_n \cdot (1-\delta)/2 \; .
              \end{equation}
              We define the following `good' events on the random variables $\alpha$, $\alpha'$, $\beta$, $\beta'$, $\gamma$, and $\gamma$:
\begin{IEEEeqnarray}{rCl}
    A\phantom{'}: & \quad & \delta |\GI| - \hat{\ell}/4 < \alpha < \delta |\GI| + \hat{\ell}/4 \label{eq:A} \\
    A': & \quad & 0 \leq \alpha' < \hat{\ell}/4 \label{eq:APrime} \\
    B\phantom{'}: & \quad & 0 \leq \beta < \delta \cdot \ell_n + \hat{\ell} \\
    B': & \quad & \beta' = 0 \label{eq:BPrime}\\
    C\phantom{'}: & \quad & \delta |\GII| - \hat{\ell}/4 < \gamma < \delta |\GII| + \hat{\ell}/4 \label{eq:C}\\
    C': & \quad & 0 \leq \gamma' < \hat{\ell}/4 \label{eq:CPrime}
\end{IEEEeqnarray}

First, we note that the total number of symbols deleted or trimmed from $\GI$ is given by $|\GI|-|\ZI| = \alpha+\alpha'$.
If $A$ and $A'$ hold, then this is bounded by
\begin{IEEEeqnarray}{rCl} \label{eq:alpha_plus_alphap}
    \alpha + \alpha' & < & \delta |\GI| + \hat{\ell}/4 + \hat{\ell}/4 \IEEEnonumber \\
                     & = &  \delta |\GI| + \hat{\ell}/2 \label{eq:alphaPlusAlphaPrimeBound} \;.
\end{IEEEeqnarray}
By (\ref{eq:vecgLengthBounds}), $|\GI| = 2^{n-1} + t$, where $t \geq 0$. We now show that if $A$ and $A'$ hold, then $\alpha+\alpha' < |\GI|$. Indeed, by (\ref{eq:ln}) and (\ref{eq:ellHat}),
\begin{IEEEeqnarray*}{rCl}     \delta |\GI| + \hat{\ell}/2 & < & \delta |\GI| + \hat{\ell} \\
                                                           &=  & \delta (2^{n-1}+t) + 2^{-1} (1-\delta) \floor{2^{(1-\xi)(n-1)}}  \\
                                                           & < & \delta (2^{n-1}+t) + (1-\delta) 2^{n-2} \\
                                                           & = & \delta 2^{n-1} + (1-\delta) 2^{n-2} + \delta t \\
			& <  & 2^{n-1}  + \delta t \\
			& <  & 2^{n-1}  + t = |\GI|\; .
\end{IEEEeqnarray*}
The analogous claim also holds for $C$, $C'$, and $\GII$. Thus, if $A$, $A'$, $C$, and $C'$ hold, then some parts of $\GI$ and $\GII$ must remain in $\ZI$ and $\ZII$ after deletion and trimming.
Hence, the trimming has stopped short of the guard band, which implies $B'$.

If, in addition, $B$ occurs, then both (\ref{eq:firstAlphaBetaGammaCondition}) and (\ref{eq:secondAlphaBetaGammaCondition}) must also hold. To verify that (\ref{eq:firstAlphaBetaGammaCondition}) holds, note that
\begin{IEEEeqnarray*}{rCl}
    \gamma + \gamma' + \ell_n - \beta & \eqann[>]{a} & \delta |\GII| - \hat{\ell}/4 + \ell_n - \delta \cdot \ell_n - \hat{\ell} \\
                                      & = & \delta |\GII| - \hat{\ell}/4 + (1-\delta)\ell_n - \hat{\ell}\\
                                      & \eqann{b} & \delta |\GII| - \hat{\ell}/4 + 2\hat{\ell} - \hat{\ell} \\
                                      & = & \delta |\GII| + 3\hat{\ell}/4 \\
                                      & \eqann[>]{c} & \delta |\GII| + \hat{\ell}/2 \; ,
\end{IEEEeqnarray*}
where \eqannref{a} follows from (\ref{eq:BPrime}), (\ref{eq:C}), and (\ref{eq:CPrime}); \eqannref{b} follows from (\ref{eq:ellHat}); \eqannref{c} follows since $\ell_n$ is positive, by (\ref{eq:lnSimpleInequality}), and thus so is $\hat{\ell}$, by (\ref{eq:ellHat}). Next, observe that $|\GI| = |\GII|$, and apply (\ref{eq:alphaPlusAlphaPrimeBound}).
The proof of  (\ref{eq:secondAlphaBetaGammaCondition}) is the same except that the upper and lower bounds are swapped for $\alpha+\alpha'$ and $\gamma+\gamma'$. 

To recap, the occurrence of all the `good' events in (\ref{eq:A})--(\ref{eq:CPrime}) implies that the middle index falls inside $\Zstar$.  Hence, the next step is to show that each of the above events occurs with very high probability, if $n$ is large enough.
               
We now recall Hoeffding's bound \cite[Theorem 2]{Hoeffding:63p}\cite[proof of Lemma 4.13]{MitzenmacherUpfal:17b} and apply it to the deletion channel with deletion probability $\delta$.
Namely, let $D$ be a random variable equal to the number of deletions after $N$ channel uses. Hence, $E[D] = \delta N$, and for $t \geq 0$ we have by Hoeffding's bound that
\begin{IEEEeqnarray}{rCl}
    \Pr ( D \geq \delta N + t ) &\leq& e^{-2t^2/N} \; , \label{eq:HoefPlus} \\
    \Pr ( D \leq \delta N - t ) &\leq& e^{-2t^2/N} \; .  \label{eq:HoefMinus}
\end{IEEEeqnarray}

Recalling that $\xi > 0$, we now require that $n_0$ be large enough that the bracketed term in (\ref{eq:vecgLengthBounds}) is at most $2$. That is, we assume that $n_0$ is large enough such that, for $n > n_0$, we have
\begin{equation}
    \label{eq:GINotBig}
    |\GI| \leq 2 \cdot 2^{n-1} \; .
\end{equation}
Applying both (\ref{eq:HoefPlus}) and (\ref{eq:HoefMinus}), we deduce that, for $n > n_0$, we have
\begin{IEEEeqnarray}{rCl}
   1- \Pr (A) &\leq& 2e^{-2(\hat{\ell}/4)^2/|\GI|} \nonumber \\
         & = & 2e^{-2(\ell_n(1-\delta)/8)^2/|\GI|} \nonumber \\
         & \eqann[<]{a} & 2e^{-2(2^{(n-1)\cdot(1-\xi)-1}(1-\delta)/8)^2/|\GI|} \nonumber \\
         & \eqann[\leq]{b} & 2e^{-2(2^{(n-1)\cdot(1-\xi)-1}(1-\delta)/8)^2/(2 \cdot 2^{n-1})} \nonumber \\
         & = & 2e^{-\left(\frac{(1-\delta)^2}{256}\right) \cdot  2^{(n-1)(1-2\xi)}} \nonumber \\
         & \eqann[\leq]{c} & 2e^{-\left(\frac{(1-\delta)^2}{256}\right) \cdot  2^{n_0 \cdot (1-2\xi)}} \; , \label{eq:PA} 
\end{IEEEeqnarray}
where \eqannref{a} follows from (\ref{eq:lnSimpleInequality}); \eqannref{b} holds by (\ref{eq:GINotBig}); and \eqannref{c} follows from $n > n_0$. Exactly the same bound applies to $1-\Pr (C)$. For $\Pr (B)$, we again use (\ref{eq:HoefPlus}) to deduce that
\begin{IEEEeqnarray}{rCl}
    1-\Pr (B) & \leq & e^{-2 \hat{\ell}^2/\ell_n} \nonumber \\
    & \eqann{a} & e^{-2\left( \frac{\ell_n(1-\delta)}{2} \right)^2/\ell_n} \nonumber \\
    & = & e^{-2 \left( \frac{(1-\delta)}{2} \right)^2 \cdot \ell_n } \nonumber \\
    & \eqann[<]{b} & e^{-2 \left( \frac{(1-\delta)}{2} \right)^2 \cdot 2^{(n-1)(1-\xi) - 1 } } \nonumber \\
    & = & e^{- \left( \frac{(1-\delta)^2}{4} \right) \cdot 2^{(n-1)(1-\xi) }} \nonumber \\
    & \eqann[\leq]{c} & e^{- \left( \frac{(1-\delta)^2}{4} \right) \cdot 2^{n_0 \cdot (1-\xi) }} \; , \label{eq:PB}
\end{IEEEeqnarray}
where \eqannref{a} follows from (\ref{eq:ellHat}); \eqannref{b} follows from (\ref{eq:lnSimpleInequality}); and \eqannref{c} holds because $n > n_0$.


We now bound $1-\Pr (A' \cap C')$ from above. Consider $\GI$ and $\YI$ first. Next, recall that by the recursive definition of $g$ in (\ref{eq:guardBand}), the prefix of length $N_0 = 2^{n_0}$ of $\GI$ is distributed according to the underlying regular Markov input distribution (it does not contain a guard band). Denote this prefix as $X_1,X_2,\ldots,X_{N_0}$, and denote the state of the process at time $0$ as $S_0$. Since our input distribution is not degenerate, there exists an integer $\tau > 0$ and a probability $0 < p < 1$ such that for any $s \in \calS$,
\begin{equation}
    \label{eq:pnonzero}
\Pr \big((X_1,X_2,\ldots,X_\tau) = (0,0,\ldots,0) | S_0 = s\big) < p \; .
\end{equation}
Let
\begin{equation}
    \label{eq:ltilde}
    \tilde{\ell} = \ell_{n_0+1} \cdot (1-\delta)/2 \; .    
\end{equation}
Since $n > n_0$, we have by (\ref{eq:ln}) and (\ref{eq:ellHat}) that $\tilde{\ell} \leq \hat{\ell}$ and that 
\[
   \tilde{\ell}/4 <  2^{n_0} \; .
\]                                                                                                                                                                                    
Let
\[
    \rho =  \tau \cdot \left\lfloor \frac{ \tilde{\ell}/4}{\tau} \right\rfloor \; ,
\]
and partition $X_1,X_2,\ldots X_{\rho}$ into consecutive segments of length $\tau$. Then, we define event $A''$ to occur if there exists a segment that is not an all-zero vector of length $\tau$, and its first non-zero entry has not been deleted. We define $C''$ as the analogous event, with respect to $\GII$ and $\YII$, the only difference being that we are now considering the length $\rho$ suffix of $\XII$, and considering the last non-zero entry of a segment. By construction, if $A''$ and $C''$ hold, then $A'$ and $C'$ must hold. That is, if event $A''$ occurs, then the number of symbols trimmed from the left of $\GI$ is strictly less than $\tilde{\ell}/4$, since the above non-zero non-deleted symbol is not trimmed, and this assures that the ``trimming from the left'' stops before it. A similar claim holds with respect to $C''$. Thus, $1-\Pr (A' \cap C') \leq 1 - \Pr(A'' \cap C'')$.

Since (\ref{eq:pnonzero}) holds for all $s \in \calS$, we have by the Markov property that
\begin{equation}
    \label{eq:PAdoublePrime}
    1-\Pr (A'') < \big( 1 - (1-p)(1-\delta) \big)^{\rho/\tau} \; .
\end{equation}
Indeed, if $A''$ does not hold, this means that we have ``failed'' on each of the $\rho/\tau$ blocks, in the sense that each such block was either all-zero, or its first non-zero symbol was deleted. Since the probability of ``success'' conditioned on any given string of past failures is always greater than $(1-p)(1-\delta)$, the above follows.

Define
\[
    \zeta = -\log_e \big( 1 - (1-p)(1-\delta) \big) \; ,
\]
and note that $\zeta > 0$. Next, we bound $\rho$ as
\begin{IEEEeqnarray*}{rCl}
    \rho & > & \tilde{\ell}/4 - \tau \\
     & = &  \ell_{n_0+1} \cdot (1-\delta)/8 - \tau\\
     & > &  \left( 2^{(1-\xi) \cdot n_0 - 1}\right) \cdot (1-\delta)/8 - \tau ,
\end{IEEEeqnarray*}
where the second inequality follows from (\ref{eq:lnSimpleInequality}). Thus,
\[
    1-\Pr (A'') < e^{- \frac{\zeta}{\tau} \left( \left(2^{(1-\xi) \cdot n_0 - 1}\right) \cdot(1-\delta)/8 - \tau\right)} \; .
\]
Of course, exactly the same bound holds for $1-\Pr (C'')$. Hence, by the union bound, and recalling that $A'' \cap C''$ implies $A' \cap C'$, we have that
\begin{equation}
    \label{eq:PACprime}
    1-\Pr (A' \cap C') < 2 e^{- \frac{\zeta}{\tau} \left( \left(2^{(1-\xi) \cdot n_0 - 1}\right) \cdot(1-\delta)/8 - \tau\right)} \; .
\end{equation}

Putting (\ref{eq:PA}), (\ref{eq:PB}), and (\ref{eq:PACprime}) together, and applying the union bound proves the lemma.
\end{IEEEproof}

We conclude this section with the proof of our main theorem. Note that both the encoding and decoding schemes are specified in the proof.

\begin{IEEEproof}[Proof of Theorem~\ref{theo:main}]
    Our proof is divided into two parts. In the first part, we consider the `idealized' random vectors $\vecX$ and $\vecY$. That is, $\vecX$ is drawn from the probability distribution defined in Lemma~\ref{lemm:guardBandErrorBound} (there is no encoding of date) and $\vecY$ is the result of transmitting $g(\vecX)$ through our deletion channel. We will show that by previously proven lemmas, the rate penalty of expanding $\vecX$ to $g(\vecX)$ is negligible and the probability of deducing $\bigvecY^\DPT$ from $\vecY$ is very high. We conclude the first part by discussing the polarization of $\vecU = \calA(\vecX)$.
    
    In the second part of the proof, we consider the actual case at hand. That is, we show how encoding and decoding are carried out, discuss the encoding and decoding complexity, prove that the rate of our coding scheme approaches the information rate $\calI$, and prove that the probability of misdecoding tends to $0$. 

    Recall that $0 < \nu' < \nu \leq 1/3$ are fixed parameters. We let
    \begin{equation}
        \label{eq:n0}
        n_0 = \floor{\nu n} 
    \end{equation}
    and
    \begin{equation}
        \label{eq:nuDoublPrime}
                   \nu'' = \frac{\nu + \nu'}{2} \; , 
    \end{equation}
    implying that
    \begin{equation}
        \label{eq:trainOfNuInequalites}
        0 < \nu' < \nu'' < \nu \leq \frac{1}{3} \; .
        \end{equation}
    Then, set $\xi$ for the guard-band length $\ell_n$ defined in (\ref{eq:ln}) to
    \begin{equation}
        \label{eq:epsilonFromGamma}
        \xi = \frac{1-\frac{1+\nu'' / \nu}{2}}{2} = \frac{1-\nu''/\nu}{4} \; .
    \end{equation}
    Note that by (\ref{eq:NN0Phi}),
    \begin{equation}
        \label{eq:n1}
    n_1 = n - \floor{\nu n} = \ceiling{(1-\nu) n} \; .
    \end{equation}

    We start with the first part of the proof: let $\vecX$ and $\vecY$ be defined as in Lemma~\ref{lemm:guardBandErrorBound} (as yet, no coding of information).
    \begin{subclaim} \label{subclaim:ratePenalty} The rate penalty incurred by adding guard bands becomes negligible as $n \to \infty$. Namely, $ |g(\vecX)|/|\vecX|$ tends to $1$ as $n \to \infty$.
    \end{subclaim}
    This follows by Lemma~\ref{lemm:blockLength}, which shows that the rate penalty incurred by adding guard bands becomes negligible as $n_0 \to \infty$, and the connection between $n_0$ and $n$ given in (\ref{eq:n0}).

    \begin{subclaim} \label{subclaim:trimming}
        The probability of making a mistake during the partitioning of $\vecY$ into the $\blockCount = 2^{n-n_0}$ trimmed blocks $\vecY(1)^\DPT$, $\vecY(2)^\DPT$,\ldots,$\vecY(\blockCount)^\DPT$ is less than $\frac{1}{3} \cdot 2^{-2^{\nu'' n}}$, for $N = 2^n$ large enough.
\end{subclaim}
This follows from Lemma~\ref{lemm:guardBandErrorBound} and the union bound. Specifically, recalling the recursive nature of our algorithm to produce $\bigvecY^\DPT$, we note that an error is made only if the relevant portion of the received vector $\vecY$, after that portion has been trimmed, is such that the middle symbol (rounding down) does not belong to the outermost guard band. Each such probability can be bounded by using Lemma~\ref{lemm:guardBandErrorBound}. Since we produce $\Phi$ blocks, our recursion is applied $\Phi-1$ times. Hence, for $n_0$ large enough, the probability of failing to produce $\bigvecY^\DPT$ is at most
        \begin{multlinecc}
            \label{eq:genieFailureUpperBound}
            (\Phi-1) \cdot 2^{-\theta \cdot 2^{(1-2\xi) n_0}} \\
            = (2^{n-\floor{\nu n}}-1) \cdot 2^{-\theta \cdot 2^{ \floor{\nu n} \cdot((1+\nu'' / \nu)/2)}}  \; ,
        \end{multlinecc}
        where the equality follows from (\ref{eq:Phi}) and (\ref{eq:n0})--(\ref{eq:n1}). Recalling (\ref{eq:n0}), we may take $n$ large enough such that $n_0$ is indeed large enough for the above to hold. Moreover, since $0 < \nu'' < \nu$, it is straightforward to show that the RHS of (\ref{eq:genieFailureUpperBound}) is less than $\frac{1}{3} \cdot 2^{-2^{\nu'' n}}$ for large enough $n$, as required.

        \begin{subclaim} \label{subclaim:calIInformationBits} For $\vecU = \calA(\vecX)$, the fraction of indices $1 \leq i \leq N$ for which the 
            Bhattacharyya parameter satisfies
            \begin{equation}
                \label{eq:subclaim_zsmall}
                Z(U_i|U_1^{i-1},\vecY(1)^\DPT, \vecY(2)^\DPT,\ldots,\vecY(\blockCount)^\DPT) < \frac{1}{3N} \cdot 2^{-2^{\nu'' n}}
            \end{equation}
            and the total variation parameter (see Definition~\ref{defi:K} in the appendix) satisfies
            \begin{equation}
                \label{eq:subclaim_ksmall}
                K(U_i|U_1^{i-1}) < \frac{1}{3N} \cdot 2^{-2^{\nu'' n}}
            \end{equation}
            tends to $\calI$, as $n \to \infty$. 
        \end{subclaim}
        Informally, $H \approx 0$ iff $Z \approx 0$ and  $H \approx 1$ iff $K \approx 0$. For a formal statement, see e.g.\ \cite[Lemma 1]{Shuval_Tal_Memory_2017}. Thus, Lemma~\ref{lemm:slowPolarizationStar} continues to hold if we replace (\ref{eq:goodIndexI}) by the condition
\begin{multlinecc}
    \label{eq:goodIndexIZK}
    Z(V_{i_0}(\phi)|V_1^{i_0-1}(\phi) ,\vecY^\DPT(\phi)) < \epsilon \quad \mbox{and} \onlysingle{\quad\quad} \\
    K(V_{i_0}(\phi)|V_1^{i_0-1}(\phi)) < \epsilon \; .
\end{multlinecc}
That is, at the end of $n_0$ polarization stages, the fraction of indices $1 \leq i_0 \leq N_0$ satisfying the `weak polarization' in (\ref{eq:goodIndexIZK}) tends to $\calI$ for any $\epsilon > 0$. To get from the `weak polarization' implied by (\ref{eq:goodIndexIZK}) to the `strong polarization' implied by (\ref{eq:subclaim_zsmall}) and (\ref{eq:subclaim_ksmall}), we employ \cite[Lemma 40]{ShuvalTal:18a}, as follows.

For $\vecb = (b_1,b_2,\ldots,b_n)$, recall from (\ref{eq:bitReversedI}) the definition of $i(\vecb)$, and denote
        \[
            i_0(\vecb) \triangleq 1+\sum_{j=1}^{n_0} b_j 2^{n_0-j} \; .
        \]
        Thus, we may think of the random process by which $i(B_1,B_2,\ldots,B_n)$ is chosen as first selecting $i_0$, which is in fact a function of $B_1,B_2,\ldots,B_{n_0}$, and then   completing the choice of $i$ according to a new process $\tilde{B}_1,\tilde{B}_2, \ldots, \tilde{B}_{n_1}$, where
        \begin{equation}
            \label{eq:Btildeprocess}
            \tilde{B}_1 = B_{n_0+1} , \tilde{B}_2 = B_{n_0+2}, \ldots, \tilde{B}_{n_1} = B_{n_0 + n_1} \; ,
        \end{equation}
        recalling that $n_0 + n_1 = n$, by (\ref{eq:n0}) and (\ref{eq:n1}).
        
        Fix $\epsilon > 0$ to a value that will shortly be specified. Next, for now, let us fix an index $i_0$ for which (\ref{eq:goodIndexIZK}) holds. We define two processes related to (\ref{eq:Btildeprocess}), denoted $\tilde{Z}_1,\tilde{Z}_2,\ldots,\tilde{Z}_{n_1}$ and $\tilde{K}_1,\tilde{K}_2,\ldots,\tilde{K}_{n_1}$. Recall that by definition, the $(\vecX(\phi), \vecY(\phi))$ are i.i.d.\ over $1 \leq \phi \leq \Phi$. Hence, this must also be the case for $(V_{i_0}(\phi), V_1^{i_0-1}(\phi),\vecY^\DPT(\phi))$, by (\ref{eq:Vphi}). The first process is the evolution of the conditional Bhattacharyya parameter as we apply the $n_1$ polar transforms implied by (\ref{eq:Btildeprocess}), to $(\tilde{X}_\phi,\tilde{Y}_\phi)_{\phi=1}^\Phi$, where
        \[
            \tilde{X}_\phi = V_{i_0}(\phi) \; \; \mbox{and} \;\; \tilde{Y}_\phi =  (V_1^{i_0-1}(\phi),\vecY^\DPT(\phi)) \; .
        \]
        The second process is defined similarly, but now we consider the evolution of the conditional total variation parameter as we apply $n_1$ polar transforms to $(\tilde{X}_\phi,\dbtilde{Y}_\phi)_{\phi=1}^\Phi$, where
        \[
            \dbtilde{Y}_\phi =  V_1^{i_0-1}(\phi) \; .
        \]
        By our assumption of $i_0$ satisfying (\ref{eq:goodIndexIZK}),
        \[
            \tilde{Z}_1 = Z(\tilde{X}_1|\tilde{Y}_1) < \epsilon \;\; \mbox{and} \;\; \tilde{K}_1 = K(\tilde{X}_1|\dbtilde{Y}_1) < \epsilon \; .
        \]
        Since $(\tilde{X}_\phi,\tilde{Y}_\phi)$ are i.i.d.\ over $\phi$, and the same holds for $(\tilde{X}_\phi,\dbtilde{Y}_\phi)$, we have by \cite[Proposition 5]{Arikan_2009}  that
        \[
            \tilde{Z}_{t+1} \leq
            \begin{cases}
                2\tilde{Z}_t & \mbox{if $\tilde{B}_t = 0$} \\
                \tilde{Z}_t^2 & \mbox{if $\tilde{B}_t = 1$}
            \end{cases}
        \]
        and by \cite[Proposition 4]{Shuval_Tal_Memory_2017} that
        \[
            \tilde{K}_{t+1} \leq
            \begin{cases}
                \tilde{K}_t^2 & \mbox{if $\tilde{B}_t = 0$} \\
                2\tilde{K}_t & \mbox{if $\tilde{B}_t = 1$} .
            \end{cases}
        \]
        Lastly, it follows from (\ref{eq:vecVSlice}) that $\tilde{Z}_{n_1}$ equals the LHS of (\ref{eq:subclaim_zsmall}) while $\tilde{K}_{n_1}$ equals the LHS of (\ref{eq:subclaim_ksmall}), where $i$ is defined in (\ref{eq:bitReversedI}), with $B_j$ instead of $b_j$.

        To prove the sub-claim, we must show that, for every $\xi > 0$, there exists a threshold such that, if $n$ is larger than the threshold, then the fraction of indices $i$ satisfying both (\ref{eq:subclaim_zsmall}) and (\ref{eq:subclaim_ksmall}) is at least $\calI - \xi$. We will do this by choosing an $\epsilon$ and $n_0$ such that the fraction of indices satisfying (\ref{eq:goodIndexIZK}) is at least $\calI - \xi/3$.
        Of these weakly polarized indices, we will choose $n_1$ such that at least a fraction $1- 2\xi/3$ satisfy both (\ref{eq:subclaim_zsmall}) and (\ref{eq:subclaim_ksmall}).
        This is sufficient because $(\mathcal{I} - \xi/3)(1-2\xi/3) \geq \mathcal{I}-\xi$.
        To make a proper argument, however, we will work in reverse.
        
        First, we will set the parameters for strong polarization assuming sufficient weak polarization.
        In particular, we define
                \begin{equation}
                    \label{eq:betafixed}
                    \beta = 3(\nu + \nu'')/4
                \end{equation}
        and observe that (\ref{eq:trainOfNuInequalites}) implies $0 < \beta < 1/2$.
        Then, we let $\psi = \xi/3$ be the maximum fraction of weakly polarized indices that can fail to strongly polarize and apply \cite[Lemma 40]{ShuvalTal:18a} to determine a valid maximum for $\epsilon$ and minimum for $n_1$ (in \cite{ShuvalTal:18a}, $\psi$, $\epsilon$, and $n_1$ are denoted $\delta$, $\eta$, and $n$, respectively).
        This lemma implies the existence of an $\epsilon > 0$ such that if (\ref{eq:goodIndexIZK}) holds for an index $i_0$, then the fraction of $i$ values ($(i_0-1) \cdot \Phi + 1 \leq i \leq i_0 \cdot \Phi$) for which both $\tilde{Z}_i < 2^{-2^{\beta n_1}}$ and $\tilde{K}_i < 2^{- 2^{\beta n_1}}$ is at least $1-2\xi/3$, for all $n_1$ large enough\footnote{Crucially, $\epsilon$ and the $n_1$ threshold do not depend on the choice of $i_0$.}.
        Conceptually, we need to apply the lemma twice -- once for (\ref{eq:subclaim_zsmall}) and once for (\ref{eq:subclaim_ksmall}).        
        Thus, the fraction of weakly polarized indices that fail to satisfy both (\ref{eq:subclaim_zsmall}) and (\ref{eq:subclaim_ksmall}) is at most $2\psi = 2\xi/3$.
        
        Next, for the $\epsilon$ determined above, we find the minimum $n_0$ to guarantee that (\ref{eq:goodIndexIZK}) holds for at least a fraction $\mathcal{I}-\xi/3$ of the $i_0$ indices.       
        Lastly, we recall that $n_0$ and $n_1$ are monotonically increasing functions of $n$, by (\ref{eq:n0}) and (\ref{eq:n1}).
        Hence, for all large enough $n$, the parameters $n_0$ and $n_1$ will exceed the bounds computed earlier and the fraction of indices satisfying (\ref{eq:subclaim_zsmall}) and (\ref{eq:subclaim_ksmall}), where in both cases we replace the RHS by $2^{- 2^{\beta n_1}}$, is at least $\calI - \xi$. 
        
        In order to prove the sub-claim, all that remains  is to show that, for all large enough $n$, we have
        \begin{equation}
            \label{eq:betaAndNuDoublePrime}
            2^{-2^{\beta n_1}} < \frac{1}{3N} \cdot 2^{-2^{\nu'' n}} \; ,
        \end{equation}
        the latter term being RHS of (\ref{eq:subclaim_zsmall}) and (\ref{eq:subclaim_ksmall}). Indeed, by (\ref{eq:n1}) we have that $n_1 \geq (1-\nu)n$, and recalling from (\ref{eq:trainOfNuInequalites}) that $\nu \leq 1/3$, we deduce that $n_1 \geq 2n/3$. Hence, to prove (\ref{eq:betaAndNuDoublePrime}), it suffices to show that
        \begin{equation}
            \label{eq:betaAndNuDoublePrimePlusPlus}
            2^{-2^{2\beta n/3}} < \frac{1}{3N} \cdot 2^{-2^{\nu'' n}} \; .
        \end{equation}
        Indeed, by (\ref{eq:trainOfNuInequalites}) and (\ref{eq:betafixed}) we have that $2 \beta/3 > \nu''$. Thus, recalling that $N=2^n$, we deduce that (\ref{eq:betaAndNuDoublePrimePlusPlus}) holds for all $n$ large enough.

        We now move to the second part of our proof. Let us first discuss how data is encoded. We produce $\vecu = u_1^{N}$ successively, starting from $u_1$ and ending in $u_N$. If the current index $i$ satisfies (\ref{eq:subclaim_zsmall}) and (\ref{eq:subclaim_ksmall}), then $u_i$ is set to an information bit, where the information bits are assumed i.i.d.\ and Bernoulli$(1/2)$. Otherwise, $u_i$ is randomly picked according to the distribution $P(U_i = u_i | U_1^{i-1} = u_1^{i-1})$, where $u_1^{i-1}$ are the realizations occurring in previous stages. The random picks in this case are assumed to be from a random source common to both the encoder and the decoder. Typically, this is implemented using a pseudo-random number generator, common to both sides: if the pseudo-random number $0 \leq r_i \leq 1$ drawn for this stage is such that $P(U_i = 0 | U_1^{i-1} = u_1^{i-1}) \leq r_i$, we set $u_i = 0$. Otherwise, we set $u_i = 1$. These are essentially the `frozen-bits' from the seminal paper \cite{Arikan_2009}.  Transforming $\vecu$ to $\vecx = \calA^{-1}_n(\vecu)$ and adding guard bands to $\vecx$ is as described before.
        
        The following sub-claim proves a key part of our theorem and is an immediate consequence of Subclaims \ref{subclaim:ratePenalty} and \ref{subclaim:calIInformationBits}.
        \begin{subclaim}
            The rate of our coding scheme approaches $\calI$, as $n \to \infty$.
        \end{subclaim}

        Note that the probability distribution of our encoded $\vecu$ does \emph{not} generally equal that of the random variable $\vecU$ used throughout this paper. Namely, denote by $\tilde{p}$ the probability distribution corresponding to the above encoding process: the probability of the encoder producing the vector $\vecu$ is $\tilde{p}(\vecu)$. Next, denote by $p$ the probability distribution of $\vecU$. That is, the probability we would get if we were to set $u_i$ to $0$ with probability $P(U_i = 0 | U_1^{i-1} = u_1^{i-1})$, irrespective of whether $i$ satisfies (\ref{eq:subclaim_zsmall}) and (\ref{eq:subclaim_ksmall}) or not. Our plan is to show that the difference between $p$ and $\tilde{p}$ is `small'. However, we must first address a subtle point stemming from this difference in distributions. Specifically, the probability $P(U_i = 0 | U_1^{i-1} = u_1^{i-1})$ used at stage $i$ might be undefined, since we might be conditioning on an event with probability $0$. In this case, we define the above probability to be $1/2$.

        We decode as previously explained: we first recursively partition the received vector into $\vecy(1)^\DPT, \vecy(2)^\DPT,\ldots,\vecy(\blockCount)^\DPT$. Then, we employ successive cancellation decoding. That is, we produce our estimate $\hat{\vecu} = \hat{u}_1^N$ of $\vecu$ by first producing $\hat{u}_1$, then $\hat{u}_2$, etc., up to $\hat{u}_N$. If index $i$ is such that both (\ref{eq:subclaim_zsmall}) and (\ref{eq:subclaim_ksmall}) hold, then we set $\hat{u}_i$ to the value maximizing 
        \begin{multlinecc}
            \label{eq:decodeNonTrivialI}
            P(U_i = \hat{u}_i |U_1^{i-1}=\hat{u}_1^{i-1},\vecY(1)^\DPT=\vecy(1)^\DPT,\\ \vecY(2)^\DPT=\vecy(2)^\DPT,
            \ldots,\vecY(\blockCount)^\DPT = \vecy(\blockCount)^\DPT) \; .
        \end{multlinecc}
        Otherwise, if $i$ does not satisfy both (\ref{eq:subclaim_zsmall}) and (\ref{eq:subclaim_ksmall}), we set $\hat{u}_i$ is accordance with the common randomness. That is, in the pseudo-random number implementation, we set $\hat{u}_i = 0$ if 
        \begin{equation}
            \label{eq:decodeTrivialI}
            P(U_i = 0 | U_1^{i-1} = \hat{u}_1^{i-1}) \leq r_i \; .
  \end{equation}
Otherwise, we set $\hat{u}_i = 1$.

We stress that the probabilities in (\ref{eq:decodeNonTrivialI}) and (\ref{eq:decodeTrivialI}) are calculated according to the probability distribution of the random vector $\vecU$ used throughout this paper. That is, although $\vecu$ has been encoded according to the probability $\tilde{p}$, we decode it `as if' it had been encoded using $p$. This discrepancy will shortly be addressed. However, as a first step, the following sub-claim considers the case in which there is no discrepancy.

\begin{subclaim}
    If $\vecu$ were chosen according to the probability distribution $p$, then the probability of misdecoding would be less than $\frac{2}{3} \cdot 2^{-2^{\nu'' n}}$, for large enough $n$.
\end{subclaim}
To see this, note that if the above were the case, then $\vecu$ and $\vecU$ would have the same probability distribution. Thus, Subclaim~\ref{subclaim:trimming} would apply, and would imply that the probability of our partitioning algorithm failing to produce the correct $\vecy(1)^\DPT,\vecy(2)^\DPT,\ldots,\vecy(\blockCount)^\DPT$ from the received vector would be less than $\frac{1}{3} \cdot 2^{-2^{\nu'' n}}$, for large enough $n$. Also, if a `genie' were to give us the correct $\vecy(1)^\DPT,\vecy(2)^\DPT,\ldots,\vecy(\blockCount)^\DPT$, we have from (\ref{eq:subclaim_zsmall}) that the probability of misdecoding $\vecu$ would be less than $\frac{1}{3} \cdot 2^{-2^{\nu'' n}}$ for large enough $n$, using exactly\footnote{Since \cite{Arikan_2009} considers the Bhattacharyya parameter for the case of a channel with uniform input, we also need to claim that our $Z$ upper bounds the probability of maximum-aposteriori misdecoding in the more general setting where the channel input is non-uniform. This is well known, see e.g.\ \cite[Remark 1]{Shuval_Tal_Memory_2017} for a proof of a slightly stronger claim.} the same arguments as given in \cite[Proof of Theorem 2]{Arikan_2009} to bound the probability of the successive cancellation decoder failing. The result follows by applying the union bound.

For $\vecu$ such that $p(\vecu) > 0$, denote by $\Pe(\vecu)$ the probability that our decoder fails, given that $\vecu$ was encoded. Otherwise, if $p(\vecu)=0$, define\footnote{Note that we are being conservative. We could have simply defined $\Pe(\vecu)$ as the probability that our decoder fails, given that $\vecu$ was encoded. However, if our input distribution is such that some vectors $\vecu$ are given a probability of $0$, say in order to satisfy a constraint on the input, we should treat the event of the encoder producing a $\vecu$ not satisfying this constraint as an error.} $\Pe(\vecu)=1$. We have just shown that for large enough $n$,
\begin{equation}
    \sum_{\vecu \in \calX^N} p(\vecu) \Pe(\vecu) < \frac{2}{3} \cdot 2^{-2^{\nu'' n}} \; .
\end{equation}
However, recall that our ultimate goal is to upper bound the LHS, after $p(\vecu)$ is replaced by $\tilde{p}(\vecu)$. Informally, a similar bound holds for this case as well, since $p$ and $\tilde{p}$ are `close'. The two following sub-claims makes this statement precise.
\begin{subclaim}
    \[
        \sum_{\vecu \in \calX^N} | \tilde{p}(\vecu) - p(\vecu) | < \frac{1}{3} \cdot 2^{-2^{\nu'' n}}
    \]
\end{subclaim}
To see this, we use the following result from \cite[Lemma 3.5]{Korada:09z}:
    \[
    A_1^N - B_1^N = \sum_{i=1}^N B_1^{i-1} (A_i - B_i) A_{i+1}^N  
\]
where, here, $A_i^j$ denotes the product  $A_i^j= A_i \cdot A_{i+1} \cdots A_j$, and $A_1^0 = A_{N+1}^N \triangleq 1$. We now take
\[
    A_i = A_i(\vecu) =  \tilde{p}(u_i|u_1^{i-1}) \quad \mbox{and} \quad B_i = B_i(\vecu) = p(u_i|u_1^{i-1}) \; .
\]
Recall that we have defined $B_i$ to be $1/2$ if $p(u_1^{i-1}) = 0$. Similarly, we define $A_i$ to be $1/2$ if $\tilde{p}(u_1^{i-1}) = 0$.
We deduce that
\begin{IEEEeqnarray*}{rCl}
    \IEEEeqnarraymulticol{3}{l}{\sum_{\vecu \in \calX^N} |\tilde{p}(\vecu) - p(\vecu) |} \\
\quad & = & \sum_{\vecu \in \calX^N} |A_1^N - B_1^N| \\
      & = & \sum_{\vecu \in \calX^N} \left|\sum_{i=1}^N B_1^{i-1} (A_i - B_i)A_{i+1}^N \right| \\
      & \leq  & \sum_{\vecu \in \calX^N} \sum_{i=1}^N |B_1^{i-1} (A_i - B_i)A_{i+1}^N | \\
      & =  & \sum_{i=1}^N \sum_{\vecu \in \calX^N}  |B_1^{i-1} (A_i - B_i)A_{i+1}^N | \; , \IEEEyesnumber \label{eq:KoradaInternalExternal}
\end{IEEEeqnarray*}
where the first equality follows by the chain rule and the first inequality follows from the triangle inequality. Next, fix $i$, and consider the internal sum in (\ref{eq:KoradaInternalExternal}),
\begin{equation}
\label{eq:KoradaInternal} \sum_{\vecu \in \calX^N}  |B_1^{i-1} (A_i - B_i)A_{i+1}^N | \; .
\end{equation}
If $i$ is an index for which both (\ref{eq:subclaim_zsmall}) and (\ref{eq:subclaim_ksmall}) hold, then $A_i = A_i(\vecu) = 1/2$ for all $\vecu$.  
For this case, we get from (\ref{eq:subclaim_ksmall}) and Lemma~\ref{lemm:KsmallCloseToBerHalf} in Appendix~\ref{sec:ZK} that
\begin{IEEEeqnarray*}{rCl}
    \IEEEeqnarraymulticol{3}{l}{\sum_{\vecu \in \calX^N}  |B_1^{i-1} (A_i - B_i)A_{i+1}^N |} \\
\quad &=& \sum_{\vecu \in \calX^N}  B_1^{i-1} \cdot |A_i - B_i| \cdot A_{i+1}^N \\
      & = & \sum_{\vecu \in \calX^N} p(u_1^{i-1}) \cdot \left|\frac{1}{2} - p(u_i|u_1^{i-1}) \right| \cdot \tilde{p}(u_{i+1}^n | u_1^i)   \\
      & = & \sum_{u_1^i \in \calX^i} p(u_1^{i-1}) \cdot \left|\frac{1}{2} - p(u_i|u_1^{i-1}) \right| \sum_{u_{i+1}^n \in \calX^{N-i}}  \tilde{p}(u_{i+1}^n | u_1^i)   \\
     & = & \sum_{u_1^i \in \calX^i} p(u_1^{i-1}) \cdot \left|\frac{1}{2} - p(u_i|u_1^{i-1}) \right| \\
     & = &  K(U_i|U_1^{i-1}) < \frac{1}{3N} \cdot 2^{-2^{\nu'' n}} \; .
\end{IEEEeqnarray*}
Otherwise, if $i$ is an index for which either (\ref{eq:subclaim_zsmall}) or (\ref{eq:subclaim_ksmall}) do not hold, then $A_i = B_i$ for all $\vecu$, and thus (\ref{eq:KoradaInternal}) equals $0$. The sub-claim follows.

We are now ready to state our bound on the probability of misdecoding.
\begin{subclaim} \label{subclaim:proberrorN} For large enough $n$,
    \[
        \sum_{\vecu \in \calX^N} \tilde{p}(\vecu) \Pe(\vecu) < 2^{-2^{\nu'' n}} \; .
    \]
\end{subclaim}
To show this, we use the two previous sub-claims as follows,
\begin{IEEEeqnarray*}{rCl}
    \sum_{\vecu \in \calX^N}  \tilde{p}(\vecu) \Pe(\vecu) & = & \sum_{\vecu \in \calX^N} (p(\vecu) + \tilde{p}(\vecu) - p(\vecu) ) \Pe(\vecu) \\
     & \leq  & \sum_{\vecu \in \calX^N} (p(\vecu) + |\tilde{p}(\vecu) - p(\vecu)| ) \Pe(\vecu) \\
     & \leq  & \sum_{\vecu \in \calX^N} p(\vecu) \Pe(\vecu) + \sum_{\vecu \in \calX^N} |\tilde{p}(\vecu) - p(\vecu)| \\
                                                           & < &  \frac{2}{3} \cdot 2^{-2^{\nu'' n}} + \frac{1}{3} \cdot 2^{-2^{\nu'' n}} \; ,
\end{IEEEeqnarray*}
which holds for a large enough $n$.

Recall that in the statement of our theorem, we have denoted the length of our codeword (after adding the guard bands) as $\Lambda$. The following subclaim proves another key part of our theorem. 
\begin{subclaim}
    For large enough $n$, the probability of misdecoding is less than $2^{-\Lambda^{\nu' n}}$.
\end{subclaim}
The proof follows by (\ref{eq:nuDoublPrime}), Subclaim~\ref{subclaim:ratePenalty}, and Subclaim~\ref{subclaim:proberrorN}.

All that remains now is to discuss the encoding and decoding complexity of our algorithms.
\begin{subclaim}
    The encoding complexity is $O(\Lambda \log \Lambda)$.
\end{subclaim}
Like the complexity of successive cancellation decoding, the complexity of producing $\vecu$, and from it $\vecx$ is $O(N \log N)$. Adding the guard bands is a simple recursive process whose total time is $O(\Lambda)$. Since $\Lambda \geq N$, the result follows.

\begin{subclaim}
    The decoding complexity is $O(\Lambda^{1+3\nu})$.
\end{subclaim}
The complexity of partitioning the received vector $\vecy$ into the $\Phi$ trimmed blocks $\vecy(1)^\DPT, \vecy(2)^\DPT,\ldots,\vecy(\blockCount)^\DPT$ is $O(\Lambda)$. Next, consider step $i$ of the decoding algorithm, in which we decide on the value of $\hat{u}_i$. The key step is to calculate the probability
\begin{multlinecc*}
    P(U_i = 0| U_1^{i-1} = \hat{u}_1^{i-1},\vecY(1)^\DPT = \vecy(1)^\DPT,\\
    \vecY(2)^\DPT = \vecy(2)^\DPT,\ldots,\vecY(\blockCount)^\DPT = \vecy(\blockCount)^\DPT) \; .
\end{multlinecc*}
This is done in two stages. Recall (\ref{eq:vecVSlice}) and the discussion below it. First, for each $1 \leq \phi \leq \Phi$, we calculate the probabilities
\[                                                                    
    P(V_{i_0}(\phi) = 0| V_1^{i-1}(\phi) = \hat{v}_1^{i-1}(\phi),\vecY(\phi)^\DPT = \vecy(\phi)^\DPT) \; ,
\]
where $i_0$ is the unique integer for which
\[
    (i_0-1) \Phi + 1 \leq i \leq i_0 \Phi 
\]
and $\hat{v}_1^{i_0-1}(\phi)$ is related to $\hat{u}_1^{i-1}$ through (\ref{eq:vecVSlice}). That is, we have just calculated the probabilities corresponding to the first $n_0$ polarization stages. Recall that by Subsection~\ref{subsec:TrellisTDC}, this can be done using $\Phi$ trellises. Next, we apply the remaining $n-n_0$ polarization steps to these probabilities. That is, the standard SC decoder is run for the last $n-n_0$ stages, and can be thought of as effectively operating on a code of length $N_1 = 2^{n_1} = 2^{n-n_0}$.

The total running time of the second stage is well known to be $O(N_1 \log N_1)$, which is indeed $O(\Lambda^{1+3\nu})$. Recalling the discussion in Subsection~\ref{subsec:complexity}, the total running time of the first stage is
\[
    O(\Phi \cdot|\mathcal{S}|^3 N_0^4)  \; ,
\]
where $|\mathcal{S}|$ is the number of states in the Markov chain through which the input distribution is defined (and which we treat as a constant), $N_0 = 2^{n_0} = 2^{\floor{n \nu}}$ and $\Phi = 2^{n - n_0} = 2^{n - \floor{n \nu}}$. Since $N = 2^n \leq \Lambda$, the result follows.

    \end{IEEEproof}

\appendix
\subsection{Conditional Bhattacharyya and Total Variation} \label{sec:ZK}
In this section we define the conditional Bhattacharyya parameter $Z(X|Y)$ and the conditional total variation $K(X|Y)$. See~\cite[Section III]{Shuval_Tal_Memory_2017} for various connections between these and other measures, as well as for their relation to polarization transforms.
\begin{defi}[The conditional Bhattacharyya parameter] \label{defi:Z}
    Let $X \in \calX$ be a binary random variable and $Y \in \calY$ be a discrete random variable. Let their joint distribution be $P_{X,Y}$. We denote
    \begin{IEEEeqnarray*}{rCl}
        Z(X|Y) &=& 2 \sum_{y \in \calY} \sqrt{P_{X,Y}(0,y) \cdot P_{X,Y}(1,y)} \\
               &=& 2 \sum_{y \in \calY} P_Y(y) \sqrt{P_{X|Y}(0|y) \cdot P_{X|Y}(1|y)} \; .
     \end{IEEEeqnarray*}
\end{defi}

\begin{defi}[The conditional total variation] \label{defi:K}
    Let $X \in \calX$ be a binary random variable and $Y \in \calY$ be a discrete random variable. Let their joint distribution be $P_{X,Y}$. We denote
    \begin{IEEEeqnarray*}{rCl}
        K(X|Y) &=& \sum_{y \in \calY} |P_{X,Y}(0,y) - P_{X,Y}(1,y)| \\
               &=& \sum_{y \in \calY} P_Y(y) \cdot |P_{X|Y}(0|y) - P_{X|Y}(1|y)| \; .
     \end{IEEEeqnarray*}
\end{defi}

The following lemma shows that if $K(X|Y)$ is `small', then $P(X|Y)$ is `close' to the Bernoulli$(1/2)$ distribution.
\begin{lemm} \label{lemm:KsmallCloseToBerHalf}
    Let $X \in \calX$ be a binary random variable and $Y \in \calY$ be a discrete random variable. Let their joint distribution be $P_{X,Y}$. Then
    \[
        \sum_{\substack{x \in \calX\\y \in \calY}} P_Y(y) \cdot |P_{X|Y}(x|y) - 1/2| = K(X|Y) \; .
    \]
\end{lemm}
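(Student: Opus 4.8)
The plan is a direct pointwise-in-$y$ computation exploiting that $\calX=\{0,1\}$ is binary, so that $P_{X|Y}(0|y)+P_{X|Y}(1|y)=1$ for every $y$ with $P_Y(y)>0$ (for $y$ with $P_Y(y)=0$ every term in both sides vanishes, so such $y$ may be ignored). First I would observe the algebraic identity
\[
    P_{X|Y}(0|y)-\tfrac12 \;=\; \tfrac12\bigl(P_{X|Y}(0|y)-P_{X|Y}(1|y)\bigr)\;=\;-\Bigl(P_{X|Y}(1|y)-\tfrac12\Bigr),
\]
which follows by substituting $1=P_{X|Y}(0|y)+P_{X|Y}(1|y)$. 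Taking absolute values gives
\[
    \bigl|P_{X|Y}(0|y)-\tfrac12\bigr| \;=\; \bigl|P_{X|Y}(1|y)-\tfrac12\bigr| \;=\; \tfrac12\,\bigl|P_{X|Y}(0|y)-P_{X|Y}(1|y)\bigr|.
\]

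Next I would sum over $x\in\{0,1\}$ for fixed $y$: the two summands are equal, so their sum equals $2\cdot\tfrac12\,|P_{X|Y}(0|y)-P_{X|Y}(1|y)| = |P_{X|Y}(0|y)-P_{X|Y}(1|y)|$. Finally, multiplying by $P_Y(y)$ and summing over $y\in\calY$ yields
\[
    \sum_{\substack{x\in\calX\\ y\in\calY}} P_Y(y)\,\bigl|P_{X|Y}(x|y)-\tfrac12\bigr|
    \;=\; \sum_{y\in\calY} P_Y(y)\,\bigl|P_{X|Y}(0|y)-P_{X|Y}(1|y)\bigr|
    \;=\; K(X|Y),
\]
where the last equality is exactly the second form of Definition~\ref{defi:K}. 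This completes the argument.

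There is no real obstacle here; the only point requiring a word of care is the treatment of $y$ with $P_Y(y)=0$, where the conditional probabilities $P_{X|Y}(\cdot|y)$ are not defined — but since each such $y$ contributes $0$ to both sides (the weight $P_Y(y)$ kills the term), one may harmlessly restrict all sums to the support of $P_Y$, or equivalently phrase everything in terms of the joint distribution $P_{X,Y}$ as in the first form of Definition~\ref{defi:K}.
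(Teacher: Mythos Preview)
Your proposal is correct and follows essentially the same approach as the paper: both arguments use the binary constraint $P_{X|Y}(0|y)+P_{X|Y}(1|y)=1$ to show that $|P_{X|Y}(0|y)-\tfrac12|+|P_{X|Y}(1|y)-\tfrac12|=|P_{X|Y}(0|y)-P_{X|Y}(1|y)|$ pointwise in $y$, then sum. The paper phrases the key identity via the substitution $P_{X|Y}(0|y)=\tfrac12+\delta(y)$, but this is the same computation you carry out.
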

\begin{proof}
    \begin{IEEEeqnarray*}{rCl}
        \IEEEeqnarraymulticol{3}{l}{\sum_{\substack{x \in \calX\\y \in \calY}} P_Y(y) \cdot |P_{X|Y}(x|y) - 1/2|}\\
        &=& \sum_{\substack{y \in \calY}} P_Y(y) \cdot \left(|P_{X|Y}(0|y) - 1/2| + |P_{X|Y}(1|y) - 1/2| \right) \\
        &=& \sum_{\substack{y \in \calY}} P_Y(y) \cdot \left(|P_{X|Y}(0|y) - P_{X|Y}(1|y)| \right) \\
        &=& K(X|Y) \; ,
     \end{IEEEeqnarray*}
     where the penultimate equality is easily seen to hold if we denote $P_{X|Y}(0|y) = 1/2 + \delta(y)$, from which it follows that $P_{X|Y}(1|y) = 1/2 - \delta(y)$.
\end{proof}

\subsection{Capacity-Achieving Inputs for the Deletion Channel}

In~\cite{Dobrushin_1967}, Dobrushin proves a capacity result for a class of synchronization error channels that includes the binary deletion channel.
That paper also shows that the capacity can be approached by a sequence of finite-order Markov input distributions.
Unfortunately, the Markov input distribution in Dobrushin's construction is not irreducible~\cite[Lemma 4]{Dobrushin_1967}.
Thus, Dobrushin's result falls slightly short of what is required by the polar coding construction in this paper.
In~\cite{Li_2019}, Li and Tan study the capacity of the concatenation of a deletion channel and a finite-state channel.
For this setup, they prove a capacity result and show that the capacity can be approached by a sequence of finite-order Markov input distributions that are irreducible and aperiodic.
As they note in their paper, their result is sufficient to prove that the polar coding scheme in this paper can achieve capacity.

In this section, we describe a regular hidden-Markov input distribution that also achieves capacity on the deletion channel.
Though this is not required, given~\cite{Li_2019}, we include it for completeness and because the argument is somewhat different.

Denote by $P_{X^N}$ an input distribution over binary vectors of length $N$, which we will shortly optimize over. Let $\obX \triangleq(X_{1},\ldots,X_{N})$ be a random binary vector of length $N$ drawn according to $P_{X^N}$. Take $\obX$ as the input sequence to a binary deletion channel with deletion probability $\delta\in(0,1)$ and let $\obY \triangleq(Y_{1},\ldots,Y_{M})$ be the corresponding output sequence where the random variable $M$ is the output length. The maximum mutual information for a length-$N$ input is denoted by 
\begin{equation} \label{eq:delcapn}
C_{N}\triangleq\max_{P_{X^{N}}}\frac{1}{N}I(\obX;\obY).
\end{equation}
It is well-known \cite[proof of Theorem II.1]{Kanoria_2013} that $NC_{N}$ is a subadditive sequence and this implies \cite[Lemma 1.2.1, page 3]{Steele:97b} that
\[
C=\lim_{N\to\infty}C_{N}=\inf_{N\geq1}C_{N}
\]
exists and satisfies $C\leq C_{N}$ for $N\geq1$. Thus, for the optimal $P_{X^N}$ we have
\begin{equation}
    \label{eq:IobXobYBoundsC}
 \frac{1}{N}I(\obX;\obY) \geq C \; .
\end{equation}

We begin with the standard approach~\cite{Chen_2008} of using an optimal $P_{X^{N}}$ from~\eqref{eq:delcapn} to generate a length-$kN$ random input $\vecX = \vecX (1) \concat \cdots \concat \vecX (k)$ where each $\vecX(i)$ is a length-$N$ block drawn independently from $P_{X^{N}}$ and using $\concat$ to represent vector concatenation. For this input, we denote the output by $\vecY = \vecY(1) \concat \cdots \concat \vecY(k)$ where $\vecY(i)$ contains the output symbols associated with the input $\vecX(i)$. Thus, for each $i$, the pair $\vecX(i),\vecY(i)$ has the same distribution as the pair $\obX,\obY$.
The random variables $\Ybl_{i} = |\vecY(i)|$, for $i\in[k]$, are chosen to equal the number of output symbols generated by the input block $\vecX(i)$.

Using the chain rule for mutual information, we note that
\begin{align*}
I\big(\vecX;\vecY,\Ybl_1^{k}\big)&=I(\vecX;\vecY)+I\big(\vecX;\Ybl_1^{k}|\vecY\big) \\
&\leq I(\vecX;\vecY) + k\log_{2}(N+1),
\end{align*}
where inequality follows from $I\big(\vecX;\Ybl_1^{k}|\vecY\big) \leq \sum_{i=1}^k H\big(M_{i}\big)$ and $0\leq \Ybl_i \leq N$.
Thus, it follows that
\begin{align*}
I& (\vecX;\vecY) \geq -k\log_{2}(N+1) + I\big(\vecX;\vecY,\Ybl_1^{k}\big)\\
&\stackrel{\!\mathrm{(a)}\!}{=} -k\log_{2}(N+1) + I\big(\vecX;\vecY(1),\ldots,\vecY(k)\big)\\
&= -k\log_{2}(N+1) + \sum_{i=1}^{k}I\big(\vecX;\vecY(i)|\vecY(1),\ldots,\vecY(i-1)\big)\\
&\stackrel{\!\mathrm{(b)}\!}{=} -k\log_{2}(N+1) +\sum_{i=1}^{k}I\big(\vecX(i);\vecY(i)\big)\\
&= -k\log_{2}(N+1)+kI(\obX;\obY)\\
& =kN\left(\frac{1}{N}I(\obX;\obY)-\frac{\log_{2}(N+1)}{N}\right)\\
& \stackrel{\!\mathrm{(c)}\!}{\geq} kN\left(C-\frac{\log_{2}(N+1)}{N}\right),
\end{align*}
where $\mathrm{(a)}$ holds because there is an invertible mapping from $\vecY,\Ybl_1^{k}$ to $\vecY(1),\ldots,\vecY(k)$,  $\mathrm{(b)}$ follows from the pairs $(\vecX(i),\vecY(i))_{i=1}^k$ being i.i.d., and $\mathrm{(c)}$ follows from (\ref{eq:IobXobYBoundsC}).
After normalizing by the input length, this gives
\[ \frac{1}{kN} I(\vecX;\vecY)  \geq C - \frac{\log_2 (N+1)}{N}. \]
Thus, the information rate can be made arbitrarily close to $C$ by choosing $N$ large enough.

However, the infinite input distribution formed by concatenating length-$N$ blocks cannot be generated by a regular hidden-Markov process.
In order to explain how to overcome this, we will first describe this input distribution as a hidden-Markov process with state set
\[ \mathcal{S} \triangleq \bigcup_{j=0}^{N-1} \big\{ x \in \{0,1\}^j \, \big| \, P_{X^j}(x) \neq 0 \big\} , \]
where the set $\{0,1\}^i$ represents all possible states after $i$ input symbols from the length-$N$ input distribution $P_{X^N}$.
We denote the initial state by the empty string $\varepsilon \triangleq \{0,1\}^0$ and let $P_{X^0}(\varepsilon) = 1$ by convention.
To generate multiple blocks, we define the underlying Markov chain to start in the $\varepsilon$ state and return to the $\varepsilon$ state with probability 1 after generating $N$ outputs.
Thus, the underlying Markov chain is irreducible because we have only included states with positive probability and there is a path with positive probability from $\varepsilon$ to any $x \in \mathcal{S}$.

Notice that the state implicitly encodes the current input position in the length-$N$ block distribution.
For example, if $s\in \{0,1\}^j$, then next symbol is drawn according to $P_{X^{j+1}|X^j}(x|s)$.
Thus, the underlying Markov chain is periodic with period $N$.
To make it aperiodic, we will introduce one additional state, which we denote by $\tau$, that is used to dither the input block between length-$N$ and length-$(N+1)$.
State $\tau$ always outputs a dither bit whose value is $0$ and then transitions to state $\varepsilon$.
The idea is that, after a length-$N$ input block, a fair coin is used to determine if the next block will start immediately (e.g., the underlying Markov chain transitions to state $\varepsilon$) or be delayed by one symbol (e.g., the underlying Markov chain transitions to state $\tau$).
After this, the modified Markov chain will be aperiodic because the transition graph has loops of length $N$ and $N+1$.
The period of a Markov chain is the greatest common divisor of the lengths of all loops in the transition graph.
Since $N$ and $N+1$ are relatively prime, the period is 1 and the chain is aperiodic.
We also note that the new Markov chain is still irreducible because there is still a path with positive probability between any two states.

Let $S_0$ be initial state of the underlying Markov chain.
In the current formulation, we have $S_0 = \varepsilon$ with probability 1 and the Markov chain is not stationary.
One can make this Markov chain stationary by drawing the initial state $S_0$ from the stationary distribution of the underlying Markov chain.
After this change, we have constructed a regular hidden-Markov input derived from our original $P_{X^N}$ block distribution.

Now, let $\vecX$ be a length-$k(N+1)$ input drawn from the constructed hidden-Markov process.
This input can be broken into segments by adding commas before the inputs generated by the state $\varepsilon$. A complete segment is delimited by commas on both sides, and thus has length either $N$ or $N+1$. Note that $\vecX$ contains at least $k$ segments, and by discarding the first segment we get at least $k-1$ complete segments.  We call the length-$N$ prefix of a complete segment a block. Thus, we have at least $k-1$ blocks, $\vecX(2),\ldots,\vecX(k)$, where each block can be associated with an independent draw from $P_{X^N}$.
Let $T_i \in \{0,1\}$ be the side-information random variable that indicates, for the $i$-th  (possibly incomplete) segment, whether or not state $\tau$ was visited during that segment.
Given $S_0$ and $T_1^k$, it is always possible to compute the locations of the commas described above and separate $\vecX$ into the $k-1$ blocks $\vecX(2),\ldots,\vecX(k)$.
This is because $S_0$ gives the initial offset into the first segment and $T_i$ indicates whether or not each segment has the additional dither bit.

Similarly, the output $\vecY$ can be separated into subvectors associated with the above blocks by adding commas to separate outputs generated by different segments and removing any outputs caused by dither bits.
Namely, we let $M_i \in \{0,\ldots,N+1\}$ be the side-information random variable that indicates the number of outputs generated by the $i$-th \emph{segment} and $R_i \in \{0,1\}$ be the side-information random variable that indicates whether the last output in a subvector is due to a dither bit.
Given $M_1^k$ and $R_1^k$, it is always possible to separate $\vecY$ into $\vecY(2),\ldots,\vecY(k)$ where each $\vecY(i)$ is the output associated with the block $\vecX(i)$. Thus, each pair $(\vecX(i),\vecY(i))$ has the same distribution as $(\obX,\obY)$.
Using this setup, the chain rule of mutual information and cardinality upper bounds imply that%
\begin{align}
I &\big(\vecX,T_1^k;\vecY,M_1^{k},R_1^{k}|S_0\big) = I \big(\vecX,T_1^k;\vecY,M_1^{k},R_1^{k}\big) \nonumber \\
& \quad + I \big(\vecX,T_1^k;S_0|\vecY,M_1^{k},R_1^{k} \big) - I \big(\vecX,T_1^k;S_0\big) \nonumber \\
&\leq I \big(\vecX,T_1^k;\vecY,M_1^{k},R_1^{k}\big) + I \big(\vecX,T_1^k;S_0|\vecY,M_1^{k},R_1^{k} \big) \nonumber \\
&\stackrel{\!\mathrm{(a)}\!}{\leq} I \big(\vecX,T_1^k;\vecY,M_1^{k},R_1^{k}\big) + N \nonumber \\
&= I \big(\vecX;\vecY,M_1^{k},R_1^{k}\big) + I \big(T_1^{k};\vecY,M_1^{k},R_1^{k} | \vecX\big) + N \nonumber \\
&\stackrel{\!\mathrm{(b)}\!}{\leq} I \big(\vecX;\vecY,M_1^{k},R_1^{k}\big) + k + N \nonumber \\
&= I \big(\vecX;\vecY\big) + I \big(\vecX;M_1^{k},R_1^{k}|\vecY\big) + k + N \nonumber \\
&\stackrel{\!\mathrm{(c)}\!}{\leq} I \big(\vecX;\vecY\big) + k \log_2 (N+2) + k + k + N,\label{eq:capmi1}
\end{align}
where $\mathrm{(a)}$ follows from $\log_2 |\mathcal{S}| = \log_2 \left(1+\sum_{j=0}^{N-1} 2^j \right) = N$, $\mathrm{(b)}$ holds because $T_i \in \{0,1\}$, and $\mathrm{(c)}$ follows from $0\leq M_i \leq N+1$ and $R_i \in \{0,1\}$.

Based on the decompositions described above, the data processing inequality implies that
\begin{align}
I &\big(\vecX,T_1^k;\vecY,M_1^{k},R_1^k|S_0\big) \nonumber \\
&\geq I \big(\vecX(2),\ldots,\vecX(k);\vecY(2),\ldots,\vecY(k)\big | S_0) \nonumber \\
&=  I \big(\vecX(2),\ldots,\vecX(k);\vecY(2),\ldots,\vecY(k)\big) \nonumber \\
&= \sum_{i=2}^{k} I\big(\vecX(i);\vecY(i)\big) = (k-1) I(\obX;\obY).\label{eq:capmi2}
\end{align}
Combining~\eqref{eq:IobXobYBoundsC}--\eqref{eq:capmi2}, we have
\begin{align*}
I& (\vecX;\vecY)
\geq -k\log_{2}(N+2) - 2k - N + (k-1)CN.
\end{align*}
To lower bound the information rate, we can normalize by the input length to see that
\begin{multlinecc*} 
\frac{1}{k(N\!+\!1)} I(\vecX;\vecY) \\
\geq \frac{(k-1)N}{k(N\!+\!1)} \left( C - \frac{1}{k\!-\!1} \right) - \frac{2+ \log_2 (N\!+\!2)}{N\!+\!1}.
\end{multlinecc*}
By choosing $k$ and $N$ large enough, the information rate can be made arbitrarily close to $C$.
Thus, we have constructed a sequence of regular hidden-Markov input distributions that achieve capacity on the binary deletion channel.

In closing, we note that this argument works without change for channels with independent insertions, deletions, and substitutions.

\bibliographystyle{IEEEtran} 
\bibliography{mybib.bib} 

\begin{thebibliography}{10}
\providecommand{\url}[1]{#1}
\csname url@samestyle\endcsname
\providecommand{\newblock}{\relax}
\providecommand{\bibinfo}[2]{#2}
\providecommand{\BIBentrySTDinterwordspacing}{\spaceskip=0pt\relax}
\providecommand{\BIBentryALTinterwordstretchfactor}{4}
\providecommand{\BIBentryALTinterwordspacing}{\spaceskip=\fontdimen2\font plus
\BIBentryALTinterwordstretchfactor\fontdimen3\font minus
  \fontdimen4\font\relax}
\providecommand{\BIBforeignlanguage}[2]{{%
\expandafter\ifx\csname l@#1\endcsname\relax
\typeout{** WARNING: IEEEtran.bst: No hyphenation pattern has been}%
\typeout{** loaded for the language `#1'. Using the pattern for}%
\typeout{** the default language instead.}%
\else
\language=\csname l@#1\endcsname
\fi
#2}}
\providecommand{\BIBdecl}{\relax}
\BIBdecl

\bibitem{Gallager_1961}
R.~Gallager, ``Sequential decoding for binary channels with noise and
  synchronization errors,'' 1961, lincoln Lab Group Report.

\bibitem{Dobrushin_1967}
R.~L. Dobrushin, ``Shannon's theorems for channels with synchronization
  errors,'' \emph{Problemy Peredachi Informatsii}, vol.~3, no.~4, pp. 18--36,
  1967.

\bibitem{Davey_2001}
M.~C. Davey and D.~J. MacKay, ``Reliable communication over channels with
  insertions, deletions, and substitutions,'' \emph{IEEE Transactions on
  Information Theory}, vol.~47, no.~2, pp. 687--698, 2001.

\bibitem{Mitzenmacher_2009}
M.~Mitzenmacher, ``A survey of results for deletion channels and related
  synchronization channels,'' \emph{Probability Surveys}, vol.~6, pp. 1--33,
  2009.

\bibitem{Fertonani_2010}
D.~Fertonani and T.~M. Duman, ``Novel bounds on the capacity of the binary
  deletion channel,'' \emph{IEEE Transactions on Information Theory}, vol.~56,
  no.~6, pp. 2753--2765, 2010.

\bibitem{Mercier_2012}
H.~Mercier, V.~Tarokh, and F.~Labeau, ``Bounds on the capacity of discrete
  memoryless channels corrupted by synchronization and substitution errors,''
  \emph{IEEE Transactions on Information Theory}, vol.~58, no.~7, pp.
  4306--4330, 2012.

\bibitem{Iyengar_2011}
A.~R. Iyengar, P.~H. Siegel, and J.~K. Wolf, ``Modeling and information rates
  for synchronization error channels,'' in \emph{Proc. IEEE Int. Sym. on
  Information Theory}.\hskip 1em plus 0.5em minus 0.4em\relax IEEE, 2011, pp.
  380--384.

\bibitem{Iyengar_2015}
------, ``On the capacity of channels with timing synchronization errors,''
  \emph{IEEE Trans. Inform. Theory}, vol.~62, no.~2, pp. 793--810, 2015.

\bibitem{Rahmati_2015}
M.~Rahmati and T.~M. Duman, ``Upper bounds on the capacity of deletion channels
  using channel fragmentation,'' \emph{IEEE Transactions on Information
  Theory}, vol.~61, no.~1, pp. 146--156, 2015.

\bibitem{Castiglione_2015}
J.~Castiglione and A.~Kavcic, ``Trellis based lower bounds on capacities of
  channels with synchronization errors,'' in \emph{Information Theory
  Workshop}.\hskip 1em plus 0.5em minus 0.4em\relax Jeju, South Korea: IEEE,
  2015, pp. 24--28.

\bibitem{Cheraghchi_2019}
M.~Cheraghchi, ``Capacity upper bounds for deletion-type channels,''
  \emph{Journal of the ACM (JACM)}, vol.~66, no.~2, p.~9, 2019.

\bibitem{Thomas_2017}
E.~K. Thomas, V.~Y.~F. Tan, A.~Vardy, and M.~Motani, ``Polar coding for the
  binary erasure channel with deletions,'' \emph{IEEE Communications Letters},
  vol.~21, no.~4, pp. 710--713, April 2017.

\bibitem{Tian_2017}
K.~Tian, A.~Fazeli, A.~Vardy, and R.~Liu, ``Polar codes for channels with
  deletions,'' in \emph{55th Annual Allerton Conference on Communication,
  Control, and Computing}, 2017, pp. 572--579.

\bibitem{Tian_2018}
K.~Tian, A.~Fazeli, and A.~Vardy, ``Polar coding for deletion channels: Theory
  and implementation,'' in \emph{IEEE International Symposium on Information
  Theory}, 2018, pp. 1869--1873.

\bibitem{Tian_it2018}
------, ``Polar coding for deletion channels,'' 2018, submitted to IEEE
  Transactions on Information Theory.

\bibitem{Tal_2015}
I.~Tal and A.~Vardy, ``List decoding of polar codes,'' \emph{IEEE Transactions
  on Information Theory}, vol.~61, no.~5, pp. 2213--2226, May 2015.

\bibitem{Wang_2014}
R.~Wang, R.~Liu, and Y.~Hou, ``Joint successive cancellation decoding of polar
  codes over intersymbol interference channels,'' 2014, arXiv preprint
  arXiv:1404.3001.

\bibitem{Wang_2015}
R.~Wang, J.~Honda, H.~Yamamoto, R.~Liu, and Y.~Hou, ``Construction of polar
  codes for channels with memory,'' in \emph{2015 IEEE Information Theory
  Workshop}, October 2015, pp. 187--191.

\bibitem{Wagner_1974}
R.~A. Wagner and M.~J. Fischer, ``The string-to-string correction problem,''
  \emph{Journal of the ACM (JACM)}, vol.~21, no.~1, pp. 168--173, 1974.

\bibitem{SasogluTal:18a}
E.~\c{S}a\c{s}o\u{g}lu and I.~Tal, ``Polar coding for processes with memory,''
  \emph{IEEE Trans. Inform. Theory}, vol.~65, no.~4, pp. 1994--2003, April
  2019.

\bibitem{Shuval_Tal_Memory_2017}
B.~Shuval and I.~Tal, ``Fast polarization for processes with memory,''
  \emph{IEEE Trans. Inform. Theory}, vol.~65, no.~4, pp. 2004--2020, April
  2019.

\bibitem{Li_2019}
Y.~Li and V.~Y.~F. Tan, ``On the capacity of channels with deletions and
  states,'' \emph{arXiv preprint arXiv:1911.04473}, 2019.

\bibitem{sasoglu_thesis}
E.~{\c{S}}a{\c{s}}o{\u{g}}lu, ``\BIBforeignlanguage{eng}{Polar {C}oding
  {T}heorems for {D}iscrete {S}ystems},'' Ph.D. dissertation, IC, Lausanne,
  2011.

\bibitem{Arikan_2009}
E.~Ar{\i}kan, ``Channel polarization: a method for constructing
  capacity-achieving codes for symmetric binary-input memoryless channels,''
  \emph{IEEE Trans. on Information Theory}, vol.~55, no.~7, pp. 3051--3073,
  July 2009.

\bibitem{Steele:97b}
J.~M. Steele, \emph{Probability Theory and Combinatorial Optimization}.\hskip
  1em plus 0.5em minus 0.4em\relax Philadelphia, PA: SIAM, 1997, vol.~69,
  {C}BMF-NSF Regional Conference Series in Applied Mathematics.

\bibitem{Durrett-2019}
R.~Durrett, \emph{Probability: Theory and Examples}.\hskip 1em plus 0.5em minus
  0.4em\relax Cambridge University Press, 2019, vol.~49.

\bibitem{sasoglu:12b}
E.~\c{S}a\c{s}o\u{g}lu, ``Polarization and polar codes,'' in \emph{Found. and
  Trends in Commun. and Inform. Theory}, vol.~8, no.~4, 2012, pp. 259--381.

\bibitem{Honda_Yamamoto_2013}
J.~Honda and H.~Yamamoto, ``Polar coding without alphabet extension for
  asymmetric models,'' \emph{IEEE Transactions on Information Theory}, vol.~59,
  no.~12, pp. 7829--7838, December 2013.

\bibitem{Arikan_Telatar_2009}
E.~Ar{\i}kan and E.~Telatar, ``On the rate of channel polarization,'' in
  \emph{Proc. IEEE Int. Sym. on Information Theory}, June 2009, pp. 1493--1495.

\bibitem{ShuvalTal:18a}
B.~Shuval and I.~Tal, ``Universal polarization for processes with memory,''
  2018, arXiv:1811.05727v1.

\bibitem{Hoeffding:63p}
W.~Hoeffding, ``Probability inequalities for sums of random variables,''
  \emph{Journal of the American Statistical Association}, vol.~53, no. 301, pp.
  13--30, March 1963.

\bibitem{MitzenmacherUpfal:17b}
M.~Mitzenmacher and E.~Upfal, \emph{Probability and Computing: Randomizition
  and Probabilistic Techniques in Algorithms and Data Analysis}, 2nd~ed.\hskip
  1em plus 0.5em minus 0.4em\relax Cambridge, UK: Cambridge University Press,
  2005.

\bibitem{Korada:09z}
S.~B. Korada, ``Polar codes for channel and source coding,'' Ph.D.
  dissertation, {E}cole {P}olytechnique {F}{\'{e}}d{\'{e}}rale de {L}ausanne,
  2009.

\bibitem{Kanoria_2013}
Y.~Kanoria and A.~Montanari, ``Optimal coding for the binary deletion channel
  with small deletion probability,'' \emph{IEEE Trans. Inform. Theory},
  vol.~59, no.~10, pp. 6192--6219, 2013.

\bibitem{Chen_2008}
J.~Chen and P.~H. Siegel, ``Markov processes asymptotically achieve the
  capacity of finite-state intersymbol interference channels,'' \emph{IEEE
  Transactions on Information Theory}, vol.~54, no.~3, pp. 1295--1303, 2008.

\end{thebibliography}

\end{document}